\newtheorem{theorem}{Theorem}[section]
\theoremstyle{plain}
\newtheorem{assumption}{Assumption}[section]
\newtheorem{definition}{Definition}[section]
\newtheorem{lemma}{Lemma}[section]
\numberwithin{equation}{section}
\newcommand{\cF}{\mathcal{F}}
\newcommand{\cH}{\mathcal{H}} 
\newcommand{\cV}{\mathcal{V}} 
\newcommand{\fV}{\mathfrak{V}} 
\newcommand{\PP}{\mathbb{P}}
\newcommand{\QQ}{\mathbb{Q}}
\newcommand{\RR}{\mathbb{R}}
\newcommand{\CC}{\mathbb{C}}
\newcommand{\NN}{\mathbb{N}}
\newcommand{\bZ}{\mathbf{Z}}
\newcommand{\bW}{\mathbf{W}}
\newcommand{\bV}{\mathbf{V}}
\newcommand{\ba}{\mathbf{a}}
\newcommand{\bb}{\mathbf{b}}
\newcommand{\bsH}{\boldsymbol{H}}
\newcommand{\bsG}{\boldsymbol{G}}
\newcommand{\tV}{\mathtt{V}} 
\newcommand{\de}{\mathrm{d}}
\renewcommand{\Re}{\mathrm{Re}}
\renewcommand{\Im}{\mathrm{Im}}
\newcommand{\imag}{\mathtt{i}}
\newcommand{\Excond}[3]{\mathbb{E}^{#1}\left[\left.#2\right|#3\right]}  
\begin{document}

\title{{\huge  A Penny Saved is a Penny Earned:\\  Less Expensive Zero Coupon Bonds}}

\setcounter{page}{1}

\author{
\textrm{Alessandro Gnoatto}\thanks{Department of Economics, University of Verona. Via Cantarane 24, 37129 Verona, Italy. Email alessandro.gnoatto@univr.it}
\and\textrm{Martino Grasselli}\thanks{Dipartimento di Matematica, Universit\`a degli Studi di Padova (Italy) and L\'{e}onard de Vinci P\^{o}le Universitaire, Research Center, Finance Group, 92 916 Paris La D\'{e}fense, France. Email: grassell@math.unipd.it.}
\and
\textrm{Eckhard Platen}\thanks{University of Technology, Sydney, Finance Discipline Group and School of Mathematical and Physical Sciences, PO Box 123, Broadway, NSW, 2007, Australia, and University of Cape Town, Department of Actuarial Science. Email eckhard.platen@uts.edu.au.}} 

\maketitle
\begin{abstract}
In this paper we show how to hedge a zero coupon bond with a smaller amount of initial capital than required by the classical risk neutral paradigm, whose (trivial) hedging strategy does not  suggest to invest in the risky assets. Long dated zero coupon bonds we derive, invest first primarily in risky securities and when approaching more and more the maturity date they increase also more and more the fraction invested in fixed income. The conventional wisdom of financial planners suggesting investor to invest in risky securities when they are young and mostly in fixed income when they approach retirement, is here made rigorous.  The paper provides  a strong warning for life insurers, pension fund  managers and long term investors to take the possibility of less expensive products seriously to avoid the adverse consequences of the low interest rate regimes that many developed economies face.
\end{abstract}

\textbf{Key words:} Forex, benchmark approach, benchmarked risk minimization, stochastic volatility, long term securities.

\bigskip

\textbf{JEL Classification:} C6, C63, G1, G12, G13

\section{Introduction}

This paper aims to draw the attention to a more general modeling world than available under the classical no-arbitrage paradigm in finance.  To explain the new approach and illustrate first important consequences, the less expensive pricing of long dated zero coupon bonds will be demonstrated in this paper.  Under the benchmark approach, see \cite{bookplaten10}, many payoffs can be less expensively produced than current theory and practice suggest. The intuitive verbal advice of financial planners to invest in risky securities when the investor is young and mostly in fixed income when he/she approaches retirement, is made rigorous. The long dated zero coupon bonds we derive, invest first primarily in risky securities and, when approaching more and more the maturity date, they increase also more and more the fraction invested in fixed income. These less expensive zero coupon bonds provide only a first example that is indicative for the changes that the new approach offers in a much wider modeling world than the classical one.\\
Historically, \cite{long90} was the first who observed that one can rewrite the risk neutral pricing formula into a pricing formula that takes its expectation under the real world probability measure and employs  the, so called, num\'eraire portfolio (NP) as num\'eraire.   The benchmark approach assumes only the existence of the NP and no longer the rather restrictive classical no-arbitrage assumptions, which are equivalent to the existence of an equivalent risk neutral probability measure. Under this much weaker assumption, one can still perform all  essential tasks of valuation and risk management.  The only condition imposed is that the NP, which is the in the long run pathwise best performing portfolio, remains finite in finite time. Obviously, when this assumption is violated for a model, then some economically meaningful arbitrage must exist causing  the candidate for the NP to explode. In this case  the respective model makes not much theoretical and practical sense.  Note that when a finite NP exists, various forms of classical arbitrage may be present in the market; see e.g. \cite{Loewenstein00} and \cite{Loewenstein07} for various examples in the literature on bubbles.\\

The current paper illustrates the divergence of the benchmark approach from the classical approach by focusing on the currency market, which is one of the most active markets. 
We present and calibrate a hybrid model describing the dynamics of a vector of foreign exchange (FX) rates and the associated interest rates. We extend and unify the FX multifactor stochastic volatility models of \cite{gnoatto11} and \cite{martinoplaten13} by means of the general transform formula presented in \cite{grasselli13}. The resulting general model that we develop allows for the simultaneous presence of multiple stochastic volatility factors both of square root (see \cite{Heston93}) and 3/2 type (see \cite{heston97} and \cite{platen97}). More explicitly, the square root of each CIR factor appears in both the numerator and the denominator of the diffusion terms. Based on  \citet{martinoplaten13}, we refer to this model as the 4/2 model. Our specification for the volatility process spans a large class of dynamics ranging from the 3/2 to the Heston model. This means that we can let market data dictate the relative importance of the two stochastic volatility effects that we consider. While the 4/2 model might appear as an involved choice, we will show in Subsection \ref{sec:unifying} that it naturally emerges e.g. in a simple Heston setting for a suitable choice of the risk premium.  Moreover, such CIR factors can be freely combined in order to drive stochastic interest rates. Therefore, the model is suitable for the valuation of long-dated FX products, for which interest rate risk becomes a relevant risk factor, see the discussions in  \cite{gg13}. \\
The framework we propose is general to the extent that for suitable parameter combinations our model may not admit the existence of an equivalent risk-neutral probability measure for some economies. In spite of this feature, the problem of pricing and hedging contingent claims can always be solved under the more general benchmark approach of \cite{bookplaten10}: the real world pricing formula (see \cite{bookplaten10}) and benchmarked risk minimization for hedging (see \cite{duplaten14}) will be the tools we employ to solve both problems.\\
Despite of the richness of our framework, it is possible to efficiently solve and implement the pricing of plain vanilla instruments via Fourier-based techniques, see \cite{article_Carr99} and \cite{lewis2001}. Semi analytical closed form solutions for products such as European FX options can be computed thanks to the availability of the exact formula for the joint Fourier transform of the model's state variables, see \cite{grasselli13}. \\

We test our model on real datasets of vanilla FX options by performing several calibration experiments. 
Our empirical results are twofold. On the one side, we confirm the empirical findings of \cite{martinoplaten13} on the violation of the risk neutral pricing paradigm. The appearence of such violations may change over time and across currencies. In fact, our multiple calibration experiment seems to suggest the presence of \textit{regime switches} in traded FX-option prices between the standard risk neutral and the real world pricing  approach. Such a feature calls for a modelling framework which is able to span both valuation principles, which is provided by the benchmark approach. On the other side, we quantify the impact of such violations in a simplifying yet clarifying situation, namely the pricing and hedging of zero coupon bonds in a deterministic interest rate setting. We find that the impact is significant, mostly for long dated bonds, where the difference between the price given by the real world pricing of the benchmark approach and the one provided by the putative (and empirically rejected) risk neutral paradigm, may be very large (about $30\%$ for maturities around 20 years). 
 This surprising result can explain  the effect that has been intuitively exploited in financial planning, where one uses the extra growth present in risky securities to accumulate over long time periods more wealth with little fluctuations at maturity than by investing over the entire period in fixed income, see e.g. \cite{Gourinchas02}. The possibility of  less expensive production costs for targeted long dated payoffs has major practical implications for annuities, life insurance contracts, pensions and many other long term contracts. \\

 The paper is structured as follows: In Section~\ref{sec:model} we introduce the general multi-currency modeling framework and recall some notions from the benchmark approach. Section~\ref{sec:4over2model} motivates and formally introduces the 4/2 model as a unifying framework for stochatic volatility models driven by the CIR process as the Heston-based model of \cite{gnoatto11} and the 3/2-based model of  \cite{martinoplaten13}. The 4/2 model extends the Heston model and allows for the possibility of a failure of the risk neutral paradigm. The analytical tractability of the 4/2 model is demonstrated in Section~\ref{sec:valuationOfDerivatives}, which constitutes a prerequisite for an efficient model calibration, presented in Section~\ref{sec:calibrationResults}. A practical consequence of the calibration is analyzed in Section~\ref{sec:hedgingLongDatedSecurities}, where we perform hedging in an incomplete market setting without the existence of a risk neutral probability measure. In particular, in the absence of a risk neutral probability measure, we follow  the concept of benchmarked risk minimization of \cite{duplaten14}. We show how to hedge long term products significantly less expensively than the classic risk neutral paradigm permits.

\section{General Setup}\label{sec:model}

In this section we present the general modeling framework of  the benchmark approach for a foreign exchange (FX) market. Subsection \ref{sec:specification} provides a general setup driven by a multi-dimensional diffusion process. 

\subsection{Specification of the Currency Market}\label{sec:specification}


We use superscripts to reference different currencies, and  employ bold letters for vectors and subscripts for elements thereof. Unless specified by a suitable superscript, all expectations are considered with respect to  the real world probability measure $\PP$.
We model the currency market on a probability space $\left(\Omega,\cF_{\bar T},\PP\right)$, where $\bar T<\infty$ is a finite time horizon. On this space we introduce a filtration $(\cF_t)_{0\leq t\leq \bar T}$ to model the evolution of information, satisfying the usual assumptions. The above filtered probability space supports a standard $d$-dimensional $\PP$-Brownian motion $\bZ$=$\{\bZ (t)=(Z_1(t),\ldots , Z_d(t)),\ 0\leq t\leq \bar T\}$ for modeling the traded uncertainty. The constant $N$ denotes the number of currencies in the model, whereas $d$ is the number of traded risky factors we employ. \\

In each economy we postulate the existence of a money market account, i.e. the $i$-th money market account,when  denominated in units of the $i$-th currency, evolves according to the relation
\begin{align}
\de B^i(t) =B^i(t)r^i(t)\de t, \ B^i(0)=1, \ 0\leq t \leq \bar T;
\end{align}
with the $\RR$-valued, adapted $i$-th short rate process $r^i=\left\{r^i(t), \ 0 \leq t\leq \bar T \right\}$.
We denote by $S^{i,j}$$=\left\{S^{i,j}(t), \ 0 \leq t\leq \bar T \right\}$ the continuous exchange rate process between currency $i$ and $j$. Here $S^{i,j}(t)$ denotes the price of one unit of currency $i$ in units of currency $j$, meaning that, e.g. for $i=USD$ and $j=EUR$ and $S_t^{i,j}=0.92$ we have, in line with the standard FORDOM convention, that the price of one USD is $0.92$ EUR at time $t$.

Let us follow \cite{bookplaten10} and \cite{currencyplaten06} and introduce a family of primary security account processes via
\begin{align*}
B^{i,j}(t)=S^{i,j}(t)B^j(t), \ 0\leq t\leq \bar T
\end{align*}
for $i\neq j$. Obviously, for $i=j$ we have $B^{i,i}(t)=B^{i}(t)$. We take the perspective of a generic currency referenced with superscript $i$ and introduce the vector of money market accounts of the form $\boldsymbol{B}^i(t)=\left(B^{i,1}(t),...,B^{i,N}(t)\right)$, $i=1,...,N$. Given this vector of primary security accounts, an investor may trade on them. This is represented by introducing a family of predictable $\boldsymbol{B}^i$-integrable stochastic processes $\boldsymbol\delta=\left\{\boldsymbol\delta (t)=\left(\delta_1(t),...,\delta_N(t)\right), \ 0 \leq t\leq \bar T \right\}$ for $i=1,...,N$, called strategies. Each $\delta_j(t)\in \RR$ denotes the number of units that an agent holds in the $j$-th primary security account at time $t$. Let us introduce the process ${\mathtt{V}}^{i,\delta}=\left\{\mathtt{V}^{i,\delta}(t), \ 0 \leq t\leq \bar T \right\}$, which describes the value process in $i$-th currency denomination corresponding to the portfolio strategy  $\boldsymbol\delta$, i.e.
\begin{align}\label{eq:portfolio}
\mathtt{V}^{i,\delta}(t)=\sum_{j=1}^N\delta_j(t)B^{i,j}(t).
\end{align} 
The strategy $\delta$ is said to be self-financing if
\begin{align}\label{eq:portfolioSefFinancing}
\de \mathtt{V}^{i,\delta}(t)=\sum_{j=1}^N\delta_j(t)\de B^{i,j}(t).
\end{align} 
In line with \cite{bookplaten10}, we assume limited liability for all investors. For this purpose, we introduce $\cV^+$ as the set of all self-financing strategies forming strictly positive portfolios. 
For our purposes, we will be interested in a particular strategy $\delta^\star\in\cV^+$, which yields the  \textbf{growth optimal portfolio} (GOP), which can be shown to be equivalent to the num\'eraire portfolio (NP), and is defined as follows:
\begin{definition}\label{def:gop} A solution $\delta^{\star}$ of the maximization problem
\begin{align*}
\sup_{\delta\in \cV^+}\mathbb{E}\left[\log\left(\frac{\mathtt{V}^{i,\delta}(T)}{\mathtt{V}^{i,\delta}(0)}\right)\right],
\end{align*}
for all $i = 1,\ldots ,N$  and $0\leq T\leq {\bar T}$ is called a growth optimal portfolio strategy.
\end{definition}
It has been shown in \cite{bookplaten10} that the GOP value process is unique in an incomplete jump-diffusion market setting. We summarize the discussion above in the following assumptions.
\begin{assumption}\label{assumption:gop}
We assume the existence of the \textbf{growth optimal portfolio} (GOP), and denote by $D^i=\left\{D^i(t)\in (0,+\infty), \ 0 \leq t\leq \bar T \right\}, \ i=1,...,N$ the value of the GOP denominated in the $i$-th currency. The dynamics of the GOP are given by
\begin{align}
\begin{aligned}
\frac{\de D^i(t)}{D^i(t)}=  r^i(t)\de t+\langle \boldsymbol{\pi}^i(t),\boldsymbol{\pi}^i(t)\de t+\de\boldsymbol{Z}(t)\rangle, \ D^i(0)>0 
\label{eq:sdegop}
\end{aligned}
\end{align}
for $t\in [0, \bar T]$ and $i = 1,\ldots N$, where for $N,d\in \NN$, the $N$-dimensional family of predictable, $\RR^d$-valued stochastic processes $\boldsymbol{\pi}=\left\{\boldsymbol{\pi}^i(t)=(\pi^i_1(t),\ldots ,\pi^i_d(t)), \ 0 \leq t\leq \bar T \right\}$ represent the market prices of risk with respect to the $i$-th currency denomination. The processes $\boldsymbol{\pi}^i$ are assumed to be integrable with respect to the $d$-dimensional standard Brownian motion $\bZ$.
\end{assumption}

The GOP can be shown to be in many ways the \textit{best} performing portfolio. In particular, in the long run its value outperforms almost surely those of any other strictly positive portfolio. Here we assume that it remains finite in finite time in all currency denominations. If we were to consider a model where the GOP explodes in any of the currency  denominations, then the model would allow an obvious form of economically meaningful arbitrage, since one could generate in that currency denomination in finite time unbounded wealth from finite initial capital. Given the uniqueness of the GOP, all exchange rates $S^{i,j}(t)$ can be  uniquely determined as ratios of different denominations of the GOP in the respective currencies.
\begin{assumption}\label{AnsatzFX}
The family of exchange rate processes $S^{i,j}=\left\{ S^{i,j}(t), \ 0 \leq t\leq \bar T\right\}$ is determined by the ratios
\begin{align}
S^{i,j}(t)=\frac{D^i(t)}{D^j(t)}
\end{align}
for $0 \leq t \leq \bar T$ and $i,j=1,\ldots,N$.
\end{assumption}
Given Assumption \ref{AnsatzFX}, it is immediate to compute via a direct application of the It\^{o} formula the dynamics of all exchange rates and all primary security accounts in all currency denominations.
\begin{lemma}
The exchange rate $S^{i,j}(t)$ under the real world probability measure $\PP$ evolves according to the dynamics
\begin{align}
\begin{aligned}
\label{exchangerates}
\frac{\de S^{i,j}(t)}{S^{i,j}(t)}&= ( r^i(t)-r^j(t))\de t+\langle \boldsymbol{\pi}^i(t)-\boldsymbol{\pi}^j(t),\boldsymbol{\pi}^i(t)\de t+\de\boldsymbol{Z}(t)\rangle,\\
  S^{i,j}(0)&=s^{i,j}>0, 
  \end{aligned}
\end{align}
and the generic $j$-th primary security account $B^{i,j}$, in $i$-th currency denomination and under the real world probability measure $\PP$, evolves according to the dynamics
\begin{align}
\begin{aligned}
\label{eq:primarySecuritySDE}
\frac{\de B^{i,j}(t)}{B^{i,j}(t)}&= r^i(t)\de t+\langle \boldsymbol{\pi}^i(t)-\boldsymbol{\pi}^j(t),\boldsymbol{\pi}^i(t)\de t+\de\boldsymbol{Z}(t)\rangle, \\
  B^{i,j}(0)&=b^{i,j},
  \end{aligned}
\end{align}
for $\ i,j = 1,\ldots,N$ and $t\in[0,\bar T]$.
\end{lemma}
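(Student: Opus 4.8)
The plan is to obtain both dynamics by applying It\^o's formula to the defining ratios, since Assumption~\ref{assumption:gop} already furnishes the dynamics of the building block $D^i$ and Assumption~\ref{AnsatzFX} expresses every exchange rate as a quotient of two such blocks. First I would rewrite \eqref{eq:sdegop} in the standard semimartingale form
\begin{align*}
\frac{\de D^i(t)}{D^i(t)} = \left(r^i(t)+\langle\boldsymbol{\pi}^i(t),\boldsymbol{\pi}^i(t)\rangle\right)\de t + \langle\boldsymbol{\pi}^i(t),\de\boldsymbol{Z}(t)\rangle,
\end{align*}
so that the instantaneous drift of $D^i$ is $r^i+\langle\boldsymbol{\pi}^i,\boldsymbol{\pi}^i\rangle$ and its vector diffusion coefficient is $\boldsymbol{\pi}^i$. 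Because $\boldsymbol{Z}$ is a standard $d$-dimensional Brownian motion, the quadratic covariation satisfies $\de\langle D^i,D^j\rangle/(D^iD^j)=\langle\boldsymbol{\pi}^i,\boldsymbol{\pi}^j\rangle\de t$ and $\de\langle D^j\rangle/(D^j)^2=\langle\boldsymbol{\pi}^j,\boldsymbol{\pi}^j\rangle\de t$; these are precisely the cross terms the It\^o correction will pick up.

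For the exchange rate I would apply the It\^o quotient rule to $S^{i,j}=D^i/D^j$. Writing $\de(D^i/D^j)/(D^i/D^j) = \de D^i/D^i - \de D^j/D^j + \de\langle D^j\rangle/(D^j)^2 - \de\langle D^i,D^j\rangle/(D^iD^j)$ and substituting the drift and diffusion coefficients identified above, the drift becomes
\begin{align*}
r^i-r^j+\langle\boldsymbol{\pi}^i,\boldsymbol{\pi}^i\rangle-\langle\boldsymbol{\pi}^j,\boldsymbol{\pi}^j\rangle+\langle\boldsymbol{\pi}^j,\boldsymbol{\pi}^j\rangle-\langle\boldsymbol{\pi}^i,\boldsymbol{\pi}^j\rangle = r^i-r^j+\langle\boldsymbol{\pi}^i-\boldsymbol{\pi}^j,\boldsymbol{\pi}^i\rangle,
\end{align*}
while the diffusion part is $\langle\boldsymbol{\pi}^i-\boldsymbol{\pi}^j,\de\boldsymbol{Z}\rangle$; collecting the two pieces reproduces \eqref{exchangerates}. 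The only point that needs care here is the bookkeeping of the two correction terms: the convexity term $\langle\boldsymbol{\pi}^j,\boldsymbol{\pi}^j\rangle$ coming from the denominator must cancel the $-\langle\boldsymbol{\pi}^j,\boldsymbol{\pi}^j\rangle$ inherited from the drift of $D^j$, after which the cross term $-\langle\boldsymbol{\pi}^i,\boldsymbol{\pi}^j\rangle$ combines with $\langle\boldsymbol{\pi}^i,\boldsymbol{\pi}^i\rangle$ to give the compact form $\langle\boldsymbol{\pi}^i-\boldsymbol{\pi}^j,\boldsymbol{\pi}^i\rangle$.

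For the primary security account I would use $B^{i,j}=S^{i,j}B^j$ together with the fact that $B^j$ solves $\de B^j=B^j r^j\,\de t$ and hence is of finite variation with no martingale part. The It\^o product rule then gives $\de B^{i,j}/B^{i,j}=\de S^{i,j}/S^{i,j}+\de B^j/B^j$, since the quadratic covariation $\langle S^{i,j},B^j\rangle$ vanishes. Adding $r^j\,\de t$ to the exchange rate dynamics just derived cancels the $-r^j\,\de t$ in its drift and yields
\begin{align*}
\frac{\de B^{i,j}(t)}{B^{i,j}(t)} = r^i(t)\,\de t + \langle\boldsymbol{\pi}^i(t)-\boldsymbol{\pi}^j(t),\boldsymbol{\pi}^i(t)\de t+\de\boldsymbol{Z}(t)\rangle,
\end{align*}
which is \eqref{eq:primarySecuritySDE}. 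The initial conditions $S^{i,j}(0)=s^{i,j}$ and $B^{i,j}(0)=b^{i,j}$ follow directly from the definitions. I expect the main (and essentially only) obstacle to be the correct handling of the It\^o correction terms in the quotient rule described above; everything else is a routine substitution.
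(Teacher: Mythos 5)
Your proposal is correct and follows exactly the route the paper intends: the paper itself states that the lemma follows "via a direct application of the It\^{o} formula" to the ratio $S^{i,j}=D^i/D^j$ from Assumption~\ref{AnsatzFX} and to the product $B^{i,j}=S^{i,j}B^j$, which is precisely your quotient-rule and product-rule computation. Your bookkeeping of the correction terms (the cancellation of $\langle\boldsymbol{\pi}^j,\boldsymbol{\pi}^j\rangle$ and the recombination into $\langle\boldsymbol{\pi}^i-\boldsymbol{\pi}^j,\boldsymbol{\pi}^i\rangle$) is accurate, as is the observation that the covariation with the finite-variation process $B^j$ vanishes.
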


\subsection{The Benchmark Approach}\label{sec:benchmarkApproach}
In the present paper we evaluate contingent claims under the \textit{benchmark approach} of \cite{bookplaten10}. Under this approach, price processes denominated  in terms of the GOP  are called benchmarked price processes. More precisely, for $i,j=1,\ldots N$, let us introduce the benchmarked price process
\begin{align*}
\hat{B}^{j}=\left\{\hat{B}^{j}(t):=\frac{B^{i,j}(t)}{D^i(t)},\ 0\leq t\leq \bar{T}\right\}.
\end{align*}
 We call $\hat{B}^{j}$ the \textit{benchmarked $j$-th primary security account}. Note that $\hat{B}^{j}$ does not depend on the index $i$ of the currency denomination we started from. Given \eqref{eq:primarySecuritySDE} and \eqref{eq:sdegop}, upon an application of the It\^{o} formula, it is immediate to conclude that all benchmarked price processes $\hat{B}^{j}$ form $\PP$-local martingales. Even more, they are non-negative $\PP$-local martingales. Hence, due to Fatou's lemma, they are also $\PP$-supermartingales. Analogously, we also have that benchmarked non-negative portfolio values  $\hat{\tV}^{\delta}(t):={\tV}^{i,\delta}(t)/D^i(t)$ form $\PP$-supermartingales. 
Besides the exclusion of forms of economically meaningful arbitrage, which  are equivalent to the explosion of the GOP,  forms of classical arbitrage that are excluded under classical no-arbitrage assumptions may exist in our model, see e.g.  \cite{Loewenstein00}.

Let us now introduce for the $i$-th currency denomination the Radon-Nikodym derivative process, denoted by $\Lambda^i=\left\{\Lambda^i(t), \ 0\leq t \leq \bar{T}\right\}$, by setting 
\begin{align}
\Lambda^i(t)=\frac{\hat{B}^{i}(t)}{\hat{B}^{i}(0)},\quad i=1,\ldots, N\label{radoni}.
\end{align}
This is the risk neutral density for the putative risk-neutral measure $\mathbb{Q}^i$ of the $i$-th currency denomination. It arises e.g. when we consider replicable claims  and assume the existence of an equivalent risk neutral probability measure $\mathbb{Q}^i$. As each $\Lambda^i$ equals  the corresponding benchmarked savings account $\hat{B}^{i}$ (up to a constant factor), it is clear that $\Lambda^i$ is a $\PP$-local martingale, for $i=1,\ldots, N$. The classical assumption in the foreign exchange literature that there exists an equivalent risk-neutral probability measure for each currency denomination, corresponds to the requirement that each process $\Lambda^i$ is a true martingale for $i=1,\ldots, N$. Such a requirement is rather strong and may be empirically rejected, see e.g. \cite{currencyplaten06}, the findings in \cite{martinoplaten13} and the necessary and sufficient conditions of \cite{ruf15}. Hence, in the present paper, we shall allow each $\Lambda^i$ to be either a true martingale or a strict local martingale. To work in  such a generalized setting requires a more general pricing concept than the one provided under the classical risk neutral paradigm. In the following, we will employ the notion of \textit{real world pricing}: a price process $\fV^i=\left\{\fV^i(t), 0\leq t \leq \bar{T}\right\}$, here denominated in $i$-th currency, is said to be \textit{fair} if, when expressed in units of the GOP $D^i$, forms a $\PP$-martingale, this means its benchmarked value forms a true $\PP$-martingale, see Definition 9.1.2 in \cite{bookplaten10}. For a fixed maturity $T\in [0, \bar{T}]$, we let $\cH(T)=\fV(T)$ be an $\cF_T$-measurable non-negative benchmarked contingent claim, such that when expressed in units of the $i$-th currency denomination as  $\cH^i(T)=D^i(T)\cH(T)$ we have
\begin{align*}
\mathbb{E}\left[\left.\cH(T)\right|\cF_t\right]=\mathbb{E}\left[\left.\frac{\cH^i(T)}{D^i(T)}\right|\cF_t\right]<\infty
\end{align*}
for all $0\leq t\leq T\leq \bar{T}$, $i=1,...,N$. The benchmarked fair price ${\hat V}(t)=V^i(t)/D^i(t)$ of this contingent claim is the minimal possible price and given by the following conditional expectation under the real-world probability measure $\PP$:
\begin{align}
\label{eq:realWorldPricingFormula}
{\hat \fV(t)}=\mathbb{E}\left[\left.{\hat \cH(T)}\right|\cF_t\right],
\end{align}
which is known in the literature as \textit{real-world pricing formula}, see Corollary 9.1.3 in \cite{bookplaten10}. Note that benchmarked risk minimisation, described in  \cite{duplaten14}, gives \eqref{eq:realWorldPricingFormula} generally. In case $\Lambda^i$  is a true martingale we obtain, by changing in \eqref{eq:realWorldPricingFormula} from the real world probability measure $\PP$ to the equivalent risk neutral probability measure $\mathbb{Q}^i$, the risk neutral pricing formula 
\begin{align}
\label{eq:realWorldEqualsRiskNeutral}
\begin{aligned}
\fV^i(t)&=\mathbb{E}\left[\left.\frac{B^i(t)}{B^i(T)}\frac{B^i(T)}{B^i(t)}\frac{D^i(t)}{D^i(T)}\cH^i(T)\right|\cF_t\right]\\
&=\mathbb{E}\left[\left.\frac{\Lambda^i(T)}{\Lambda^i(t)}\frac{B^i(t)}{B^i(T)}\cH^i(T)\right|\cF_t\right]=\Excond{\QQ^i}{\frac{B^i(t)}{B^i(T)}\cH^i(T)}{\cF_t}.
\end{aligned}
\end{align}
This shows that in this case the real-world pricing formula generalises  the classical risk-neutral valuation formula and the Radon-Nikodym derivative for the respective risk neutral probability measure is given by \eqref{radoni}. In general, due to the supermartingale property of benchmarked price processes in the case when $\Lambda^i$ is a strict supermartingale, a formally obtained risk neutral price is greater than or equal to the real world price, see  \cite{duplaten14}.\\

Later on we consider the pricing of zero coupon bonds. According to \eqref{eq:realWorldPricingFormula}, the minimal possible price $P^i(t,T)$ at time $t$ of a zero coupon bond, denominated in the $i$-th currency and paying at maturity $T$ one unit of the $i$-th currency, is given by the formula 
\begin{align}
\label{zerocouponbond}
P^i(t,T)=D^i(t)\mathbb{E}\left[\left.\frac{1}{D^i(T)}\right|\cF_t\right].
\end{align}

All benchmarked non-negative price processes are supermartingales. Therefore, as already mentioned, the fair price process of a contingent claim is the minimal possible price process. There may exist more expensive price processes that deliver the same payoff. As we will show later on, over long periods of time a formally obtained  risk neutral price can be significantly more expensive than the fair one. This paper focuses on the emerging possibility to potentially produce long dated zero coupon bonds less expensively than suggested under the classical paradigm. This permits, e.g. the less expensive production of contracts involving long dated zero coupon bonds as building blocks as in the case for pensions, annuities and life insurances.

\section{The 4/2 Model}\label{sec:4over2model}

To demonstrate the fact that in reality there may exist hedgeable long dated zero coupon bond price processes that are significantly less expensive than respective risk neutral price processes, we need some model that could capture this phenomenon when it is present in the market.
In this section we provide such model, called the 4/2 model. Below we describe the 4/2 model that unifies naturally several well-known models. In Subsection \ref{subsecstrictlm} we state the  precise conditions under which the crucial martingale property of the benchmarked savings account fails for the 4/2 model.

\subsection{The 4/2 Model as a Unifying Framework}\label{sec:unifying}

In this subsection, we  provide a  specification of the volatility dynamics of the exchange rates. In particular, we consider, for simplicity, two currencies with $D^1(t)$ denoting the GOP in domestic currency and $D^2(t)$ denoting the GOP in foreign currency. For example,  $S^{1,2}(t)=D^1(t)/D^2(t)$ can follow a stochastic volatility model of Heston type (see \cite{Heston93}), where
\begin{align}\label{eq:basicHestonDynamics}
\begin{aligned}
\frac{\de S^{1,2}(t)}{S^{1,2}(t)} &= \left(r^1(t) - r^2(t)\right) \de t + \sqrt{V(t)}\left(\de Z(t) + \lambda(t)\de t\right),\\
S^{1,2}(0) &= s^{1,2}>0,\\
\de V(t)=& \kappa ( \theta - V(t))\de t + \sigma V(t)^{1/2} \left(\rho \de Z(t)+\sqrt{1-\rho^2}dZ^{\bot}(t)\right),\\
V(0) &= v>0.
\end{aligned}
\end{align}
Here $Z^{\bot}=\left\{Z^{\bot}(t), \ 0\leq t \leq \bar T\right\}$ is a $\PP$-Brownian motion independent of $Z$, $\kappa>0,\theta>0,\rho\in [-1,1]$ with the predictable processes $r^1,r^2$ and $\lambda$. 

In the remainder of the paper, we will repeatedly employ the following terminology: \textit{Heston (type) model, 3/2 (type) model, 4/2 (type) model}. Let us now clarify the respective models. In the following, we consider $a,b\in\RR$, and let $D^i=\left\{D^i(t),\ 0\leq t\leq \bar T\right\}$ denote a generic place-holder for a GOP process satisfying a scalar diffusive stochastic differential equation (SDE). Moreover, let $V=\left\{V(t),\ 0\leq t\leq \bar T\right\}$ be a square root process as given in \eqref{eq:basicHestonDynamics}.  A model is said to be of \textit{Heston type} (resp. of \textit{3/2 type}, resp. of \textit{4/2 type})  if the diffusion coefficient in the dynamics of the GOP $D^i$ is proportial to $a\sqrt{V(t)}$ (resp.  $b/\sqrt{V(t)}$, resp.  $(a\sqrt{V(t)} + b/\sqrt{V(t)})$).\\

The question we would like to address is the following: Given a specification of the market price of risk process $\lambda$ for the domestic currency denomination of securities, \textbf{what are the associated dynamics of the domestic and foreign  specifications of the GOP?}

\begin{lemma}\label{lemmacasilambda}
Consider a two-currency model, where the exchange rate model for  $S^{1,2}$ is of Heston type \eqref{eq:basicHestonDynamics}. The following statements hold true:
\begin{enumerate}
\item \label{lemma1:point1}If $\lambda(t)=a\sqrt{V(t)}$, $a\in\RR$, then the GOP denominations $D^1$ and $D^2$ follow both Heston-type models.
\item \label{lemma1:point3}If $\lambda(t) = \frac{b}{\sqrt{V(t)}}$, $b\in\RR$, then the GOP denomination $D^1$ follows a $3/2$ model, whereas $D^2$ follows a $4/2$ model.
\item \label{lemma1:point2}If $\lambda(t) = a\sqrt{V(t)}+ \frac{b}{\sqrt{V(t)}}$, $a,b\in\RR$, then the GOP denominations $D^1$ and $D^2$ follow both $4/2$ type models.
\end{enumerate} 
\end{lemma}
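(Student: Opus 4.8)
The plan is to read off the two market price of risk vectors $\boldsymbol{\pi}^1$ and $\boldsymbol{\pi}^2$ from the exchange-rate specification and then substitute them into the GOP dynamics \eqref{eq:sdegop}, because the classification into Heston/3/2/4/2 type depends only on the functional form in $V$ of the diffusion coefficient of $D^i$, which by \eqref{eq:sdegop} is exactly $\boldsymbol{\pi}^i$. First I would fix coordinates on the driving noise by writing $\boldsymbol{Z}=(Z,Z^{\bot})$, so that $Z$ is the Brownian motion appearing in \eqref{eq:basicHestonDynamics} and $Z^{\bot}$ is the orthogonal factor entering only the volatility dynamics. Matching the diffusion part of the general exchange-rate dynamics \eqref{exchangerates} for $S^{1,2}$ with the Heston specification \eqref{eq:basicHestonDynamics} then forces $\boldsymbol{\pi}^1-\boldsymbol{\pi}^2=(\sqrt{V(t)},0)$: the $Z$-component of the difference must equal $\sqrt{V(t)}$, while the $Z^{\bot}$-component must vanish since no $\de Z^{\bot}$ term appears in the exchange rate.

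Next I would pin down the individual vectors. Comparing the drift terms of \eqref{exchangerates} and \eqref{eq:basicHestonDynamics}, namely $\langle\boldsymbol{\pi}^1-\boldsymbol{\pi}^2,\boldsymbol{\pi}^1\rangle=\sqrt{V(t)}\,\lambda(t)$, and using $\boldsymbol{\pi}^1-\boldsymbol{\pi}^2=(\sqrt{V(t)},0)$, gives the $Z$-component $\pi^1_1(t)=\lambda(t)$. Since the idiosyncratic volatility noise $Z^{\bot}$ carries no risk premium in the domestic denomination (equivalently, $\lambda$ is the whole domestic market price of risk), we set $\pi^1_2(t)=0$, whence $\boldsymbol{\pi}^1(t)=(\lambda(t),0)$ and $\boldsymbol{\pi}^2(t)=(\lambda(t)-\sqrt{V(t)},0)$. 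As both GOPs then load only on $Z$, substituting into \eqref{eq:sdegop} shows that the (scalar, $Z$-directed) diffusion coefficient of $D^1$ is $\lambda(t)$ and that of $D^2$ is $\lambda(t)-\sqrt{V(t)}$.

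The three claims follow by inserting each specification of $\lambda$ and reading off the functional form. For $\lambda=a\sqrt{V}$ the coefficients are $a\sqrt{V}$ and $(a-1)\sqrt{V}$, both proportional to $\sqrt{V}$ (Heston type). For $\lambda=b/\sqrt{V}$ they are $b/\sqrt{V}$ (3/2 type) and $-\sqrt{V}+b/\sqrt{V}$, of the form $a'\sqrt{V}+b'/\sqrt{V}$ (4/2 type). For $\lambda=a\sqrt{V}+b/\sqrt{V}$ they are $a\sqrt{V}+b/\sqrt{V}$ and $(a-1)\sqrt{V}+b/\sqrt{V}$, both 4/2 type. I expect the only genuinely delicate step to be the second one: justifying that the domestic market price of risk is exhausted by $\lambda$, i.e. that $\pi^1_2=0$. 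The exchange-rate matching alone determines only the difference $\boldsymbol{\pi}^1-\boldsymbol{\pi}^2$ together with the single component $\pi^1_1$; the vanishing of the $Z^{\bot}$-premium is precisely what makes the diffusion coefficients clean scalar combinations of $\sqrt{V}$ and $1/\sqrt{V}$ and is therefore essential for the stated classification. Everything else is purely algebraic, since the dynamics \eqref{exchangerates} we match against have already been obtained via the Itô formula in the Lemma preceding them, so no further stochastic calculus is needed beyond identifying coefficients and substituting into \eqref{eq:sdegop}.
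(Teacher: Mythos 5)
Your proposal is correct and follows essentially the same route as the paper's proof: identify the market prices of risk as $\pi^1(t)=\lambda(t)$ and $\pi^2(t)=\lambda(t)-\sqrt{V(t)}$ by comparing \eqref{eq:basicHestonDynamics} with the general exchange-rate dynamics, substitute into the GOP SDE \eqref{eq:sdegop}, and read off the Heston/$3/2$/$4/2$ classification case by case. The paper simply asserts this identification in one line (treating the market prices of risk as scalar, i.e.\ with vanishing $Z^{\bot}$-component, exactly as implied by the premise that $\lambda$ is the entire domestic market price of risk), so your explicit matching of diffusion and drift coefficients and your remark on justifying $\pi^1_2=0$ are a careful elaboration of the same argument rather than a different one.
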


The proof for this result is given in Appendix \ref{Appendixlemmacasilambda}.

Note that if we had started in \eqref{eq:basicHestonDynamics} with the volatility $1/\sqrt{V(t)}$, then for $\lambda(t) = \frac{b}{\sqrt{V(t)}}$ we would have always fallen  into the world of 4/2 type models.

Lemma \ref{lemmacasilambda} highlights an interesting interplay between several well-known  financial models. It shows that the $4/2$ model arises naturally from a standard Heston model when the market price of risk belongs to the \textit{essentially affine} class (see \cite{duf02}). Furthermore, it demonstrates that the $4/2$ model provides a general framework that nests other popular model choices. 

Different specifications of the market price of risk do not only impact on the shape of the GOP dynamics. In fact, depending on the  calibrated values of the model parameters, we may incur situations where classical  risk neutral pricing is no longer possible because an equivalent risk neutral probability measure does not exist. To see this, we observe that from a direct inspection of the dynamics in 
\eqref{eq:basicHestonDynamics} in the Heston model setting, it is tempting to define the following two continuous processes
\begin{align}
\label{eq:candidateBMs}
\begin{aligned}
Z^{\QQ^{1}}(t)&:=Z(t)+\int_0^t\lambda(s)\de s\\
Z^{\QQ^{2}}(t)&:=Z(t)+\int_0^t\left(\lambda(s)-\sqrt{V(s)}\right)\de s,\\
\end{aligned}
\end{align} 
which, if the assumptions of the Girsanov theorem were in both cases fulfilled, would then be $\QQ^1$- (resp. $\QQ^2$-) Brownian motions. Let us assume that under the real world probability measure $\PP$ the Feller condition (see \cite{kar91}, Section 5.5) is fulfilled by the parameters of the volatility process $V$, i.e. we have $2\kappa\theta-\sigma^2\geq0$, so that the square root process $V$ remains strictly positive $\PP$-a.s. for all $ t\in[0,\bar T]$. The following lemma shows that, depending on the specification of $\lambda$, it is possible to obtain a variance process $V$  under the putative risk neutral measure that may not satisfy the Feller condition, implying that the putative risk neutral measure may fail to be equivalent to the real world probability measure.

\begin{lemma}
Consider a two-currency model, where the dynamics of the exchange rate $S^{1,2}$ is of the Heston type \eqref{eq:basicHestonDynamics}, such that the variance process $V$ fulfills the Feller condition, i.e. $2\kappa\theta-\sigma^2\geq0$. Let $Z^{\QQ^{1}},Z^{\QQ^{2}}$, as in \eqref{eq:candidateBMs}, be the candidate Brownian motions under the putative risk neutral measures $\QQ^1$ and $\QQ^2$, respectively. The following holds:
\begin{enumerate}
\item For the putative risk neutral measure $\QQ^1$ we get:
\begin{enumerate}
\item \label{lemma2:point11} If $\lambda(t)=a\sqrt{V(t)}$, $a\in\RR$, then the drift of the variance process $V$ under $\QQ^1$ is 
\begin{align*}
\kappa\left(\theta-V(t)\right)-\sigma\rho aV(t)
\end{align*}
and the Feller condition is always satisfied under $\QQ^1$, which is then a true equivalent martingale measure.
\item \label{lemma2:point12} If $\lambda(t) = a\sqrt{V(t)}+ \frac{b}{\sqrt{V(t)}}$, $a,b\in\RR$, then the drift of the variance process $V$ under $\QQ^1$ equals 
\begin{align*}
\kappa\left(\theta-V(t)\right)-\sigma\sqrt{V(t)}\rho\left(a\sqrt{V(t)}+ \frac{b}{\sqrt{V(t)}}\right)
\end{align*}
and the Feller condition may be violated, implying that $\QQ^1$ may not be  equivalent to $\PP$.
\item \label{lemma2:point13} If $\lambda(t) = \frac{b}{\sqrt{V(t)}}$, $b\in\RR$, then the drift of the variance process $V$ under $\QQ^1$ is 
\begin{align*}
\kappa\left(\theta-V(t)\right)-\sigma\rho b
\end{align*}
and the Feller condition may be violated, implying that $\QQ^1$ may not be  equivalent to $\PP$.
\end{enumerate}
\item For the putative risk neutral measure $\QQ^2$ we have
\begin{enumerate}
\item \label{lemma2:point21} If $\lambda(t)=a\sqrt{V(t)}$, $a\in\RR$,  then the drift of the variance process $V$ under $\QQ^2$ equals 
\begin{align*}
\kappa\left(\theta-V(t)\right)-\sigma\rho (a-1)V(t)
\end{align*}
and the Feller condition is always satisfied under $\QQ^2$, which is then a true equivalent martingale measure.
\item \label{lemma2:point22} If $\lambda(t) = a\sqrt{V(t)}+ \frac{b}{\sqrt{V(t)}}$, $a,b\in\RR$,  then the drift of the variance process $V$ under $\QQ^2$ equals 
\begin{align*}
\kappa\left(\theta-V(t)\right)-\sigma\sqrt{V(t)}\rho\left((a-1)\sqrt{V(t)}+ \frac{b}{\sqrt{V(t)}}\right)
\end{align*}
and the Feller condition may be violated, implying that $\QQ^2$ would be in such case not equivalent to $\PP$.
\item \label{lemma2:point23} If $\lambda(t) = \frac{b}{\sqrt{V(t)}}$, $b\in\RR$,  then the drift of the variance process $V$ under $\QQ^2$ is 
\begin{align*}
\kappa\left(\theta-V(t)\right)-\sigma\rho b +\sigma\rho V(t)
\end{align*}
and the Feller condition may be violated, implying that $\QQ^2$ would be in such case not equivalent to $\PP$.
\end{enumerate}
\end{enumerate}
\end{lemma}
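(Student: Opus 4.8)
The plan is to compute, for each of the two putative measures and each of the three specifications of $\lambda$, the drift that the variance process $V$ acquires when the candidate Brownian motions of \eqref{eq:candidateBMs} are formally substituted into the $\PP$-dynamics of $V$ in \eqref{eq:basicHestonDynamics}, and then to read off precisely when the resulting square-root process violates the Feller condition. The key simplification is that a Girsanov-type change of measure leaves the diffusion coefficient $\sigma\sqrt{V(t)}$ untouched and only modifies the drift, so the entire argument reduces to tracking a single drift correction in each case.

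Concretely, I would use $\de Z(t)=\de Z^{\QQ^1}(t)-\lambda(t)\de t$ for $\QQ^1$ and $\de Z(t)=\de Z^{\QQ^2}(t)-(\lambda(t)-\sqrt{V(t)})\de t$ for $\QQ^2$, both read off from \eqref{eq:candidateBMs}. Inserting these into the $V$-dynamics, the drift of $V$ becomes $\kappa(\theta-V(t))-\sigma\rho\sqrt{V(t)}\lambda(t)$ under $\QQ^1$ and $\kappa(\theta-V(t))-\sigma\rho\sqrt{V(t)}\lambda(t)+\sigma\rho V(t)$ under $\QQ^2$. The six expressions claimed in the statement then follow by substituting the three forms of $\lambda$ and using the elementary identities $\sqrt{V}\cdot\sqrt{V}=V$ and $\sqrt{V}\cdot(1/\sqrt{V})=1$; this part is entirely routine.

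The decisive observation is how each transformed drift sits relative to the affine form $\tilde\kappa(\tilde\theta-V(t))=\tilde\kappa\tilde\theta-\tilde\kappa V(t)$. When $\lambda$ contains only the term $a\sqrt{V}$ (cases \ref{lemma2:point11} and \ref{lemma2:point21}), the correction is proportional to $V$, so only the mean-reversion speed is altered while the constant level term remains $\kappa\theta$; since the diffusion coefficient is still $\sigma\sqrt{V}$, the Feller inequality $2\tilde\kappa\tilde\theta\geq\sigma^2$ collapses to the assumed $2\kappa\theta\geq\sigma^2$ and hence always holds. In these cases $V$ stays a bona fide square-root process, and the true-martingale property of the associated density, and therefore the genuine equivalence of $\QQ^i$, follows from the standard results for the Heston model under an essentially affine market price of risk. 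By contrast, whenever $\lambda$ carries a $b/\sqrt{V}$ term (the remaining four cases), the product $\sqrt{V}\cdot(b/\sqrt{V})$ contributes a \emph{constant} drift term $-\sigma\rho b$, which lowers the effective level to $\kappa\theta-\sigma\rho b$. The Feller condition then reads $2(\kappa\theta-\sigma\rho b)\geq\sigma^2$, which can be violated for a suitable sign and magnitude of $b$.

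To close the argument I must justify that a violation of the transformed Feller condition genuinely obstructs equivalence with $\PP$, and this is the only delicate step, because the measure change here is merely formal: the Radon--Nikodym density may be a strict local martingale, so Girsanov's theorem cannot be invoked in the forward direction. I would therefore argue by contraposition through the accessibility of the boundary $0$, which is invariant under equivalent changes of measure. Under $\PP$ the assumed condition $2\kappa\theta\geq\sigma^2$ makes $0$ inaccessible, so $V(t)>0$ for all $t$ almost surely. Were $\QQ^i$ equivalent to $\PP$, the process $V$ would be a genuine square-root process under $\QQ^i$ with the computed parameters and $0$ would remain inaccessible; but a square-root process whose parameters violate Feller reaches $0$ with positive probability, a contradiction. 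Hence equivalence fails exactly in the cases where the transformed parameters violate Feller, which is the content of the statement.
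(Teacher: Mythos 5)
Your proposal is correct and follows essentially the route the paper intends: the paper omits this proof as ``straightforward,'' but the identical computation --- substituting the candidate Brownian motions of \eqref{eq:candidateBMs} into the variance dynamics, reading off the shifted constant term in the drift, and comparing Feller conditions --- is carried out for the general 4/2 model in Subsection \ref{subsecstrictlm}, where non-equivalence is likewise deduced from the differing behaviour of $V$ at the boundary $0$ under the two measures (cf.\ \eqref{eq:strictLocalMartingaleTest}). Your closing contraposition argument, via the invariance of the inaccessibility of $0$ under equivalent changes of measure, makes explicit the one step the paper leaves implicit and is the correct justification for it.
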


The proof of these statements is straightforward using our previous notation and relationships and, therefore, omitted.

\subsection{Formal Presentation of the 4/2 Model}\label{sec:modelPresentation}

 To provide in the case of more than two currencies  a concrete specification of the market prices of risk,  we proceed to introduce the $\RR^d$-valued nonnegative stochastic process, called the volatility factor process, that we denote by $\bV=\left\{\bV(t)= (V_1(t),...,V_d(t)), 0\leq t \leq \bar T\right\}$ . The $k$-th component $V_k$ of the vector process $\bV$ is assumed to solve the  SDE
\begin{align}
\begin{aligned}
\de V_k(t)=& \kappa_k ( \theta_k - V_k(t))\de t + \sigma_k V_k(t)^{1/2} \de W_k(t),\\
V_k(0) &= v_k>0,
\label{CIRs}
\end{aligned}
\end{align}
for $t\in[0, \bar T]$, where the parameters $\kappa_k>0,\theta_k>0,\sigma_k>0, $ are admissible in the sense of \cite{article_DFS}, $k=1,...,d$. In addition, to avoid zero volatility factors, we impose the following assumption.
\begin{assumption}
For every $k=1,\ldots , d,$ the parameters in \eqref{CIRs} satisfy the relation
\begin{align}
2\kappa_k\theta_k-\sigma^2_k\geq 0.
\end{align}
\end{assumption}
We also allow for non-zero correlation between assets and their volatilities via the following condition:
\begin{assumption}\label{assumptionCorrelation} The Brownian motions  $\bZ$ and $\bW$ have a covariation satisfying
\begin{align}
\frac{\de \langle W_k,Z_l\rangle(t)}{\de t}=\delta_{kl}\rho_k, \ k,l=1,..., d,
\end{align}
where $\delta_{kl}$ denotes the Dirac delta function for the indices $k$ and $l$.
\end{assumption}
We then proceed to provide a general specification for the family of market prices of risk.
\begin{assumption}\label{piandr} We assume that the  $i$-th market price of risk vector $\boldsymbol{\pi}^i(t)$ is a projection of the common volatility factor $\bV$, along a direction parametrized by a constant vector $\ba^i \in \RR^d$ and a projection of the inverted elements of $\bV$ along another direction parametrized by $\bb^i\in \RR^d$, according to the following relations
\begin{align}
\boldsymbol{\pi}^i(t)=Diag^{1/2}(\boldsymbol{V}(t))\boldsymbol{a}^i +Diag^{-1/2}(\boldsymbol{V}(t))\boldsymbol{b}^i ,\ \ i=1,...,N,
\end{align}
where $Diag^{1/2}(\boldsymbol{u})$ denotes the diagonal matrix whose diagonal entries are the respective square roots of the components of the vector $\boldsymbol{u}\in \mathbb R^d$.
The family of short-rate processes $r^i, i=1,..., N$ is assumed to be given in the form
\begin{align}
\label{eq:shortRateDynamics}
r^i(t)=h^i+\langle \boldsymbol{H}^i,\bV(t)  \rangle+\langle \boldsymbol{G}^i,\bV^{-1} (t) \rangle,
\end{align}
where $\bV^{-1} $ is a vector whose components are the inverses of those of $\bV$.
\end{assumption}
Under Assumption \ref{piandr}, we can express the dynamics of the GOP as
\begin{align*}
\frac{\de D^i(t)}{D^i(t)}=& \left( r^i(t)+(\boldsymbol{a}^i)^\top Diag(\boldsymbol{V}(t))\boldsymbol{a}^i+(\boldsymbol{b}^i)^\top Diag^{-1}(\boldsymbol{V}(t))\boldsymbol{b}^i
+2(\boldsymbol{a}^i)^\top \boldsymbol{b}^i\right)\de t\\
&+ (\boldsymbol{a}^i)^\top Diag^{1/2}(\boldsymbol{V}(t))d\boldsymbol{Z}(t)+(\boldsymbol{b}^i)^\top Diag^{-1/2}(\boldsymbol{V}(t))d\boldsymbol{Z}(t).
\end{align*}
Here  we suppress the explicit formulation of the dependence of $r^i$ on $\boldsymbol{V}$. Consequently, the dynamics of the exchange rate $S^{i,j}$ is given by the SDE
\begin{align}\label{exchangerates}
\begin{aligned}
\frac{\de S^{i,j}(t)}{S^{i,j}(t)}=& \Big((r^i(t)-r^j(t))+2(\boldsymbol{a}^i)^\top \boldsymbol{b}^i -(\boldsymbol{a}^i)^\top \boldsymbol{b}^j-(\boldsymbol{a}^j)^\top \boldsymbol{b}^i\Big.\\ 
&\Big.+(\boldsymbol{a}^i)^\top Diag(\boldsymbol{V}(t))(\boldsymbol{a}^i-\boldsymbol{a}^j)
+(\boldsymbol{b}^i)^\top Diag^{-1}(\boldsymbol{V}(t))(\boldsymbol{b}^i-\boldsymbol{b}^j)\Big)\de t\\ 
&+ ((\boldsymbol{a}^i-\boldsymbol{a}^j)^\top Diag^{1/2}(\boldsymbol{V}(t))+(\boldsymbol{b}^i-\boldsymbol{b}^j)^\top Diag^{-1/2}(\boldsymbol{V}(t)))\de\boldsymbol{Z}(t).
\end{aligned}
\end{align}
Notice that the dynamics of the exchange rates are fully functionally symmetric w.r.t. the construction of product/ratios thereof.
\subsection{Strict Local Martingality} \label{subsecstrictlm}

In Section \ref{sec:unifying}, we observed that, for a sufficiently general specification of the market price of risk for a given currency, we can have a situation  where an equivalent risk neutral probability  measure may fail to exist, due to the   behavior of the  variance process near zero under the putative risk neutral measure. 

In this subsection, we investigate the conditions under which the $i$-th benchmarked savings account, $\hat{B}^i(t)=\frac{B^i(t)}{D^i(t)}$, is a strict $\mathbb P$-local martingale, $i=1,...,N$. As observed in Section \ref{sec:benchmarkApproach}, $\hat{B}^i(t)$, after normalization to one at the initial time, corresponds to the Radon-Nikodym derivative for the putative risk neutral measure of the $i$-th currency denomination. Should $\hat{B}_i(t)$ be a strict $\mathbb P$-local martingale, we note that classical risk neutral pricing is not applicable. However,  real world pricing in  line with \eqref{eq:realWorldPricingFormula} is still applicable, see \cite{bookplaten10}, and provides the minimal possible price.

Given \eqref{eq:primarySecuritySDE} and \eqref{eq:sdegop}, the dynamics of $\hat{B}^i(t)$ are given by the SDE
\begin{align}
\begin{split}
\de \hat{B}^i (t) &= - \hat{B}^i(t)((\boldsymbol{a}^i)^\top (Diag(\boldsymbol{V}(t)))^{1/2}\de\boldsymbol{Z}(t)\\
&\quad+(\boldsymbol{b}^i)^\top (Diag(\boldsymbol{V}(t)))^{-1/2}\de\boldsymbol{Z}(t)).
\end{split}\label{putativeRadonNikodym}
\end{align}
Upon integration of the above SDE  we obtain
\begin{align*}
\mathbb{E} \left[ \hat{B}^i (t) \right]&
=\hat{B}^i_0 \prod^d_{k=1} \mathbb{E} \left[ \xi_k^i (t)\right], 
\end{align*}
where we define the exponential local martingale process $\xi_k^i = \left\{ \xi_k^i(t) \, , \, t \geq 0 \right\}$ via
\begin{align} \label{eqRNprocess}
\begin{split}
\xi_k^i (t)&:= \exp \left\{ -  \rho_k \int^t_0 (a^i_kV_k(s)^{1/2}+b^i_kV_k(s)^{-1/2}) \de W_k(s) \right.\\
&\quad\quad\quad\left.- \frac{1}{2}  \rho_k^2 \int^t_0
(a^i_kV_k(s)^{1/2}+b^i_kV_k(s)^{-1/2})^2 \de s \right\} \, .
\end{split}
\end{align}
The putative change of measure with respect to the $i$-th currency denomination  involves
\begin{align*}
\de \tilde{W}_k(t) = dW_k(t)+ \rho_k (a^i_kV_k(t)^{1/2}+b^i_kV_k(t)^{-1/2}) \de t,
\end{align*}
where under classical assumptions $\tilde{W}_k$ should be a Wiener process under the putative risk neutral measure $\mathbb{Q}^i$.
Under this measure the process $V_k$ would  then solve the SDE
\begin{align} \label{varprocess2}
\de V_k(t) &= \kappa_k  ( \theta_k - V_k(t)) \de t - \rho_k \sigma_k (a^i_kV_k(t)+b^i_k)\de t+ \sigma_k V_k(t)^{\frac{1}{2}} \de\tilde{W}_k(t)\nonumber \\
&= (\kappa_k  \theta_k- \rho_k \sigma_k b^i_k )\de t -\kappa_k (1+\frac{\rho_k \sigma_k a^i_k}{\kappa_k})V_k(t)\de t+ \sigma_k V_k(t)^{\frac{1}{2}} 
\de \tilde{W}_k(t). 
\end{align}
Under $\mathbb{P}$, the process $V_k$ does not reach 0 if the Feller condition is satisfied, i.e.
\begin{equation*}
2\kappa_k \theta_k \geq \sigma_k^2,
\end{equation*}
while under the putative risk neutral  measure  the process  $V_k$ would not reach 0 if the corresponding Feller condition would be satisfied, that is 
\begin{equation*}
2\kappa_k \theta_k \geq\sigma_k^2+2\rho_k \sigma_k b^i_k.
\end{equation*}
Therefore, the process $V_k$ would have a different behavior at 0 under the two measures, provided that
\begin{equation}
\sigma_k^2\leq 2\kappa_k \theta_k <\sigma_k^2+2\rho_k \sigma_k b^i_k.
\label{eq:strictLocalMartingaleTest}
\end{equation}

In this case the putative risk neutral measure would not be an equivalent probability measure and classical risk neutral pricing would not be well-founded.\\

In order to get an intuition of what is the typical path behavior  when dealing with true and strict local martingales, we simulate some paths of the Radon-Nikodym derivative for the putative risk neutral measure of the
$i$-th currency denomination $\hat{B}^i(t)=\frac{B^i(t)}{D^i(t)}$ according to  the corresponding SDE \eqref{putativeRadonNikodym}, together with the respective quadratic variation processes,  for  time horizon $t=10$ years. In this illustration we consider a one factor specification  of the 4/2 model (i.e. $d=1$) and fix the parameters as follows: $\kappa = 0.49523; \theta = 0.53561;\sigma = 0.67128; V(0) = 1.4338;\rho = 
-0.89728; a = 0.047360.$ We let the parameter $b$ range in the interval $[-0.4,0.4]$ in order to generate situations in which the process $\hat{B}^i$ is a true martingale ($b$ positive) or a strict local martingale ($b$ negative).
We see in Figure \ref{strictlocalmartingale10} that the quadratic variation of the strict local martingale process almost explodes from time to time and increases through these upward jumps visually  much faster than in the case corresponding to the true martingale process, in line with the well-known unbounded expected  quadratic variation process for square integrable strict local martingales, see e.g. \cite{bookplaten10}. \\


To conclude this section, let us observe that we can compute the prices of zero coupon bonds for all currency denominations, meaning that it is $a$ $priori$ possible to devise a model for long-dated FX products, in the spirit of \cite{gg13}, where a joint calibration to FX surfaces and yield curves is performed. Depending on the parameter values, our general framework may be interpreted both from the point of view of real world pricing  and classical risk-neutral valuation, respectively:
\begin{itemize}
\item Should market data imply the existence of a risk-neutral probability measure for the $i$-th currency denomination, then it would be possible to equivalently employ the $i$-th money market account as num\'eraire.
\item In the other case, i.e. when risk neutral pricing is not possible for the $i$-th currency denomination due to the strict local martingale property of the $i$-th benchmarked money-market account, then discounting should be  performed via the  GOP. \end{itemize}

\section{Valuation of Derivatives}\label{sec:valuationOfDerivatives}

In the present section we solve the valuation problem for various contingent claims. The general valuation tool will be given by the real-world pricing formula \eqref{eq:realWorldPricingFormula}. Subsection \ref{sec:fxOptions} concentrates on plain vanilla European FX options, for which a semi-closed form valuation is available by means of Fourier techniques. These require the knowledge of the characteristic function of the log-underlying, given below in Theorem \ref{TheoremTransform}, which provides as  a by-product a closed form valuation formula for benchmarked zero-coupon bonds.

\subsection{FX options}\label{sec:fxOptions}
We first provide the calculation of the discounted conditional Fourier/Laplace transform of $x^{i,j}(t):=\ln (S^{i,j}(t))$, which will be useful for option pricing purposes. Let us consider a European call option $C(S^{i,j}(t),K^{i,j},\tau)$ at time $t$, $ i,j=1,...,N,i\not=j,$ on a generic exchange rate process  $S^{i,j}$ with
strike $K^{i,j}$, maturity $T=t+\tau$   and face value equal to one unit of the foreign currency. We denote via $Y^i(t)=\log (D^i(t))$ the logarithm of the  GOP in $i$-th currency denomination. Hence the log-exchange rate  may be written as $x^{i,j}(t)=\log (S^{i,j}(t))=Y^i(t)-Y^j(t)$. Let us introduce the following conditional expectation
\begin{align}
\begin{split}
    \phi^{i,j}_{t,T}( z) &= D^{i}(t) \mathbb{E}\left[\left.\frac{1}{D^{i}(T)}e^{\mathtt{i} z x^{i,j}(T)}\right|\cF_t  \right]\\
    &=e^{Y^i(t)}\mathbb{E}\left[\left.e^{-Y^i(T)+\mathtt{i} z(Y^i(T)-Y^j(T))}\right|\cF_t  \right]
    \end{split}
\end{align}
for $\mathtt{i}=\sqrt{-1}$. For $ z=u\in\mathbb{R}$ we will use the terminology of a \textit{discounted characteristic function}, whereas for $ z\in\mathbb{C}$ when the expectation exists, the function $ \phi^{i,j}_{t,T}$ will be called a \textit{generalized discounted characteristic function}. If we denote by $\Psi_{t,T}(z)$ the joint conditional (generalized) characteristic function of the vector of GOP denominations $Y(T)=(Y^1(T),...,Y^N(T))$, that is 
\begin{align}
\Psi_{t,T}(\zeta):=\Excond{}{e^{\imag\langle\zeta,Y(T)\rangle}}{\cF_t}, \ \ \zeta\in\mathbb{C}^N,
\label{jointPsi}\end{align}
then we have
\begin{align}
 \phi^{i,j}_{t,T}( z)=D^i(t)\Psi_{t,T}(\zeta),\label{cfrelations}
\end{align}
for $\zeta$ being a vector with $\zeta_i=z+\mathtt{i}$, $\zeta_j=-z$ and all other entries being equal to zero.
Now, from the real-world pricing formula \eqref{eq:realWorldPricingFormula}, the time $t$  price of a call option can be written as the following expected value:
\begin{equation*}
C(S^{i,j}(t),K^{i,j},\tau)=D^i(t)\mathbb{E}\left[\left.\frac{1}{D^i(T)}\left(S^{i,j}(T)-K^{i,j}\right)^+\right|\cF_t\right].
\end{equation*}
Following \cite{lewis2001}, we know that option prices may be interpreted as a convolution of the payoff and the probability density function of the (log)-underlying. As a consequence, the pricing of a derivative may be solved in Fourier space by relying on the Plancherel/Parseval identity, see \cite{lewis2001}, where we have for $f,g\in L^2(\mathbb{R},\mathbb{C})$
\begin{align*}
\int_{-\infty}^{\infty}\overline{f(x)}g(x)dx=\frac{1}{2\pi}\int_{-\infty}^{\infty}\overline{\hat{f}(u)}\hat{g}(u)du
\end{align*}
for $u\in\mathbb{R}$ and $\hat{f},\hat{g}$ denoting the Fourier transforms of $f,g$, respectively. Applying the reasoning above in an option pricing setting requires some additional care. In fact, most payoff functions do not admit a Fourier transform in the classical sense. For example, it is well-known that for the call option one has
\begin{equation*}
\Phi( z)=\int_{\mathbb{R}}e^{\mathtt{i}  z x}\left( e^{x}-K^{i,j}\right)^+\de x=-\frac{\left(K^{i,j}\right)^{\mathtt{i} z+1}}{ z( z-\mathtt{i})},
\end{equation*}
provided we let $ z\in\mathbb{C}$ with $\Im( z)>1$, meaning that $\Phi( z)$ is the Fourier transform of the payoff function in the generalized sense. Such restrictions must be coupled with those that identify the domain where the generalized characteristic function of the log-price is well defined. The reasoning we just reported is developed in Theorem 3.2 in \cite{lewis2001}, where the following general formula is presented (here we write $\phi^{i,j}$ for $\phi^{i,j}_{t,T}$ in order to simplify notation):
\begin{align}
C(S^{i,j}(t),K^{i,j},\tau)& =\frac{1}{ 2\pi }\int_{\mathcal{Z}}\phi^{i,j}(- z)\Phi( z)\de z,  \label{price3}
\end{align}
with $\mathcal{Z}$ denoting the line in the complex plane, parallel to the real axis, where the integration is performed. The article \cite{article_Carr99} followed a different procedure by introducing the concept of a dampened option price. However, as \cite{lewis2001} and \cite{lee2004} point out, this alternative approach is just a particular case of the first one.  In \cite{lee2004},  the Fourier representation of option prices is extended to the case where interest rates are stochastic. Moreover, the shifting of contours, pioneered by \cite{lewis2001}, is employed to prove Theorem 5.1 in \cite{lee2004}. There the following general option pricing formula is presented:
\begin{align}
\begin{split}
C(S^{i,j}(t),&K^{i,j},\tau)\\
& =R\left(S^{i,j}(t),K^{i,j},\alpha\right)+\frac{1}{\pi }\int_{0-\mathtt{i}\alpha}^{\infty-\mathtt{i}\alpha}\Re\left(e^{-\mathtt{i} z k^{i,j}}\frac{\phi^{i,j}( z-\mathtt{i})}{- z( z-\mathtt{i})}\right)d z.  \label{price2}
\end{split}
\end{align}%
Here $k^{i,j}=\log (K^{i,j})$, $\alpha$ denotes the contour of integration and the term coming from the application of the residue theorem is given by
\begin{equation}
R\left(S^{i,j}(t),K^{i,j},\alpha\right) = 
\begin{cases}
 \phi^{i,j}(-\mathtt{i})-K^{i,j}\phi^{i,j}(0), & \mbox{if } \alpha<-1 \\ 
 \phi^{i,j}(-\mathtt{i})-\frac{K^{i,j}}{2}\phi^{i,j}(0), & \mbox{if } \alpha=-1 \\
   \phi^{i,j}(-\mathtt{i}) & \mbox{if } -1<\alpha<0\\
    \frac{1}{2}\phi^{i,j}(-\mathtt{i})&\mbox{if } \alpha = 0\\
    0&\mbox{if } \alpha >0.
 \end{cases}
\end{equation}
The following theorem provides the explicit computation of the generalized discounted characteristic function.
\begin{theorem} \label{TheoremTransform}The joint conditional generalized characteristic function $\Psi_{t,T}(\cdot)$ in (\ref{jointPsi}) 
is given by
\begin{align}
\label{eq:resultTheoremTransform}
\begin{aligned}
\Psi_{t,T}(\zeta)=&\exp \left\{\sum_{i=1}^N \imag\zeta^i \left(Y^i(t)+h^i (T-t) \right)\right\}\\
&\times\prod_{k=1}^d\exp\left\{\sum_{i=1}^N \Bigg[(T-t)\Bigg(\frac{1-\rho^2_k}{2}\sum_{j=1}^N \imag\zeta^i\imag\zeta^j(a^i_kb^j_k+a^j_kb^i_k)\Bigg.\Bigg.\right.\\
&\quad\quad\quad\quad\quad\Bigg.+\imag\zeta^ia^i_kb^i_k+\imag\zeta^i\frac{\kappa_k\rho_k}{\sigma_k}\left(b^i_k-\theta_ka^i_k\right)\Bigg)\\
&\quad\quad\quad\quad\quad\Bigg.\Bigg.-V_k(t)\imag\zeta^i\frac{\rho_k a^i_k}{\sigma_k}-\imag\zeta^i\frac{\rho_kb^i_k}{\sigma_k}\log(V_k(t))\Bigg]\Bigg\}\\
&\times\left( \frac{\beta_k(t,V_k)}{2}\right)^{m_k+1} V_k(t)^{-\frac{\kappa_k\theta_k}{\sigma_k^2}}
(\lambda_k+K_k(t))^{-\left(\frac{1}{2}+\frac{m_k}{2}-\alpha_k +\frac{\kappa_k\theta_k}{\sigma_k^2}\right)}\\
& \times e^{\frac{1}{\sigma_k^2}\left( \kappa_k^2\theta_k(T-t) - \sqrt{A_k}V_k(t)\coth\left(\frac{\sqrt{A_k}(T-t)}{2}\right)+\kappa_k V_k(t)\right)}
\frac{\Gamma \left(\frac{1}{2}+\frac{m_k}{2}-\alpha_k +\frac{\kappa_k\theta_k}{\sigma_k^2}\right) }{\Gamma (m_k+1)}\\
& \times{}_1F_1 \left( \frac{1}{2}+\frac{m_k}{2}-\alpha_k +\frac{\kappa_k\theta_k}{\sigma_k^2}, m_k+1, \frac{\beta_k^2(t,V_k)}{4(\lambda_k +K_k(t))}\right),
\end{aligned}
\end{align}
where
\begin{align*}
m_k &=\frac{2}{\sigma_k^2}\sqrt{\left(\kappa_k\theta_k-\frac{\sigma_k^2}{2}\right)^2+2\sigma_k^2 \nu_k},\\
A_k&=\kappa_k^2 +2\mu_k\sigma_k^2,\\
\beta_k (t, x)&= \frac{2\sqrt{A_k  x}}{\sigma_k^2 \sinh\left(\frac{\sqrt{A_k}(T-t)}{2}\right)},\\
K_k(t)&= \frac{1}{\sigma_k^2}\left( \sqrt{A_k}\coth\left(\frac{\sqrt{A_k}(T-t)}{2}\right)+\kappa_k\right),
\end{align*}
and
\begin{align}
\alpha_k&=-\frac{\rho_k}{\sigma_k}\sum_{i=1}^N\imag\zeta^ib^i_k\label{alphak}\\
\lambda_k&=-\frac{\rho_k}{\sigma_k}\sum_{i=1}^N \imag\zeta^ia_k^i\label{lambdak}\\
\mu_k&=-\sum_{i=1}^N\left(\imag\zeta^iH^i_k+\frac{\imag\zeta^i}{2}(a^i_k)^2+\frac{1-\rho^2_k}{2}\sum_{j=1}^N\imag\zeta^i\imag\zeta^ja^i_ka^j_k+\imag\zeta^i\rho_ka^i_k\frac{\kappa_k}{\sigma_k}\right)\label{muk}\\
\begin{split}
\nu_k&=-\sum_{i=1}^N\left(\imag\zeta^iG^i_k+\frac{\imag\zeta^i}{2}(b^i_k)^2+\frac{1-\rho^2_k}{2}\sum_{j=1}^N\imag\zeta^i\imag\zeta^jb^i_kb^j_k\right.\\
&\quad\quad\quad\quad\left.-\frac{\imag\zeta^i\rho_kb^i_k}{\sigma_k}\left(\kappa_k\theta_k-\frac{\sigma^2_k}{2}\right)\right)\label{nuk}.
\end{split}
\end{align}
Let be given  the functions
\begin{align*}
f^1_k(-\Im(\boldsymbol{\zeta}))&:=\kappa_k^2+2\sigma_k^2\Bigg(-\sum_{i=1}^N\left[-\Im(\zeta^i)H^i_k
\right.\Bigg.\\
&\Bigg.\left.-\frac{\Im(\zeta^i)}{2}(a^i_k)^2+\frac{1-\rho^2_k}{2}\sum_{j=1}^N\Im(\zeta^i)\Im(\zeta^j)a^i_ka^j_k-\Im(\zeta^i)\rho_ka^i_k\frac{\kappa_k}{\sigma_k}\right]\Bigg)\\
f^2_k(-\Im(\boldsymbol{\zeta}))&:=\left(\kappa_k\theta_k-\frac{\sigma_k^2}{2}\right)^2+2\sigma_k^2 \Bigg(-\sum_{i=1}^N\left[-\Im(\zeta^i)G^i_k-\frac{\Im(\zeta^i)}{2}(b^i_k)^2\right.\Bigg.\\
&\quad\quad\quad\Bigg.\left.+\frac{1-\rho^2_k}{2}\sum_{j=1}^N\Im(\zeta^i)\Im(\zeta^j)b^i_kb^j_k+\frac{\Im(\zeta^i)\rho_kb^i_k}{\sigma_k}\left(\kappa_k\theta_k-\frac{\sigma^2_k}{2}\right)\right]\Bigg)\\
f^3_k(-\Im(\boldsymbol{\zeta}))&:=\frac{\kappa_k\theta_k+\frac{\sigma_k^2}{2}+\sqrt{f^2_k(-\Im(\boldsymbol{\zeta}))}}{\sigma_k^2}\\
f^4_k(-\Im(\boldsymbol{\zeta}))&:=\frac{\rho_k}{\sigma_k}\sum_{i=1}^N \Im(\zeta^i)a^i_k+\frac{\sqrt{f^2_k(-\Im(\boldsymbol{\zeta}))}+\kappa_k}{\sigma_k^2},
\end{align*}
in conjunction with the following conditions
\begin{enumerate}
\item[(i)]\label{theorem:req1} $f^1_k(-\Im(\boldsymbol{\zeta}))>0, \ \forall k=1,\ldots, d$; \\
\item[(ii)]\label{theorem:req2} $f^2_k(-\Im(\boldsymbol{\zeta}))\geq 0, \ \forall k=1,\ldots, d$;\\
\item[(iii)]\label{theorem:req3} $f^3_k(-\Im(\boldsymbol{\zeta}))>0, \ \forall k=1,\ldots, d$; \\
\item[(iv)]\label{theorem:req4} $f^4_k(-\Im(\boldsymbol{\zeta}))\geq 0, \ \forall k=1,\ldots, d$. \\
\end{enumerate}
The transform formula \eqref{eq:resultTheoremTransform} is well defined for all $t\in [0,T]$ when the complex vector $\mathtt{i}\boldsymbol{\zeta}$ belongs to the strip $\mathcal{D}_{t,+\infty}=\mathcal{A}_{t,+\infty}\times\mathtt{i}\RR^N\subset\CC^N$, where the convergence set $\mathcal{A}_{t,+\infty}\subset\RR^N$ is given by
\begin{align*}
\mathcal{A}_{t,+\infty}:=\left\{\left.-\Im(\boldsymbol{\zeta})\in\RR^N\right| \ f^l_k(-\Im(\boldsymbol{\zeta})), \ l=1,\ldots,4\ \text{satisfying  (i)-(iv)}\right\}
\end{align*}
Moreover, for $\imag\boldsymbol{\zeta}\in \mathcal{D}_{t,t^\star}=\mathcal{A}_{t,t^\star}\times\mathtt{i}\RR^N$ with
\begin{align*}
\mathcal{A}_{t,t^\star}&:=\left\{\left.-\Im(\boldsymbol{\zeta})\in\RR^N\right| \ f^l_k(-\Im(\boldsymbol{\zeta})), \ l=1,\ldots,3\ \text{satisfying (i)-(iii) and }\right.\\
&\quad\left.f^4_k(-\Im(\boldsymbol{\zeta}))<0\ \text{for some }k\right\}\supset\mathcal{A}_{t,+\infty}
\end{align*}
the transform is well defined until the maximal time $t^\star$ given by
\begin{align}
\label{eq:explosionTime}
t^\star=\min_{k\text{ s.t. }f^4_k(-\Im(\boldsymbol{\zeta}))<0}\frac{1}{\sqrt{A_k}}\log\left(1-\frac{2\sqrt{A_k}}{\kappa_k+\sigma_k\rho_k\sum_{i=1}^N \Im(\zeta^i)a_k^i+\sqrt{A_k}}\right).
\end{align}
%
\end{theorem}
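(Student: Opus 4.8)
The plan is to reduce the multi-currency transform to a product of one-dimensional CIR Laplace transforms and then invoke the general transform formula of \cite{grasselli13}. First I would apply It\^o's formula to $Y^i=\log D^i$ using the GOP dynamics displayed under Assumption \ref{piandr}. Since the diffusion coefficient of $D^i$ on $Z_k$ is $\pi^i_k=a^i_kV_k^{1/2}+b^i_kV_k^{-1/2}$, the drift of $Y^i$ equals $r^i+\tfrac12\sum_k(\pi^i_k)^2$, which by \eqref{eq:shortRateDynamics} collapses to $h^i+\sum_k(H^i_k+\tfrac12(a^i_k)^2)V_k+\sum_k(G^i_k+\tfrac12(b^i_k)^2)V_k^{-1}+\sum_k a^i_kb^i_k$. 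Writing $\langle\zeta,Y(T)\rangle=\sum_i\zeta^iY^i(T)$ and peeling off the $\cF_t$-measurable and purely deterministic pieces produces the leading factor $\exp\{\sum_i\imag\zeta^i(Y^i(t)+h^i(T-t))\}$. Crucially, by Assumption \ref{assumptionCorrelation} the pairs $(W_k,Z_k)$ are independent across $k$, so the residual expectation factorizes into the product $\prod_{k=1}^d$ appearing in \eqref{eq:resultTheoremTransform}.

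For each fixed $k$ I would write $Z_k=\rho_kW_k+\sqrt{1-\rho_k^2}\,W_k^\perp$ with $W_k^\perp$ independent of $W_k$, and first integrate out $W_k^\perp$ by conditioning on the path of $V_k$. With $g_k(s)=\sum_i\imag\zeta^i(a^i_kV_k(s)^{1/2}+b^i_kV_k(s)^{-1/2})$ the $W_k^\perp$-integral is conditionally Gaussian, contributing the compensator $\tfrac12(1-\rho_k^2)\int_t^Tg_k^2\,ds$; this accounts for every $\tfrac{1-\rho_k^2}{2}$ cross term in the exponent, the constant-in-time part of $g_k^2$ supplying the $(T-t)$ factor. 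The remaining $\rho_k$-weighted Wiener integrals $\int_t^TV_k^{1/2}dW_k$ and $\int_t^TV_k^{-1/2}dW_k$ I would eliminate by inverting the CIR SDE \eqref{CIRs}: $\int_t^TV_k^{1/2}dW_k=\sigma_k^{-1}[V_k(T)-V_k(t)-\kappa_k\theta_k(T-t)+\kappa_k\int_t^TV_k\,ds]$, and an It\^o expansion of $\log V_k$ gives $\int_t^TV_k^{-1/2}dW_k=\sigma_k^{-1}[\log V_k(T)-\log V_k(t)-(\kappa_k\theta_k-\tfrac{\sigma_k^2}{2})\int_t^TV_k^{-1}ds+\kappa_k(T-t)]$. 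After substitution every stochastic integral is replaced by the terminal data $V_k(T),\log V_k(T)$, the $\cF_t$-measurable prefactors $V_k(t),\log V_k(t)$ seen in \eqref{eq:resultTheoremTransform}, and the two additive functionals $\int_t^TV_k\,ds$ and $\int_t^TV_k^{-1}ds$; collecting coefficients reproduces $\alpha_k,\lambda_k,\mu_k,\nu_k$ as in \eqref{alphak}--\eqref{nuk}.

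Each per-factor expectation is then of the form $\mathbb{E}[\exp\{c_1V_k(T)+c_2\log V_k(T)-\mu_k\int_t^TV_k\,ds-\nu_k\int_t^TV_k^{-1}ds\}\mid\cF_t]$, a joint Laplace transform of the CIR process against its running integral, its inverse-integral, and a terminal power weight. I would evaluate it by the general transform formula of \cite{grasselli13}: the $\int V_k\,ds$ term rescales the speed of mean reversion, giving $A_k=\kappa_k^2+2\mu_k\sigma_k^2$ and the $\sinh/\coth$ structure through $\beta_k$ and $K_k$; the $\int V_k^{-1}ds$ term shifts the effective index of the underlying squared-Bessel process to $m_k=\tfrac{2}{\sigma_k^2}\sqrt{(\kappa_k\theta_k-\tfrac{\sigma_k^2}{2})^2+2\sigma_k^2\nu_k}$; and integrating the noncentral chi-square density of $V_k(T)$ against the power weight yields the ratio of Gamma functions together with the confluent hypergeometric ${}_1F_1$. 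Matching constants then gives \eqref{eq:resultTheoremTransform}.

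The main obstacle is not the algebra but pinning down the convergence domain and the explosion time. I would derive conditions (i)--(iv) by demanding that the two square roots be real ($f^2_k\geq0$, $f^4_k\geq0$) and that the arguments of the Gamma and hypergeometric functions stay admissible ($f^1_k,f^3_k>0$), i.e. that the per-factor transform is finite for all $t\in[0,T]$; this carves out the strip $\mathcal{D}_{t,+\infty}$. When instead $f^4_k<0$ for some $k$, the transform survives only up to a finite horizon, and the explosion time \eqref{eq:explosionTime} is located as the first time the relevant $\sinh\!\big(\sqrt{A_k}(T-t)/2\big)$-type denominator vanishes, equivalently where the $\coth$ term blows up. The delicate point is to show that these pointwise real-analytic conditions genuinely delimit $\mathcal{D}_{t,+\infty}$ and that the closed form extends analytically onto the larger strip $\mathcal{D}_{t,t^\star}$; here one must track the moment-explosion behaviour of the square-root functionals directly rather than quote \cite{grasselli13}.
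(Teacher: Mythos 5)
Your main reduction coincides with the paper's proof step for step: the same orthogonal decomposition $Z_k=\rho_k W_k+\sqrt{1-\rho_k^2}\,W_k^{\perp}$ with conditional Gaussian integration of $W_k^{\perp}$ given the path of $V_k$, the same factorization over $k$ from Assumption \ref{assumptionCorrelation}, the same elimination of $\int_t^T V_k^{\pm 1/2}\,\de W_k$ via the CIR SDE and It\^o applied to $\log V_k$, the same identification of $\alpha_k,\lambda_k,\mu_k,\nu_k$ in \eqref{alphak}--\eqref{nuk}, and the same final appeal to Theorem \ref{general} of \cite{grasselli13} with $a=\kappa_k\theta_k$, $b=\kappa_k$, $\sigma=\sigma_k^2/2$, $\tau=T-t$. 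One point of divergence is your closing claim that the convergence domain requires tracking moment explosion ``directly rather than quote \cite{grasselli13}'': the paper does the opposite. It notes that $\left|e^{\imag\langle\zeta,Y(T)\rangle}\right|=e^{-\langle\Im(\zeta),Y(T)\rangle}$ reduces everything to regularity at the real point $-\Im(\boldsymbol{\zeta})$, and then reads conditions (i)--(iv) off \eqref{conditionmu}--\eqref{conditionLambda} and the explosion time off \eqref{eq:generalExplosionTime}; no independent analysis is needed because the cited theorem already contains both.

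The genuine flaw is in your account of what the four conditions mean and of where the explosion comes from. You pair $f^4_k\geq 0$ with ``a square root being real''; it is not. Reality of the two square roots is exactly (i) and (ii): $f^1_k$ is $A_k=\kappa_k^2+2\mu_k\sigma_k^2$ evaluated at $-\Im(\boldsymbol{\zeta})$, and $f^2_k$ is the discriminant inside $m_k$. The condition $f^4_k\geq 0$ is instead the translation of \eqref{conditionLambda}, namely $\lambda_k\geq -(\sqrt{A_k}+\kappa_k)/\sigma_k^2$ (with the square root that appears in \eqref{eq:explosionTime}), and it is precisely the condition excluding finite-time explosion. Consequently your proposed localization of $t^\star$ would fail: for $A_k>0$ the factor $\sinh\bigl(\sqrt{A_k}(T-t)/2\bigr)$ never vanishes for $T-t>0$ and the $\coth$ stays finite there, so nothing blows up in the way you describe. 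The actual mechanism is the vanishing of $\lambda_k+K_k$: the function $K_k$ decreases monotonically from $+\infty$ at $\tau=0$ to $(\sqrt{A_k}+\kappa_k)/\sigma_k^2$ as $\tau\to\infty$, so when $f^4_k<0$ the quantity $\lambda_k+K_k(\tau)$ --- which is raised to a negative power and sits in the denominator of the ${}_1F_1$ argument in \eqref{eq:resultTheoremTransform} --- crosses zero at a finite $\tau^\star$, and solving $\lambda_k+K_k(\tau^\star)=0$ is exactly what produces \eqref{eq:explosionTime}. As written, your plan would not recover that formula.
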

\begin{proof}
See Appendix \ref{AppendixTheoremTransform}.
\end{proof}

The general transform formula above is a powerful tool, however, checking the validity of \eqref{eq:resultTheoremTransform}, may not be very practical in a calibration setting. For this reason, we provide a simple, yet handy, criterion. Recall that we introduced in \eqref{zerocouponbond} the price $P^i(t,T)$ at time $t\in [0, T], \ 0 \leq T\leq \bar T$  of a zero coupon bond for one unit of the $i$-th currency to be paid at $T$, $i =1,\ldots,N$, via the following conditional expectation
\begin{align}
P^i(t,T):=D^i(t)\mathbb{E}\left[\left.\frac{1}{D^i(T)}\right|\cF_t\right]=\phi_{t,T}^{i,j}(0).\label{zcb}
\end{align}
The criterion is provided by the next lemma.
\begin{lemma}\label{criterioncf}
Let $-1<\alpha<0$ and $z\in \mathbb{C}$ with $z=u+\mathtt{i}\alpha$. Assume
\begin{align*}
P^i(t,T)\vee P^j(t,T)<\infty,
\end{align*}
then
\begin{align*}
D^i(t)\mathbb{E}\left[\left.\frac{1}{D^i(T)}\left(S^{i,j}(T)\right)^{-\alpha}\right|\cF_t\right]<\infty,
\end{align*}
moreover, the discounted characteristic function $\phi^{i,j}(z)$ admits an analytic extension to the strip
\begin{align*}
\mathcal{Z}=\left\{\left.z\in \mathbb{C}\right| z = u+\mathtt{i}\alpha, \ \alpha\in(-1,0)\right\}.
\end{align*}
\end{lemma}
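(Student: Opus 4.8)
The plan is to prove the two assertions in sequence, letting the moment bound carry most of the weight and deducing the analytic extension by a standard domination argument. First I would observe that, writing $z=u+\mathtt{i}\alpha$ with $u\in\mathbb{R}$ and $\alpha\in(-1,0)$, one has $|e^{\mathtt{i}zx^{i,j}(T)}|=e^{-\alpha x^{i,j}(T)}=(S^{i,j}(T))^{-\alpha}$ because $x^{i,j}(T)=\log S^{i,j}(T)$. Hence the integrand defining $\phi^{i,j}(z)$ is dominated in modulus by $\frac{1}{D^i(T)}(S^{i,j}(T))^{-\alpha}$, so the first (finiteness) claim is exactly the statement that this dominating conditional expectation is finite; once it holds, absolute convergence of $\phi^{i,j}(u+\mathtt{i}\alpha)$ for every real $u$ is immediate since $|e^{\mathtt{i}ux^{i,j}(T)}|=1$.

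For the moment bound I would set $\beta:=-\alpha\in(0,1)$ and use $S^{i,j}(T)=D^i(T)/D^j(T)$ to rewrite
\begin{align*}
\frac{1}{D^i(T)}\left(S^{i,j}(T)\right)^{\beta}=\left(\frac{1}{D^i(T)}\right)^{1-\beta}\left(\frac{1}{D^j(T)}\right)^{\beta}.
\end{align*}
The key step is Young's inequality with the conjugate exponents $p=\frac{1}{1-\beta}$ and $q=\frac{1}{\beta}$ (both admissible precisely because $\beta\in(0,1)$, i.e. because $\alpha\in(-1,0)$), which gives the pathwise estimate
\begin{align*}
\left(\frac{1}{D^i(T)}\right)^{1-\beta}\left(\frac{1}{D^j(T)}\right)^{\beta}\leq(1-\beta)\,\frac{1}{D^i(T)}+\beta\,\frac{1}{D^j(T)}.
\end{align*}
Taking $\mathbb{E}[\,\cdot\,|\cF_t]$ and multiplying by $D^i(t)$, I would identify the first term as $(1-\beta)P^i(t,T)$ via \eqref{zerocouponbond}, and the second as $\beta\,S^{i,j}(t)P^j(t,T)$, using $D^i(t)\mathbb{E}[1/D^j(T)|\cF_t]=S^{i,j}(t)P^j(t,T)$ together with $S^{i,j}(t)=D^i(t)/D^j(t)<\infty$. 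The hypothesis $P^i(t,T)\vee P^j(t,T)<\infty$ then delivers the finiteness claim.

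For the analytic extension I would show that the same integral defines a holomorphic function on the open strip $\{\,\Im(z)\in(-1,0)\,\}$, which is then the asserted analytic continuation of $\phi^{i,j}$. On any compact vertical sub-strip $\Im(z)\in[\alpha_1,\alpha_2]\subset(-1,0)$, the modulus of the integrand is bounded uniformly in $z$ by $\frac{1}{D^i(T)}\big((S^{i,j}(T))^{-\alpha_1}+(S^{i,j}(T))^{-\alpha_2}\big)$ (splitting according to whether $S^{i,j}(T)\gtrless 1$), whose conditional expectation is finite by the first part applied at $\alpha_1$ and $\alpha_2$. Since $z\mapsto e^{\mathtt{i}zx^{i,j}(T)}$ is entire, this uniform domination justifies applying Fubini's theorem to interchange $\oint_\gamma$ over any closed contour $\gamma$ in the sub-strip with $\mathbb{E}[\,\cdot\,|\cF_t]$; the inner contour integral vanishes by Cauchy's theorem, and Morera's theorem yields analyticity of $\phi^{i,j}$ on $\mathcal{Z}$.

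I expect the main obstacle to be the moment bound rather than the analyticity. The clean reduction to a convex combination of the two bond prices hinges entirely on the Young's-inequality step, and in particular on the fact that the exponents $1-\beta$ and $\beta$ sum to one exactly when $\alpha\in(-1,0)$; outside this range one loses the control by $P^i$ and $P^j$. The analytic continuation is then routine, the only point requiring care being the uniform domination on compact sub-strips that legitimises passing holomorphy through the conditional expectation.
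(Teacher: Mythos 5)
Your proof is correct, but it takes a genuinely different route from the paper's on both steps. For the moment bound, the paper changes num\'eraire to the $T$-forward measure $\mathbb{P}^T$ induced by the benchmarked zero coupon bond price (a true $\PP$-martingale under the benchmark approach) and then applies Jensen's inequality to the concave map $x\mapsto x^{-\alpha}$, arriving at the geometric-mean bound $P^i(t,T)^{1+\alpha}\left(S^{i,j}(t)\,P^j(t,T)\right)^{-\alpha}$; you instead apply the weighted arithmetic--geometric mean (Young) inequality pathwise, before taking any expectation, and obtain the arithmetic-mean bound $(1+\alpha)\,P^i(t,T)-\alpha\, S^{i,j}(t)\,P^j(t,T)$. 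Both are finite under the hypothesis (together with $S^{i,j}(t)=D^i(t)/D^j(t)<\infty$), and yours is the weaker of the two bounds --- AM--GM applied to the paper's expression recovers exactly yours --- but finiteness is all that is needed, and your argument is more elementary and self-contained: it requires no change of measure, and in particular does not lean on the martingale property of the benchmarked bond price that legitimises the density process $Z_t=P^i(t,T)D^i(0)/\left(P^i(0,T)D^i(t)\right)$. For the analytic extension, the paper simply invokes Theorem 7.1.1 of Lukacs, whereas you prove holomorphy directly: the splitting $(S^{i,j}(T))^{-\alpha}\leq (S^{i,j}(T))^{-\alpha_1}+(S^{i,j}(T))^{-\alpha_2}$ on a compact sub-strip is the right uniform dominating function, and the Fubini--Cauchy--Morera chain is essentially the proof of the cited theorem, so nothing is lost. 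In short: same skeleton (dominate the modulus by the $(-\alpha)$-moment, control that moment by the two bond prices, then extend analytically), but with a pathwise convexity inequality replacing the forward-measure-plus-Jensen step, and a hands-on Morera argument replacing the citation.
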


\begin{proof}
See Appendix \ref{Appendixcriterioncf}.
\end{proof}

Given the result in Lemma \ref{criterioncf}, we shall proceed to calibrate the model by employing the generalized Carr-Madan formula of Lee, i.e. \eqref{price2}, by setting 
\[
R\left(S^{i,j}(t),K^{i,j},\alpha\right) = \phi^{i,j}(-\mathtt{i}).
\]

\section{Model Calibration to FX Triangles}\label{sec:calibrationResults}

In line with \cite{gnoatto11}, \cite{gg13} and \cite{martinoplaten13}, we perform a joint calibration to a triangle of FX implied volatility surfaces. More specifically, we consider the data set employed in \cite{gg13}, featuring implied volatility surfaces for EURUSD, USDJPY and EURJPY as of July $22^{nd}$ 2010. We choose such a date so as to obtain a calibration that can be approximately compared with the one of \cite{gnoatto11}, based on data as of July $23^{rd}$ 2010. We perform our calibration to options with expiry dates ranging from one up to 18 months and moneyness ranging from 15 delta put up to 15 delta call, and we consider a total of 126 contracts. The model we consider for the calibration is the full 4/2 stochastic volatility model, i.e. both the Heston and the $3/2$ effects are simultaneously considered. As we calibrate options with maturity up to 18 months, we do not consider stochastic interest rates due to the limited interest rate risk, see \cite{gg13}. In line with the references above, we choose the following penalty function
\[
\sum_i\left(\sigma^{imp}_{i,mkt}-\sigma^{imp}_{i,model}\right)^2,
\]
where $\sigma^{imp}_{i,mkt}$ is the $i$-th observed market volatility and $\sigma^{imp}_{i,model}$ is the $i$-th model-derived implied volatility. For each option contract, $\sigma^{imp}_{i,model}$ is constructed along the following steps: first, given a set of model parameters, \eqref{price2} for $-1<\alpha<0$ is employed so as to obtain the corresponding model derived price, secondly, the obtained price is converted into $\sigma^{imp}_{i,model}$, via a standard implied volatility solver. As far as the implementation of \eqref{price2} is concerned, we approximated the integral via a $4096$-point FFT routine, with grid spacing equal to $0.1$, so that the improper integral is truncated at the point  $e^{409}$ . The corresponding strike range is then given by $\left[e^{-31.4159},e^{31.4159}\right]$ and Simpson's rule weights are introduced for increased accuracy, see \cite{article_Carr99}. The FFT returns then a vector of option prices for a fixed grid of strikes. Option prices for the strikes of interest are obtained via a linear interpolation. We assume that the model is driven by two square root factors. The parameters we need to calibrate are given by those appearing in the dynamics of each square root process, i.e. $\kappa_k,\theta_k,\sigma_k,V_k(0), \ k=1,2$,  coupled with a two-dimensional vector of correlations and six two-dimensional vectors of projections for each currency area, i.e. $a^i,b^i, \ i=1,2,3$, meaning that we proceed to estimate a total of 22 parameters. Clearly, in order to prevent instability and over-parametrization issues, simplified versions of the model may be considered.

The result of the calibration is presented in Figure \ref{eurusd} for July $22^{nd}$, 2010, while the corresponding parameters are reported in Table \ref{parameters}. We obtain a good fit over all three surfaces we consider, in line with \cite{gnoatto11}.
This shows that a satisfactory calibration of the model can be achieved. It allows us to perform the following analysis, which constitutes an interesting empirical result of the current paper. Given the set of parameters we obtain from the calibration, we can try to analyze whether market data of FX options are supporting the common use of classical risk neutral pricing. Our approach is so flexible, that we can, in the setting of a single model, span both the risk-neutral valuation and the pricing under the real-world measure. Such an analysis is summarized in Table \ref{fellertest}. We consider different measures for pricing: the real world probability measure $\mathbb{P}$, and the putative	 risk neutral measures $\mathbb{Q}^{usd}$, $\mathbb{Q}^{eur}$ and $\mathbb{Q}^{jpy}$. For each measure we compute the corresponding Feller condition for each square root process. \\

Under the real world probability measure $\mathbb{P}$, we observe that the Feller condition $2\kappa_k\theta_k\geq \sigma^2_k$ is satisfied by both $V_1$ and $V_2$. We next proceed to perform the same analysis under the two putative risk neutral measures $\mathbb{Q}^{usd}$ and $\mathbb{Q}^{jpy}$, respectively. We observe that for the first one we still have that both processes do never reach zero, whereas for $\mathbb{Q}^{jpy}$ we have that the Feller condition is not satisfied by the second component. As discussed in Section \ref{subsecstrictlm}, if at least one of the square root processes has a different behavior under the putative risk-neutral measure, then we have that classical risk-neutral pricing is not well founded. In summary, we have a situation where market data suggest that for the USD currency denomination risk-neutral pricing is potentially applicable, while in the JPY denomination it is not theoretically founded. \\

We also perform a second calibration experiment. The structure of the sample of the dataset is the same as in the previous case and market data were provided as of Feb 23${}^{rd}$, March 23${}^{rd}$, April 22${}^{nd}$, May 22${}^{nd}$ and June 22${}^{nd}$, 2015. Essentially, we are taking the perspective of a derivative desk following the market practice that involves a periodic model re-calibration across different trading dates. Such analysis allows us to provide some first evidence regarding the stability of the parameter estimates we obtain. By looking at Table \ref{parameters2015}, we observe a satisfactory stability of the calibrated parameters. A relevant change in the estimates is observed only between the February and March calibration. The quality of the fit is comparable with the one obtained in our first calibration and the above mentioned papers of \cite{martinoplaten13} and \cite{gnoatto11}. 

 Calibrated parameter values are listed in Table~\ref{parameters2015}, whereas the Feller condition under all measures is reported in Table~\ref{fellertest2015}. We observe in this case a violation of the Feller condition for the second factor under the $\mathbb{Q}^{usd}$ putative risk neutral measure, whereas  for the $\mathbb{Q}^{jpy}$ measure the condition is passed.  For $\mathbb{Q}^{eur}$, instead, we observe that the condition is initially passed and then, starting from April 22${}^{nd}$, we have repeated violations.
The overall results of our  analysis allow us to suggest that markets are subject to what we may term as \textit{regime switches} in pricing between the classical risk neutral and the more general real world pricing  approach. Such a feature would clearly provide a strong motivation for the introduction of  models that are able to accomodate both valuation frameworks, like the 4/2 type specification that we propose.

\section{Pricing and Hedging of Long-Dated Zero Coupon Bonds}\label{sec:hedgingLongDatedSecurities}

In the previous section we obtained a prototypical calibration to market data that shows the coexistence of pricing under the risk-neutral and the more general benchmark approach. In the present section, we take the calibrated values as given and consider the problem of hedging contingent claims under the 4/2 model. We start in Subsection \ref{sec:hedgingTheory} by providing the necessary background on quadratic hedging and, in particular, on \textit{benchmarked risk minimization}. We restrict our attention to a very simple contingent claim, namely a zero coupon bond, which is a core building block of annuities and many other insurance products. Such an experiment is simple and yet extremely powerful in showing how the benchmark approach allows for the hedging of contingent claims for a lower cost. The construction of the hedging scheme requires a martingale representation for the claim under consideration, where the initial price represents the starting point of the value of the strategy. In Subsection \ref{sec:hedgingPart1} we analyze the valuation formula for a zero coupon bond and explicitly highlight the consequences of the failure of the martingale property from an analytical point of view. Finally, in Subsection \ref{sec:hedgingPart2} we explicitly compute the dynamic hedging scheme.

\subsection{Benchmarked Risk Minimization}\label{sec:hedgingTheory}
The 4/2 model is  a  stochastic volatility model. Hence, due to the presence of the instantaneous volatility uncertainty, we have an example of an incomplete market. Hedging claims in an incomplete market setting is a non-trivial task, which may be performed by means of different possible criteria. Incompleteness means that, in general, it is not possible to construct a self-financing trading strategy that delivers at maturity $T$ the final payoff $\cH(T)$ almost surely. According to the survey paper of \cite{schweizer01}, in an incomplete market setting one may relax the requirements on the family of possible trading strategies in either two directions:
\begin{enumerate}
\item A first possibility is to relax the requirement that the terminal value process of the strategy reaches the final payoff $\cH(T)$ a.s., and hence one is induced to minimize the expected quadratic hedging error at maturity over a suitable set of self-financing trading strategies. This first approach is called \textit{mean-variance hedging}, and is presented in \cite{bola89}, \cite{duri91} and \cite{schweizer94a}.
\item Alternatively, one may relax the self-financing requirement and insist on attaining $\cH(T)$ a.s. at maturity $T$. In this second case one is induced to minimize a quadratic function of the cost process of the (non-self-financing) trading strategy. This second approach, referred to as local risk minimization, was introduced in \cite{foso86} and then generalized in \cite{schweizer91} in a general semimartingale setting assuming the existence of a minimal equivalent martingale measure; see also \cite{mol01}.
\end{enumerate} 
In this paper we adopt some type of local risk minimization. It is important to stress that we need to consider a generalized concept of local risk minimization in the context of the benchmark approach. A first generalization in this sense was provided by \cite{bcp14}. In the current paper, we adopt the more general approach of \cite{duplaten14}, known as \textit{benchmarked risk minimization}, which does not require second moment conditions. 

In Section \ref{sec:model} we introduced the notion of a self-financing trading strategy. We need to generalize our  notation, since we will consider, in general, also non-self-financing trading strategies. To this end, in line with Definition 2.1 of \cite{duplaten14}, we will call a dynamic trading strategy, initiated at $t=0$, an $\RR^{N+1}$-valued process of the form $\boldsymbol{v}=\left\{v(t)=\left(\eta(t),\vartheta^{1}(t),\ldots,\vartheta^{N}(t)\right)^\top, \ 0\leq t\leq T\leq \bar{T}\right\}$ for a predictable $\widehat{\boldsymbol{B}}$-integrable process $\boldsymbol{\vartheta}=\left\{\boldsymbol{\vartheta}(t)=(\vartheta^{1}(t),\ldots,\vartheta^{N}(t)), \ 0\leq t\leq T\leq \bar{T}\right\}$, which describes the units invested in the benchmarked primary security accounts $\widehat{\boldsymbol{B}}$ and forms the self-financing part of the associated portfolio. The corresponding benchmarked value process is then $\widehat{\tV}^{\boldsymbol{v}}=
\boldsymbol{\vartheta}(t)^{\top}\widehat{\boldsymbol{B}}(t)+\eta(t)$, where $\eta =\left\{\eta(t), \ 0\leq t\leq T\leq \bar{T}\right\}$ with $\eta(0)=0$ monitors the non-self-financing part, so that we can write
\begin{align*}
\widehat{\tV}^{\boldsymbol{v}}(t)=\widehat{\tV}^{\boldsymbol{v}}(0)+\int_0^t\boldsymbol{\vartheta}(s)^{\top}\de\widehat{\boldsymbol{B}}(s)+\eta(t).
\end{align*}
The process $\eta$ monitors the inflow/outflow of extra capital and hence measures the cost of the strategy, see Corollary 4.2 in \cite{duplaten14}. When the monitoring process is a local martingale, we say that the strategy $\boldsymbol{v}$ is \textit{mean-self-financing}, see Definition 4.4 in \cite{duplaten14}. Moreover, when the monitoring part $\eta$ is orthogonal to the primary security accounts, in the sense that $\eta(t)\widehat{\boldsymbol{B}}(t)$ forms a vector local martingale, we say that the trading strategy $\boldsymbol{v}$ has an orthogonal benchmarked profit and loss, see Definition 5.1 in \cite{duplaten14}. Given $\cV_{\widehat{\cH}(T)}$, the set of all mean-self-financing trading strategies which deliver the final benchmarked payoff $\widehat{\cH}(T)$ $\PP$-a.s., with an orthogonal benchmarked profit and loss, see Definition 5.2 in \cite{duplaten14}, we say that a strategy $\widetilde{\boldsymbol{v}}\in \cV_{\widehat{\cH}(T)}$ is \textit{benchmarked risk minimizing} if, for all strategies ${\boldsymbol{v}}\in \cV_{\widehat{\cH}(T)}$, we have that $\widehat{\tV}^{\widetilde{\boldsymbol{v}}}(t)\leq \widehat{\tV}^{{\boldsymbol{v}}}(t)$ $\PP$-a.s. for every $0\leq t\leq T\leq \bar{T}$, see Definition 5.3 in \cite{duplaten14}.

The practical application of the concept of benchmarked risk minimization necessitates the availability of martingale representations for benchmarked contingent claims, which will be given under the real-world probability measure $\PP$. In the tractable Markovian setting of the 4/2 model, such representation can be explicitly obtained, so that one  represents a benchmarked contingent claim $\widehat{\cH}(T)$ in the  form
\begin{align}
\label{eq:martingaleRepresentationGeneral}
\widehat{\cH}(T)=\Excond{}{\widehat{\cH}(T)}{\cF_t}+\int_t^T\boldsymbol{\vartheta}(s)^{\top}\de\widehat{\boldsymbol{B}}(s)+\eta(T)-\eta(t),
\end{align}
see Equation (6.1) in \cite{duplaten14}, and there  exists a benchmarked risk minimizing strategy $\boldsymbol{v}$ for $\widehat{\cH}(T)$. In summary, to  determine the strategy, one  has first to compute the conditional expectation in \eqref{eq:martingaleRepresentationGeneral}, which will be straightforward  in our case due to the analytical tractability of the 4/2 model. In a second step, an application of the It\^o formula  allows for the identification of the possible holdings $\boldsymbol{\vartheta}$ in the self-financing part of the price process $\widehat{\tV}^{{\boldsymbol{v}}}$. Finally, the monitoring part is given by $\eta(t)=\widehat{\tV}^{{\boldsymbol{v}}}(t)-\boldsymbol{\vartheta}(t)^{\top}\widehat{\boldsymbol{B}}(t)$, which when multiplied with $\widehat{\boldsymbol{B}}(t)$ needs to have zero drift. In the next subsections, we  carry out this procedure for a zero-coupon bond in the Japanese currency denomination, where, based on  the calibration results discussed in Section \ref{sec:calibrationResults}, we observed a failure of the risk neutral paradigm. Notice that the calibration of Section \ref{sec:calibrationResults} refers to a model where the short rate is assumed to be constant. We do recognize that this represents a simplification. Note, instead one could also use as payoff one unit of the saving accounts to demonstrate the consequences of the failure of the risk neutral approach. However, taking the short rate constant allows us to illustrate easily the impact of the violation of the risk neutral paradigm on zero coupon bonds. Introducing a stochastic short rate can be easily achieved by allowing for non-zero projection vectors $\boldsymbol{H}^i,\boldsymbol{G}^i$ in \eqref{eq:shortRateDynamics}. For an example of a hedging scheme in the presence of stochastic interest rates we refer to \cite{bfip14}.

\subsection{Pricing of a Long-Dated Zero Coupon Bond}\label{sec:hedgingPart1}

In this subsection we first quantify the impact of the violation of the classical risk neutral assumption on the pricing of a zero coupon bond written on JPY as domestic currency when assuming a constant interest rate $r^{JPY}$. If risk neutral valuation were possible in this market, then pricing as well as hedging of this elementary security would be trivial: It would consist in keeping the JPY amount $\exp(-r^{JPY} (T-t))$ in the domestic bank account $B^{JPY}$ and nothing in any other asset.

However, for the calibrated market we showed that under the 4/2 model the risk neutral probability measure most likely does not exist for the JPY currency denomination, so that risk neutral pricing is not allowed. 
As a consequence, we price the zero coupon bond under the real world probability measure using the real world pricing formula \eqref{zerocouponbond}. Recall that in our stochastic volatility context the market is incomplete, so that perfect hedging is not possible. As already indicated, we adopt  benchmarked risk minimization, developed in \cite{duplaten14}, and  find the corresponding mean-self-financing  hedging strategy that delivers  the payoff.

The first step consists in finding the real-world price of the zero coupon bond by using the real-world pricing formula. Assume  that the domestic currency is indexed by $i=1$ and set $d=2$. Then we focus on  the zero coupon bond value in domestic currency paying  one unit of domestic currency:
\begin{align*}
P^1(t,T)&=D^1(t)\mathbb{E}\left[\left.\frac{1}{D^1(T)}\right|\cF_t\right]\\
&=D^1(t)\Psi_{t,T}(\imag,0),
\end{align*}
where $\Psi_{t,T}(\imag,0)=\Psi_{t,T}(\imag,0,...,0)$, see \eqref{jointPsi}. Now, from Theorem \ref{TheoremTransform} we have for $k=1,...,d$ that
\begin{align*}
\alpha_k&=-\frac{\rho_k}{\sigma_k}b^1_k;\\
\lambda_k&=-\frac{\rho_k}{\sigma_k}a_k^1;\\
\mu_k&=\rho_k a^1_k \left( \frac{\kappa_k}{\sigma_k}+\frac{\rho_ka^1_k}{2}\right);\\
\nu_k&=\rho_k b^1_k \left( -\frac{\kappa_k\theta_k}{\sigma_k}+\frac{\rho_kb^1_k}{2}+\frac{\rho_k\sigma_k b^1_k}{2}\right);\\
A_k&=(\kappa_k+\sigma_k\rho_ka^1_k)^2
\end{align*}
and the argument of the hypergeometric function in \eqref{eq:resultTheoremTransform} becomes
\begin{align*}
\frac{\beta_k^2(t,V_k)}{4(\lambda_k +K_k(t))}&=
V_k \frac{2}{\sigma^2_k}(\kappa_k+\sigma_k\rho_k a^1_k)\left(e^{\sqrt{A_k}(T-t)}-1\right)^{-1}.
\end{align*}
The parameter $m_k$ becomes 
\begin{align*}
m_k= \frac{2}{\sigma^2_k}\vert \kappa_k\theta_k- \frac{\sigma_k^2}{2}-\rho_k\sigma_k b^1_k\vert .
\end{align*}

Since we consider the case  where the JPY is the domestic currency, we have according to Table~\ref{parameters}
\begin{align*}
m_1&= \frac{2}{\sigma^2_1}\left( \kappa_1\theta_1- \frac{\sigma_1^2}{2}-\rho_1\sigma_1 b^1_1\right)>0;\\
m_2&= -\frac{2}{\sigma^2_2}\left( \kappa_2\theta_2- \frac{\sigma_2^2}{2}-\rho_2\sigma_2 b^1_2\right)>0.
\end{align*}
Notice that for $k=1$ we have 
\begin{align*}
\frac{1}{2}+\frac{m_1}{2}-\alpha_1+\frac{\kappa_1\theta_1}{\sigma^2_1}&=\frac{2}{\sigma^2_1}\left(\kappa_1\theta_1-\rho_1\sigma_1 b_1^1\right)\\
&= m_1+1,
\end{align*}
which implies that the confluent hypergometric function becomes a simple exponential and the factor involving the term $k=1$ in the expression of $\Psi$ simplifies to 1, so that no contribution to the formula comes from the term $k=1$. This is not surprising, as it is related to a volatility factor for which the risk neutral measure exists, related to the martingale part that simplifies in the conditional expectation.

The situation is different for $k=2$, where it turns out that 
\begin{align*}
\frac{1}{2}+\frac{m_2}{2}-\alpha_2+\frac{\kappa_2\theta_2}{\sigma^2_2}&=1.
\end{align*}
Now the expression in  \eqref{eq:resultTheoremTransform} becomes
\begin{align*}
\Psi_{t,T}(\imag,0)=&\ e^{-Y^1(t)}
\frac{\exp \left( -\rho_2^2 a_2^1 b_2^1\tau -\frac{\rho_2 \kappa_2}{\sigma_2}(b_2^1-a_2^1\theta_2)\tau+\frac{\kappa_2^2\theta_2}{\sigma_2^2}\tau\right)
\left( \frac{\sqrt{A_2}}{\sigma_2^2}\right)^{m_2}}
{\left(\coth\left(\frac{\sqrt{A_2}\tau}{2}\right)+1\right)\left(\sinh\left(\frac{\sqrt{A_2}\tau}{2}\right)\right)^{1+m_2}\Gamma (1+m_2)}\\
&.V_2^{m_2}\exp\left( V_2 \frac{\sqrt{A_2}\tau}{2}\left(1-\coth\left(\frac{\sqrt{A_2}\tau}{2}\right)\right)\right)\\
& ._1F_1\left(1,1+m_2,V_2 \frac{2\sqrt{A_2}}{\sigma^2_2}\left(e^{\sqrt{A_2}\tau}-1\right)^{-1}\right)\\
=&\ \frac{\exp \left( -\rho_2^2 a_2^1 b_2^1\tau -\frac{\rho_2 \kappa_2}{\sigma_2}(b_2^1-a_2^1\theta_2)\tau+\frac{\kappa_2^2\theta_2}{\sigma_2^2}\tau\right)
\left( \frac{\sqrt{A_2}}{\sigma_2^2}\right)^{m_2}}
{\left(\coth\left(\frac{\sqrt{A_2}\tau}{2}\right)+1\right)\left(\sinh\left(\frac{\sqrt{A_2}\tau}{2}\right)\right)^{1+m_2}\Gamma (1+m_2)}\\
&.V_2^{m_2}\exp\left( -V_2 \frac{2\sqrt{A_2}}{\sigma^2_2}\left(e^{\sqrt{A_2}\tau}-1\right)^{-1}\right)\\
& ._1F_1\left(1,1+m_2,V_2 \frac{2\sqrt{A_2}}{\sigma^2_2}\left(e^{\sqrt{A_2}\tau}-1\right)^{-1}\right).
\end{align*}
Using the Kummer transformation $e^{-z}_1F_1(a,b,z)=_1F_1(b-a,b,-z)$, see Equation (13.1.27) in \cite{book_AbramStegun}, we get
\begin{align*}
\Psi_{t,T}(\imag,0)=&\ e^{-Y^1(t)}
\frac{\exp \left( -\rho_2^2 a_2^1 b_2^1\tau -\frac{\rho_2 \kappa_2}{\sigma_2}(b_2^1-a_2^1\theta_2)\tau+\frac{\kappa_2^2\theta_2}{\sigma_2^2}\tau\right)
\left( \frac{\sqrt{A_2}}{\sigma_2^2}\right)^{m_2}}
{\left(\coth\left(\frac{\sqrt{A_2}\tau}{2}\right)+1\right)\left(\sinh\left(\frac{\sqrt{A_2}\tau}{2}\right)\right)^{1+m_2}\Gamma (1+m_2)}\\
&.V_2^{m_{2}} {}_1F_1\left(m,1+m_2,-V_2 \frac{2\sqrt{A_2}}{\sigma^2_2}\left(e^{\sqrt{A_2}\tau}-1\right)^{-1}\right).
\end{align*}
We now use the relation ${}_1F_1(a,a+1,-z)=az^{-a}\gamma (a,z)$, see Equation (13.6.10) in \cite{book_AbramStegun}, where $\gamma(a,z)$ denotes the (lower) incomplete Gamma function defined as
\begin{align*}
\gamma (a,z):=\int_0^z t^{a-1} e^{-t}dt.
\end{align*}
After some computations we  finally obtain the price of a zero coupon bond as
\begin{align}
\label{zcb1}
P^1(t,T)&= e^{-r^1\tau} \frac{\gamma\left(m_2, V_2 \frac{2\sqrt{A_2}}{\sigma^2_2}\left(e^{\sqrt{A_2}\tau}-1\right)^{-1}\right)}{\Gamma(m_2)},
\end{align}
where we recall that $\sqrt{A_2}=\kappa_2+\sigma_2\rho_2a^1_2$ and $\tau=T-t$.

We observe that the function $\Psi_{t,T}(\imag,0)$ is decreasing with respect to $\tau=T-t$, so that we can analytically confirm that the failure of the risk neutral assumption has a direct impact on the price of long dated securities such as  zero coupon bonds. In Table~\ref{ZCBJPY} we compute the price of a zero coupon bond in the JPY market for which the risk neutral probability measure does not exist. We emphasize that we obtain prices that are markedly lower than those typically given by the risk neutral approach. As a comparison, in Table~\ref{ZCBUSD} we compare, for the USD currency denomination, the price obtained from the (trivial) risk neutral formula for deterministic short rate, and the corresponding application of the real-world pricing formula for the zero-coupon bond, which follows from Theorem~\ref{TheoremTransform}. Since, according to our calibration for the USD currency denomination, a risk-neutral measure $\QQ^{usd}$ exists, (see the corresponding Feller test in Table~\ref{fellertest}) we observe that the  risk neutral and the real-world pricing  formula coincide, thus, providing us with a concrete example of the correspondence we highlighted in \eqref{eq:realWorldEqualsRiskNeutral}.\\

Notice in \eqref{zcb1} that $P^1(t,T)< e^{-r^1\tau}$,  meaning that under the benchmark approach, hedging the unit amount of JPY turns out to be less expensive in comparison to the trivial risk neutral strategy consisting in putting the amount $e^{-r^1\tau}$ in the domestic  money market account $B^1$ as suggested under the risk neutral paradigm. On the other hand,  benchmarked risk minimization  requires a more sophisticated hedging strategy. In Subsection  \ref{sec:hedgingPart2} below, we describe such hedging 
 involving the GOP and two money market accounts. We are not introducing a sort of inconsistency in the pricing procedure since the benchmark approach allows  to associate formally the amount $$e^{-r^1\tau}$$ to the price of a zero coupon bond in our constant interest rate setting. Note that this is not the minimal possible price,   as there exists no equivalent  risk neutral probability measure. On the other hand, the real-world pricing formula provides under the given model the minimal possible price. The existence of those two self-financing price processes does not present an economically meaningful arbitrage opportunity because the best performing portfolio in the long run of this market, the GOP, does not explode in finite time. As 
 pointed out by \cite{Loewenstein00}, the classical no-arbitrage concept that is equivalent to the existence of an equivalent  risk neutral probability measure is too restrictive.\\

One may wonder if this from the classical theory perspective observed "\textit{pricing puzzle}" represents a pathological or marginal situation. To give an answer to this question, we repeat the pricing experiment, by employing the calibrated parameters of Table \ref{parameters2015}. 
Tables \ref{ZCBUSD201502} - \ref{ZCBUSD201506} report the values of the corresponding zero coupon bond prices for all the maturities available from a major provider. Note that also in this experiment, starting from maturities longer than ten years, the differences between the classical and the benchmark approach become substantial in so far as benchmarked risk minimization leads to much lower prices. By comparing the violations of the Feller test and the pricing results, we observe a one to one correspondence between the two phenomena.

\subsection{Dynamic Hedging of a Long-Dated Zero Coupon Bond}\label{sec:hedgingPart2}

In this subsection,
 we develop the second part of our analysis that involves the determination of the dynamic part of the benchmarked risk minimizing strategy, namely the holdings in the benchmarked primary security accounts and the non-self-financing/monitoring part. First of all we compute the SDE for the martingale representation associated to the benchmarked price of the zero coupon bond, that we denote by $\hat{P}^1(t,T)=P^1(t,T)/D^1(t)$. We obtain
\begin{align*}
d\hat{P}^1(t,T)=&\ -\Psi_{t,T}(\imag,0)\left(a^1_1\sqrt{V_1(t)}+\frac{b_1^1}{\sqrt{V_1(t)}}\right)\de Z_1(t)\\
& -\Psi_{t,T}(\imag,0)\left(a^1_2\sqrt{V_2(t)}+\frac{b_2^1}{\sqrt{V_2(t)}}\right)\de Z_2(t)\\
&+ \frac{e^{-r^1\tau}}{D_1(t)\Gamma(m_2)} \left( \frac{2\sqrt{A_2}}{\sigma_2^2\left(e^{\sqrt{A_2}\tau}-1\right)}\right)^{m_2}\\
&. e^{V_2 \frac{2\sqrt{A_2}}{\sigma^2_2}\left(e^{\sqrt{A_2}\tau}-1\right)^{-1}}V_2^{m_2-1/2}\sigma_2\left(\rho_2 \de Z_2(t)+\sqrt{1-\rho_2^2}\de Z_2^\perp (t)\right),
\end{align*}
which we rewrite, in line with \cite{duplaten14}, Proposition 7.1, as 
\begin{align}
d\hat{P}^1(t,T)&=x_1(t)dZ_1(t)+x_2(t)dZ_2(t)+x_3(t)dZ_2^\perp (t),
\end{align}
with the implicit introduction of the respective shorthand notations $x_1,x_2,x_3$. Let us recall that the components of the vector of primary benchmarked assets $\hat{\boldsymbol{B}}$ are
\begin{align*}
\hat{B}^1(t)&=\frac{B^1(t)}{D_1(t)},\\
\hat{B}^{2}(t)&=\frac{B^2(t)S^{1,2}(t)}{D_1(t)}=\frac{B^2(t)}{D_2(t)},\\
\hat{B}^{3}(t)&=\frac{B^3(t)S^{1,3}(t)}{D_1(t)}=\frac{B^3(t)}{D_3(t)},
\end{align*}
where, for $i=1,2,3$ we have
\begin{align*}
\de\hat{B}^{i}(t)&=-\hat{B}^{i}(t)\left(a^i_1\sqrt{V_1(t)}+\frac{b_1^i}{\sqrt{V_1(t)}}\right)\de Z_1(t)\\
&\quad-\hat{B}^{i}(t)\left(a^i_2\sqrt{V_2(t)}+\frac{b_2^i}{\sqrt{V_2(t)}}\right)\de Z_2(t).
\end{align*}
We introduce the $3\times 3$ matrix $\Phi^{i,k}_t$ as in Equation (7.2) of \cite{duplaten14} in the form
\begin{equation}
\Phi^{i,k}(t)= 
\begin{cases}
 a^i_k\sqrt{V_k(t)}+\frac{b_k^i}{\sqrt{V_k(t)}}, & k=1,2;\  i=1,2,3 \\ 
 1, & k=3;\  i=1,2,3. \\  \end{cases}
\end{equation}
We now define the vector $\boldsymbol{\xi}(t)=\left( -x_1(t)  , -x_2(t),  \Psi_{t,T}(\imag,0)-x_3(t)\right)^\top$,
so that the investment in the self-financing part is given as follows:
\begin{align}
\label{eq:explictSelfFinancingPart}
\boldsymbol{\vartheta}^1(t)= Diag^{-1}(\hat{\boldsymbol{B}}^1(t))\left(\left(\Phi^{i,k}_t\right)^\top\right)^{-1} \boldsymbol{\xi}(t),
\end{align}
see Equation (7.4) in \cite{duplaten14}. The non-self-financing or monitoring part of the strategy is then given by the local martingale
\begin{align}
\label{eq:explicitMonitoringPart}
\eta^1(t)=\int_0^t x_3(s) \de Z_2^\perp (s),
\end{align}
which is orthogonal to $\hat{\boldsymbol{B}}$. Thus, the conditional expectation \eqref{zcb1}, the vector \eqref{eq:explictSelfFinancingPart} and the local martingale \eqref{eq:explicitMonitoringPart} fully determine the benchmarked risk minimizing strategy for the zero coupon bond in the Japanese currency denomination.

The vector $\boldsymbol{\vartheta}^1(t)$ describes the number of units held in the primary security accounts, and the non-hedgeable part $\eta^1(t)$ in \eqref{eq:explicitMonitoringPart} forms a local martingale orthogonal to the benchmarked primary security accounts. Recall that orthogonality means that the product  $\eta^1(t)\hat{\boldsymbol{B}}(t)$ forms a vector local martingale. In this sense the hedge error $\eta^1(t)$ is minimised. A special case of the 4/2 model is the 3/2 model or minimal market model, where one can see that in this case the hedge is involving only the GOP and the domestic money market account, see \cite{bookplaten10}. As shown in   \cite{martinoplaten13}, for many years to maturity the above strategy invests mainly in the GOP and slides later more and more into investing in the domestic money market account. Thus, the above strategy makes financial planning rigorous under the 4/2 model. Similarly, one could now study real-world pricing and hedging of European put options on the GOP and other derivatives, which may become significantly less expensive than their formally risk neutral priced counterparts.

\section{Conclusion} 

In this paper we introduced a more general modeling world than available under the classical no-arbitrage paradigm in finance and insurance. In the context of a flexible model for exchange  rates calibrated to market data, we showed how to hedge a zero coupon bond with a smaller amount of initial capital than required under the classical risk neutral paradigm. Moreover, the corresponding  hedging strategy is no longer only investing in fixed income as in the risk neutral classical world: on the contrary it suggests to invest first primarily in the GOP, that is risky securities and, when approaching more and more the maturity date, it increases also more and more the fraction invested in fixed income. Based on the benchmark approach, this gives a quantitative validation to the  conventional wisdom of financial planners. Their rule of thumb has been typically described in the economic literature via models of optimal life-cycle consumption expenditures and saving, see e.g. \cite{Gourinchas02}, whereas our approach can be seen as a market implied alternative. It does not postulate a particular form for the utiliy function of a representative agent and only asks for the minimal possible price process that can be reasonably hedged to deliver the bond payoff. We also showed that the risk neutral price of a long dated zero coupon bond can be significantly higher than the minimal possible price obtained using the  benchmark approach under the real world probability measure. 

The main mathematical phenomenon underlying these surprising effects is the potential  strict supermartingale property of benchmarked savings accounts under the real world probability measure, as suggested in several cases by our calibration exercise on the foreign exchange option market. 
The presented results represent only some example for new phenomena that can be captured under the benchmark approach.\\
There is ample room for many new research questions and interesting related studies in insurance. Our stochastic volatility model allows for a non linear market price of volatility risk, a desiderable property in order to explain non-linear effects under the real-world probability measure for the risk factors. Second, we performed our calibrations at some particular trading dates. However,  much more information could be extracted from a statistical estimation on a time series of option prices. The proposed 4/2 model  allows for the possibility of a failure of the classical risk neutral pricing assumption and  deserves more theoretical and numerical investigation. In view of this, our paper aims to  stimulate further econometric studies.

\appendix

\section{Proof of Lemma \ref{lemmacasilambda}}\label{Appendixlemmacasilambda}

Let us first observe from \eqref{eq:basicHestonDynamics} and \eqref{exchangerates} that in the present setting we have $\pi^1(t)=\lambda(t)$ and $\pi^2(t)=\lambda(t)-\sqrt{V(t)}$. 

Given the general specification of the GOP dynamics, we immediately have for Case \ref{lemma1:point1}:
\begin{align*}
\frac{\de D^1(t)}{D^1(t)}&=  r^1(t)\de t+ a\sqrt{V(t)}\left(a\sqrt{V(t)}\de t+\de {Z}(t)\right),\\
\frac{\de D^2(t)}{D^2(t)}&=  r^2(t)\de t+ (a-1)\sqrt{V(t)}\left((a-1)\sqrt{V(t)}\de t+\de {Z}(t)\right),
\end{align*}
which are both Heston-type models. 

For Case \ref{lemma1:point3}., we have
\begin{align*}
\frac{\de D^1(t)}{D^1(t)}&=  r^1(t)\de t+ \frac{b}{\sqrt{V(t)}}\left(\frac{b}{\sqrt{V(t)}}\de t+\de {Z}(t)\right),\\
\frac{\de D^2(t)}{D^2(t)}&=  r^2(t)\de t+ \left(\frac{b}{\sqrt{V(t)}}-\sqrt{V(t)}\right)\left(\left(\frac{b}{\sqrt{V(t)}}-\sqrt{V(t)}\right)\de t+\de {Z}(t)\right),
\end{align*}
so that $D^1$ follows a $3/2$ model (see \cite{heston97}, \cite{platen97}) whereas $D^2$ follows a $4/2$ type model (see \cite{grasselli13}).
On the other hand, for Case \ref{lemma1:point2}., we have
\begin{align*}
\frac{\de D^1(t)}{D^1(t)}&=  r^1(t)\de t+\left(a\sqrt{V(t)}+ \frac{b}{\sqrt{V(t)}}\right)\left(\left(a\sqrt{V(t)}+ \frac{b}{\sqrt{V(t)}}\right)\de t+\de {Z}(t)\right),\\
\frac{\de D^2(t)}{D^2(t)}&=  r^2(t)\de t+\left((a-1)\sqrt{V(t)}+ \frac{b}{\sqrt{V(t)}}\right)\\
&\quad\quad\quad\quad\quad\quad\times\left(\left((a-1)\sqrt{V(t)}+ \frac{b}{\sqrt{V(t)}}\right)\de t+\de {Z}(t)\right),
\end{align*}
which are both $4/2$ type processes.

\section{Proof of Theorem \ref{TheoremTransform}}\label{AppendixTheoremTransform}
We start by directly considering the conditional expectation for $\Psi_t( \boldsymbol{\zeta})$. We perform several manipulations so that the results of \cite{grasselli13} can be directly employed. To this end, we parametrize the correlation structure from Assumption~\ref{assumptionCorrelation} by writing
\begin{align*}
\boldsymbol{Z}(s)=Diag(\boldsymbol{\rho})\boldsymbol{W}(s) +\sqrt{I_d-Diag(\boldsymbol{\rho})^2}\boldsymbol{W}^{\perp}(s),
\end{align*}
so that we have
\begin{align*}
\Psi_t( \boldsymbol{\zeta})&=\exp\left\{\sum_{i=1}^N \imag\zeta^i \left(Y^i(t)+h^i (T-t) \right)\right\}\\
&\quad\times\mathbb{E}\left[\left	.\exp\left\{\sum_{i=1}^N\left(\imag\zeta^i\int_t^T \langle \bsH^i, \boldsymbol{V}(s)\rangle +\langle \bsG^i, \boldsymbol{V}(s)^{-1}\rangle\de s\right.\right.\right.\right.\\
&\quad\left.\left.\left.\left.+\frac{\imag\zeta^i}{2}\int_t^T\boldsymbol{\pi}^i(s)^\top \boldsymbol{\pi}^i(s) \de s+
\imag\zeta^i\int_t^T\boldsymbol{\pi}^i(s)^\top \de \boldsymbol{Z}(s)\right)\right\}\right|\cF_t\right]\\
&=\exp\left\{\sum_{i=1}^N \imag\zeta^i \left(Y^i(t)+h^i (T-t) \right)\right\}\\
&\quad\times\mathbb{E}\left[\left.\exp\left\{\sum_{i=1}^N\left(\imag\zeta^i\int_t^T \langle \bsH^i, \boldsymbol{V}(s)\rangle +\langle \bsG^i, \boldsymbol{V}(s)^{-1}\rangle\de s\right.\right.\right.\right.\\
&\quad\left.\left.+\frac{ \imag\zeta^i}{2}\int_t^T\boldsymbol{\pi}^i(s)^\top \boldsymbol{\pi}^i(s) ds+
\int_t^T   \imag\zeta^i \boldsymbol{\pi}^i(s)^\top Diag(\boldsymbol{\rho})d\boldsymbol{W}(s)\right)\right\}\\
&\quad\left.\left.\times \mathbb{E}\left[\left.\exp\left\{\sum_{i=1}^N \int_t^T
  \imag\zeta^i \boldsymbol{\pi}^i(s)^\top\sqrt{I_d -Diag(\boldsymbol{\rho})^2}\de\boldsymbol{W}^{\perp}(s)\right\}\right|\boldsymbol{V}\right]\right|\cF_t\right]\\
  &=\exp\left\{\sum_{i=1}^N \imag\zeta^i \left(Y^i(t)+h^i (T-t) \right)\right\}\\
&\quad\times\mathbb{E}\left[\left.\exp\left\{\sum_{i=1}^N\left(\imag\zeta^i\int_t^T \langle \bsH^i, \boldsymbol{V}(s)\rangle +\langle \bsG^i, \boldsymbol{V}(s)^{-1}\rangle\de s\right.\right.\right.\right.\\
&\quad+\frac{ \imag\zeta^i}{2}\int_t^T\boldsymbol{\pi}^i(s)^\top \boldsymbol{\pi}^i(s) ds+
\int_t^T   \imag\zeta^i \boldsymbol{\pi}^i(s)^\top Diag(\boldsymbol{\rho})d\boldsymbol{W}(s)\\
&\quad\left.\left.\left.\left.+\frac{1}{2}\int_t^T\left(\sum_{i=1}^N \imag\zeta^i\boldsymbol{\pi}^i(s)^\top\right)\left(I_d -Diag(\boldsymbol{\rho})^2\right)\left(\sum_{i=1}^N \imag\zeta^i\boldsymbol{\pi}^i(s)\right)\de s\right)\right\}\right|\cF_t\right]\\
\displaybreak
  &=\exp\left\{\sum_{i=1}^N \imag\zeta^i \left(Y^i(t)+h^i (T-t) \right)\right\}\\
&\quad\times\prod_{k=1}^d\mathbb{E}\left[\left.\exp\left\{\sum_{i=1}^N\left(\imag\zeta^i\int_t^T H^i_k, V_k(s)+G^i_k, V_k(s)^{-1}\de s\right.\right.\right.\right.\\
&\quad+\frac{ \imag\zeta^i}{2}\int_t^T\boldsymbol{\pi}^i_k(s)^\top \boldsymbol{\pi}^i_k(s) ds+
\int_t^T   \imag\zeta^i \boldsymbol{\pi}^i_k(s)^\top \rho_k \de l{W}_k(s)\\
&\quad\left.\left.\left.\left.+\frac{1-\rho^2_k}{2}\int_t^T\left(\sum_{i=1}^N \imag\zeta^i\boldsymbol{\pi}^i_k(s)\right)^2\de s\right)\right\}\right|\cF_t\right].
\end{align*}
Let us rewrite the final term in the exponent as follows
\begin{align*}
\frac{1-\rho^2_k}{2}&\int_t^T\left(\sum_{i=1}^N \imag\zeta^i\boldsymbol{\pi}^i_k(s)\right)^2\de s\\
&=\frac{1-\rho^2_k}{2}\left(\int_t^T\left(\sum_{i=1}^N \imag\zeta^ia^i_k\sqrt{V_k(s)}\right)^2+\sum_{i=1}^N \sum_{j=1}^N \imag\zeta^i\imag\zeta^j(a^i_kb^j_k+a^j_kb^i_k)\right.\\
&\quad\quad\quad\quad\left.\left(\sum_{i=1}^N \imag\zeta^ib^i_k\frac{1}{\sqrt{V_k(s)}}\right)^2\de s\right)\\
&=\frac{1-\rho^2_k}{2}\left(\int_t^T V_k(s)\de s\left(\sum_{i=1}^N \sum_{j=1}^N \imag\zeta^i\imag\zeta^ja^i_ka^j_k\right)+\sum_{i=1}^N \sum_{j=1}^N \imag\zeta^i\imag\zeta^j(a^i_kb^j_k+a^j_kb^i_k)\right.\\
&\quad\quad\quad\quad\left.+\int_t^T\frac{1}{V_k(s)}\de s\left(\sum_{i=1}^N \sum_{j=1}^N \imag\zeta^i\imag\zeta^jb^i_kb^j_k\right)\right).
\end{align*}
Hence we have
\begin{align*}
\Psi_t( \boldsymbol{\zeta})&=\exp\left\{\sum_{i=1}^N \imag\zeta^i \left(Y^i(t)+h^i (T-t) \right)\right\}\\
&\quad\times\prod_{k=1}^d\exp\left\{\frac{1-\rho^2_k}{2}\sum_{i=1}^N \sum_{j=1}^N \imag\zeta^i\imag\zeta^j(a^i_kb^j_k+a^j_kb^i_k)(T-t)+\sum_{i=1}^N\imag\zeta^ia^i_kb^i_k(T-t)\right\}\\
&\quad\times\mathbb{E}\left[\exp\left\{\sum_{i=1}^N\left(\imag\zeta^i\int_t^T H^i_kV_k(s)+\frac{G^i_k}{V_k(s)}\de s+\frac{\imag\zeta^i}{2}(a^i_k)^2\int_t^TV_k(s)\de s\right.\right.\right.\\
&\quad+\frac{\imag\zeta^i}{2}(b^i_k)^2\int_t^T\frac{1}{V_k(s)}\de s+\imag\zeta^i\rho_ka^i_k\int_t^T\sqrt{V_k(s)}dW_k(s)\\
&+\imag\zeta^i\rho_kb^i_k\int_t^T\frac{1}{\sqrt{V_k(s)}}dW_k(s)+\frac{1-\rho^2_k}{2}\left(\int_t^T V_k(s)\de s\left(\sum_{i=1}^N \sum_{j=1}^N \imag\zeta^i\imag\zeta^ja^i_ka^j_k\right)\right.\\
&\left.\left.\left.\left.\left.+\int_t^T\frac{1}{V_k(s)}\de s\left(\sum_{i=1}^N \sum_{j=1}^N \imag\zeta^i\imag\zeta^jb^i_kb^j_k\right)\right)\right)\right\}\right|\cF_t\right]\\
\displaybreak
&=\exp\left\{\sum_{i=1}^N \imag\zeta^i \left(Y^i(t)+h^i (T-t) \right)\right\}\\
&\quad\times\prod_{k=1}^d\exp\left\{\frac{1-\rho^2_k}{2}\sum_{i=1}^N \sum_{j=1}^N \imag\zeta^i\imag\zeta^j(a^i_kb^j_k+a^j_kb^i_k)(T-t)+\sum_{i=1}^N\imag\zeta^ia^i_kb^i_k(T-t)\right\}\\
&\quad\mathbb{E}\left[\exp\left\{\sum_{i=1}^N\left(\imag\zeta^iH^i_k+\frac{\imag\zeta^i}{2}(a^i_k)^2+\frac{1-\rho^2_k}{2}\sum_{j=1}^N\imag\zeta^i\imag\zeta^ja^i_ka^j_k\right)\int_t^TV_k(s)\de s\right.\right.\\
&+\sum_{i=1}^N\left(\imag\zeta^iG^i_k+\frac{\imag\zeta^i}{2}(b^i_k)^2+\frac{1-\rho^2_k}{2}\sum_{j=1}^N\imag\zeta^i\imag\zeta^jb^i_kb^j_k\right)\int_t^T\frac{1}{V_k(s)}\de s\\
&\left.\left.\left.+\sum_{i=1}^N\left(\imag\zeta^i\rho_ka^i_k\int_t^T\sqrt{V_k(s)}dW_k(s)+\imag\zeta^i\rho_kb^i_k\int_t^T\frac{1}{\sqrt{V_k(s)}}dW_k(s)\right)\right\}\right|\cF_t\right].
\end{align*}
Let us recall the well-known relations
\begin{align*}
\int_t^T \sqrt{V_k(s)}dW_k(s)=&\frac{1}{\sigma_k}(V_k(T)-V_k(t))-\int_t^T\frac{\kappa_k}{\sigma_k} ( \theta_k - V_k(s))ds
\end{align*}
and 
\begin{align*}
\int_t^T \frac{1}{\sqrt{V_k(s)}}dW_k(s)=& \frac{1}{\sigma_k}\ln\frac{V_k(T)}{V_k(t)} -\int_t^T\frac{\frac{\kappa_k\theta_k}{\sigma_k}-\frac{\sigma_k}{2}}{V_k(s)}ds
+\frac{\kappa_k}{\sigma_k}(T-t).
\end{align*}
Substitution of the above relations yields
\begin{align*}
\Psi_t( \boldsymbol{\zeta})&=\exp\left\{\sum_{i=1}^N \imag\zeta^i \left(Y^i(t)+h^i (T-t) \right)\right\}\\
&\quad\times\prod_{k=1}^d\exp\left\{\frac{1-\rho^2_k}{2}\sum_{i=1}^N \sum_{j=1}^N \imag\zeta^i\imag\zeta^j(a^i_kb^j_k+a^j_kb^i_k)(T-t)+\sum_{i=1}^N\imag\zeta^ia^i_kb^i_k(T-t)\right\}\\
&\quad\times\mathbb{E}\left[\exp\left\{\sum_{i=1}^N\left(\imag\zeta^iH^i_k+\frac{\imag\zeta^i}{2}(a^i_k)^2+\frac{1-\rho^2_k}{2}\sum_{j=1}^N\imag\zeta^i\imag\zeta^ja^i_ka^j_k+\imag\zeta^i\rho_ka^i_k\frac{\kappa_k}{\sigma_k}\right)\int_t^TV_k(s)\de s\right.\right.\\
&\quad+\sum_{i=1}^N\left(\imag\zeta^iG^i_k+\frac{\imag\zeta^i}{2}(b^i_k)^2+\frac{1-\rho^2_k}{2}\sum_{j=1}^N\imag\zeta^i\imag\zeta^jb^i_kb^j_k-\frac{\imag\zeta^i\rho_kb^i_k}{\sigma_k}\left(\kappa_k\theta_k-\frac{\sigma^2_k}{2}\right)\right)\int_t^T\frac{1}{V_k(s)}\de s\\
&\quad+\sum_{i=1}^N\left[\imag\zeta^i\rho_ka^i_k\left(\frac{V_k(T)-V_k(t)}{\sigma_k}-\frac{\kappa_k\theta_k}{\sigma_k}(T-t)\right)\right.\\
&\quad+\left.\left.\left.\left.\imag\zeta^i\rho_kb^i_k\left(\frac{1}{\sigma_k}\log\left(\frac{V_k(T)}{v_k(t)}\right)+\frac{\kappa_k}{\sigma_k}(T-t)\right)\right]\right\}\right|\cF_t\right]\\
&=\exp\left\{\sum_{i=1}^N \imag\zeta^i \left(Y^i(t)+h^i (T-t) \right)\right\}\\
&\quad\times\prod_{k=1}^d\exp\left\{\sum_{i=1}^N (T-t)\left(\frac{1-\rho^2_k}{2}\sum_{j=1}^N \imag\zeta^i\imag\zeta^j(a^i_kb^j_k+a^j_kb^i_k)\right.\right.\\
&\quad\quad\quad\quad\quad\quad\left.+\imag\zeta^ia^i_kb^i_k+\imag\zeta^i\frac{\kappa_k\rho_k}{\sigma_k}\left(b^i_k-\theta_ka^i_k\right)\right)\\
&\quad\quad\quad\quad\quad\quad\Bigg.-V_k(t)\imag\zeta^i\frac{\rho_k a^i_k}{\sigma_k}-\imag\zeta^i\frac{\rho_kb^i_k}{\sigma_k}\log(V_k(t))\Bigg\}\\
&\quad\times\mathbb{E}\left[\left.V_k(T)^{-\alpha_k}e^{-\lambda_kV_t(T)-\mu_k\int_t^T V_k(s)\de s-\nu_k\int_t^T\frac{1}{V_k(s)}\de s}\right|\cF_t\right],
\end{align*}
where we introduced \eqref{alphak}-\eqref{nuk}. The result is obtained via an iterated application of Theorem \ref{general} from \cite{grasselli13} with $a=\kappa_k\theta_k, \ b=\kappa_k, \sigma = \frac{\sigma_k^2}{2}$ and $\tau=T-t$
\begin{align}
\begin{aligned}
&\mathbb{E} \left[
 V_k(T)^{-\alpha_k}e^{-\lambda_k V_k(T)- \mu_k \int_t^T V_k(s) ds -\nu_k \int_t^T\frac{ds}{V_k(s)}}\right]\\
&= \left( \frac{\beta_k(\tau,V_k(t))}{2}\right)^{m_k+1} V_k(t)^{-\frac{\kappa_k\theta_k}{\sigma_k^2}}
(\lambda_k+K_k(\tau))^{-\left(\frac{1}{2}+\frac{m_k}{2}-\alpha_k +\frac{\kappa_k\theta_k}{\sigma_k^2}\right)}\\
& \times e^{\frac{1}{\sigma_k^2}\left( \kappa_k^2\theta_k(T-t) - \sqrt{A_k}V_k(t)\coth\left(\frac{\sqrt{A_k}(T-t)}{2}\right)+\kappa_k V_k(t)\right)}
\frac{\Gamma \left(\frac{1}{2}+\frac{m_k}{2}-\alpha_k +\frac{\kappa_k\theta_k}{\sigma_k^2}\right) }{\Gamma (m_k+1)}\\
& \times{}_1F_1 \left( \frac{1}{2}+\frac{m_k}{2}-\alpha_k +\frac{\kappa_k\theta_k}{\sigma_k^2}, m_k+1, \frac{\beta_k^2(\tau,V_k(t))}{4(\lambda_k +K_k(\tau))}\right),\label{applicationtheorem}
\end{aligned}
\end{align}
with
\begin{align}
m_k &=\frac{2}{\sigma_k^2}\sqrt{\left(\kappa_k\theta_k-\frac{\sigma_k^2}{2}\right)^2+2\sigma_k^2 \nu_k},\\
A_k&=\kappa_k^2 +2\mu_k\sigma_k^2,\\
\beta_k(\tau,V_k(t)) &= \frac{2\sqrt{A_k  V_k(t)}}{\sigma_k^2 \sinh\left(\frac{\sqrt{A_k}(T-t)}{2}\right)},\\
K_k(\tau)&= \frac{1}{\sigma_k^2}\left( \sqrt{A_k}\coth\left(\frac{\sqrt{A_k}(T-t)}{2}\right)+\kappa_k\right).
\end{align}
Concerning the convergence set, we have
\begin{align*}
\Excond{}{\left|e^{\imag\langle\zeta,Y(T)\rangle}\right|}{\cF_t}=\Excond{}{e^{-\langle\Im(\zeta),Y(T)\rangle}}{\cF_t}
\end{align*}
so that we are induced to study the regularity at the point $-\Im(\boldsymbol{\zeta})$. Now, for each $k=1,\ldots,d$ Equation \eqref{conditionmu} up to \eqref{conditionLambda} from \cite{grasselli13} allow us to obtain conditions i-iv on $f^l_k(-\Im(\boldsymbol{\zeta})), \ l=1,\ldots,4$. Also, \eqref{eq:explosionTime} can be directly inferred from \eqref{eq:generalExplosionTime}. The conclusions on the convergence set follow along the arguments of \cite{grasselli13}.

\section{Proof of Lemma \ref{criterioncf}}\label{Appendixcriterioncf}

Let $-1<\alpha<0$, then we have
\begin{align*}
\phi^{i,j}(z)&=D^i(t)\mathbb{E}\left[\left.\frac{1}{D^i(T)}e^{\mathtt{i}z x^{i,j}(T)}\right|\cF_t\right]\\
&\leq D^i(t)\mathbb{E}\left[\left.\frac{1}{D^i(T)}e^{-\alpha x^{i,j}(T)}\right|\cF_t\right] =  D^i(t)\mathbb{E}\left[\left.\frac{1}{D^i(T)}\left(\frac{D^{i}(T)}{D^{j}(T)}\right)^{-\alpha}\right|\cF_t\right].
\end{align*}
We introduce, in line with Section 7.3.3 of \cite{bookjanplaten13}, a forward measure $\mathbb{P}^T$ with associated num\'eraire given by the zero coupon bond price process $P^i(.,T)=\left\{P^i(t,T), 0\leq t\leq T\right\}$. Recall that the minimal possible benchmarked price of a derivative is a martingale under the benchmark approach, hence the benchmarked zero coupon bond price allows us to define the following density process
\begin{align*}
\left.\frac{\partial \mathbb{P}^T}{\partial \mathbb{P}}\right|_{\cF_t}=Z_t:=\frac{P^i(t,T)D^i(0)}{P^i(0,T)D^i(t)}, \ t\leq T.
\end{align*}
Using such a forward measure, in conjunction with Jensen's inequality for concave functions, allows us to write
\begin{align*}
&D^i(t)\mathbb{E}\left[\left.\frac{1}{D^i(T)}\left(\frac{D^{i}(T)}{D^{j}(T)}\right)^{-\alpha}\right|\cF_t\right]=P^i(t,T)\mathbb{E}^{\mathbb{P}^T}\left[\left.\left(\frac{D^{i}(T)}{D^{j}(T)}\right)^{-\alpha}\right|\cF_t\right]\\
&\leq P^i(t,T)\mathbb{E}^{\mathbb{P}^T}\left[\left.\left(\frac{D^{i}(T)}{D^{j}(T)}\right)\right|\cF_t\right]^{-\alpha}=P^i(t,T)\left(\frac{D^i(t)}{P^i(t,T)}\right)^{-\alpha}\mathbb{E}\left[\left.\left(\frac{1}{D^{j}(T)}\right)\right|\cF_t\right]^{-\alpha}\\
&=P^{i}(t,T)\left(\frac{D^i(t)}{D^j(t)}\right)^{-\alpha}\left(\frac{P^j(t,T)}{P^i(t,T)}\right)^{-\alpha}=P^{i}(t,T)\left(S^{i,j}(t)\right)^{-\alpha}\left(\frac{P^j(t,T)}{P^i(t,T)}\right)^{-\alpha}.
\end{align*}
which proves the finiteness of the characteristic function. The analytic extension to the set $\mathcal{Z}$ is then a direct consequence of Theorem 7.1.1 in \cite{lukacs}.

\section{The General Transform of the CIR Process}
For  convenience we recall the formula in \cite{grasselli13} giving the general transform for the CIR process $X=\left\{ X_t,t\in [0,T] \right\}$ that extends formulas in \cite{cradlennox09} to the case 
\begin{align}
\label{eq:generalExpectationCIR}
\mathbb{E}_t \left[
 X_T^{-\alpha}e^{-\lambda X_T - \mu \int_t^T X_s ds -\nu \int_t^T\frac{ds}{X_s}}\right].
\end{align}

\begin{theorem}(\cite{grasselli13})\label{general}
Assume that $X_t$ satisfies the SDE  
\begin{equation}
dX_t=(a-bX_t)dt+\sqrt{2\sigma X_t}dW_t,\ \ X_0=x>0,
\end{equation}
with $a,b,\sigma >0$ and $a >\sigma$ (Feller condition). Given $\alpha, \mu, \nu,\lambda$ such that
\begin{align}
b^2 +4\mu\sigma &\geq0 \label{conditionmu}\\
(a-\sigma)^2+4\sigma \nu\geq 0\\
\frac{1}{2}+\frac{a}{2\sigma} + \frac{1}{2\sigma}\sqrt{(a-\sigma)^2+4\sigma \nu}&>\alpha,
\label{condition}\\
\lambda\geq -\frac{\sqrt{b^2+4\mu\sigma}+b}{2\sigma}\label{conditionLambda}
\end{align}
then the transform \eqref{eq:generalExpectationCIR} is well defined for all $t\geq 0$ and is given by
\begin{align}
\begin{split}
\mathbb{E} &\left[
 X_t^{-\alpha}e^{-\lambda X_t - \mu \int_0^t X_s ds -\nu \int_0^t\frac{ds}{X_s}}\right]\\
 &=\left( \frac{\beta(t,x)}{2}\right)^{m+1} x^{-\frac{a}{2\sigma}}(\lambda+K(t))^{-\left(\frac{1}{2}+\frac{m}{2}-\alpha +\frac{a}{2\sigma}\right)}\\
&\times e^{\frac{1}{2\sigma}\left( abt - \sqrt{A}x\coth\left(\frac{\sqrt{A}t}{2}\right)+bx\right)}
\frac{\Gamma \left(\frac{1}{2}+\frac{m}{2}-\alpha +\frac{a}{2\sigma}\right) }{\Gamma (m+1)}\\
&\times{}_1F_1 \left( \frac{1}{2}+\frac{m}{2}-\alpha +\frac{a}{2\sigma}, m+1, \frac{\beta^2(t,x)}{4(\lambda +K(t))}\right),
\end{split}
\end{align}
with
\begin{align}
m &= \frac{1}{\sigma}\sqrt{(a-\sigma)^2+4\sigma \nu},\\
A&= b^2 +4\mu\sigma,\\
\beta(t,x) &= \frac{\sqrt{Ax}}{\sigma \sinh\left(\frac{\sqrt{A}t}{2}\right)},\\
K(t)&= \frac{1}{2\sigma}\left( \sqrt{A}\coth\left(\frac{\sqrt{A}t}{2}\right)+b\right).
\end{align}
If 
\begin{align}
\lambda< -\frac{\sqrt{b^2+4\mu\sigma}+b}{2\sigma},\label{conditionLambdaExplosion}
\end{align}
then the transform is well defined for all $t< t^\star$, with
\begin{align}
\label{eq:generalExplosionTime}
t^\star=\frac{1}{\sqrt{A}}\log\left(1-\frac{2\sqrt{A}}{b+2\sigma\lambda+\sqrt{A}}\right).
\end{align}
\end{theorem}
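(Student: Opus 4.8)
The plan is to realise the expectation as the value function of a parabolic problem and then to peel off the three perturbing terms one at a time, each by a transformation tailored to its analytic type. Writing $\tau$ for the running time and $u(\tau,x)=\mathbb{E}[X_\tau^{-\alpha}e^{-\lambda X_\tau-\mu\int_0^\tau X_s\,ds-\nu\int_0^\tau ds/X_s}\mid X_0=x]$, the Feynman--Kac formula identifies $u$ as the solution of $\partial_\tau u=\sigma x\,\partial_{xx}u+(a-bx)\partial_x u-(\mu x+\nu/x)u$ with initial datum $u(0,x)=x^{-\alpha}e^{-\lambda x}$. The point is that the three potentials act on structurally different scales: $\mu x$ is an affine potential that only renormalises the drift, $\nu/x$ is a singular potential that shifts the index of the underlying process, and the terminal weight $x^{-\alpha}e^{-\lambda x}$ is ultimately handled by a single Laplace--Mellin integration against the resulting density.

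First I would remove the singular term by a power (Doob-type) substitution $u=x^{c}w$. A direct computation shows that $x^{c}$ conjugates the generator into another CIR generator plus a residual $(\sigma c(c-1)+ac)x^{-1}$ potential, so choosing $c$ as the root of the indicial quadratic $\sigma c^2+(a-\sigma)c-\nu=0$, namely $c=\tfrac12+\tfrac m2-\tfrac{a}{2\sigma}$ with $m=\tfrac1\sigma\sqrt{(a-\sigma)^2+4\sigma\nu}$, cancels the $\nu/x$ term exactly and turns the effective drift constant into $\tilde a=\sigma(m+1)$, i.e.\ an underlying process of Bessel index $m$; reality of $m$ is the hypothesis $(a-\sigma)^2+4\sigma\nu\geq0$, and this substitution is the source both of the prefactor $x^{-a/(2\sigma)}$ and of the recurring mixed exponent $\tfrac12+\tfrac m2-\alpha+\tfrac{a}{2\sigma}$. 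Next I would dispose of the affine term $\mu x$: for the resulting CIR problem the weight $e^{-\mu\int_0^\tau X\,ds}$ merely rescales the transition density, replacing the reversion speed $b$ by $\sqrt A=\sqrt{b^2+4\mu\sigma}$ (this is the classical exponential-affine Laplace transform of the CIR process, equivalently the solution of the attached scalar Riccati equation), which requires $A\geq0$, condition \eqref{conditionmu}, and produces the functions $\sinh(\sqrt A\tau/2)$, $\coth(\sqrt A\tau/2)$ and the factor $\exp\{\tfrac{1}{2\sigma}(ab\tau-\sqrt A x\coth(\sqrt A\tau/2)+bx)\}$.

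After these two reductions the problem is to compute $\int_0^\infty y^{-\alpha-c}e^{-\lambda y}\,p(y)\,dy$, where $p$ is the index-$m$, scale-$\sqrt A$ noncentral-chi-square density, proportional to $y^{m/2}e^{-K(\tau)y}I_m(\beta\sqrt y)$ with $\beta=\beta(\tau,x)$ and $K=K(\tau)$ as recorded in the statement. Expanding $I_m$ in its power series and integrating term by term with the elementary identity $\int_0^\infty y^{s-1}e^{-py}\,dy=\Gamma(s)p^{-s}$ collapses the sum into the confluent hypergeometric series ${}_1F_1(\tfrac12+\tfrac m2-\alpha+\tfrac{a}{2\sigma},\,m+1,\,\tfrac{\beta^2}{4(\lambda+K)})$, and simultaneously produces the quotient $\Gamma(\tfrac12+\tfrac m2-\alpha+\tfrac{a}{2\sigma})/\Gamma(m+1)$ and the factor $(\lambda+K(\tau))^{-(\frac12+\frac m2-\alpha+\frac{a}{2\sigma})}$. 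Integrability at the origin forces the first argument of ${}_1F_1$ to be positive, which is exactly condition \eqref{condition}; integrability at infinity forces $\lambda+K(\tau)>0$. Since $K(\tau)=\tfrac{1}{2\sigma}(\sqrt A\coth(\sqrt A\tau/2)+b)$ decreases to $(\sqrt A+b)/(2\sigma)$ as $\tau\uparrow\infty$, the latter holds for every $\tau$ precisely when $\lambda\geq-(\sqrt A+b)/(2\sigma)$, condition \eqref{conditionLambda}; when \eqref{conditionLambdaExplosion} holds instead, the first zero of $\tau\mapsto\lambda+K(\tau)$ solves $K(t^\star)=-\lambda$ and, after rearranging the $\coth$, yields exactly the explosion time \eqref{eq:generalExplosionTime}.

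The step I expect to be the real obstacle is the rigorous treatment of the singular functional $\nu\int_0^\tau ds/X_s$. The power substitution and the induced change of measure have to be justified on the region where the candidate is finite, and this is delicate near the origin, where $\int ds/X$ is sensitive to the boundary behaviour of $X$; the Feller assumption $a>\sigma$ is used exactly here, to keep $X$ strictly positive and the functional almost surely finite, and one must check that the exponential local martingale implementing the transform is a true martingale on the relevant parameter range. The second delicate point is the convergence bookkeeping: one must verify that conditions \eqref{conditionmu}--\eqref{conditionLambda} are not merely sufficient but sharp, and that the vanishing of $\lambda+K(\tau)$ at $t^\star$ reflects genuine non-integrability of the terminal weight rather than a removable singularity of the representation.
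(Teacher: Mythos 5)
You should note at the outset that the paper contains no proof of this statement: Theorem \ref{general} is recalled verbatim from \cite{grasselli13} (the appendix says explicitly that it records, ``for convenience,'' the transform that extends the formulas of \cite{cradlennox09}), and it is then invoked as a black box in the proof of Theorem \ref{TheoremTransform}. So the comparison is against the cited source rather than anything internal. Your derivation is structurally correct and reproduces every constant: the indicial root $c=\frac12+\frac m2-\frac{a}{2\sigma}$ of $\sigma c^2+(a-\sigma)c-\nu=0$, the shifted drift constant $\sigma(m+1)$, the replacement of $b$ by $\sqrt{A}$ under the exponential-affine reduction, the term-by-term Laplace--Mellin integration of $y^{m/2}e^{-(\lambda+K)y}I_m(\beta\sqrt{y})$ collapsing into ${}_1F_1$ with first argument $\frac12+\frac m2-\alpha+\frac{a}{2\sigma}$, the identification of \eqref{condition} with integrability at the origin and of \eqref{conditionLambda} with positivity of $\lambda+K(\tau)$ for all $\tau$, and the explosion time obtained from $K(t^\star)=-\lambda$, which indeed rearranges to \eqref{eq:generalExplosionTime}. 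The route differs from the cited literature mainly in language: there, transforms of this type are obtained either by Lie symmetry analysis of the very same parabolic equation (the Craddock--Lennox method, which produces the fundamental solution and the Bessel-function kernel directly from a symmetry-generated integral transform) or, probabilistically, by writing the CIR process as a time-changed squared Bessel process, absorbing $\nu\int_0^t ds/X_s$ through the Pitman--Yor absolute-continuity relation that shifts the Bessel index to $m$, absorbing $\mu\int_0^t X_s\,ds$ through a Girsanov change of measure, and integrating the residual weight against the known noncentral-chi-square density. Your conjugations ($u=x^c w$ and the affine exponential factor) are exactly the PDE mirror of those two measure changes, so the approaches are computationally equivalent; what the probabilistic/symmetry route buys is precisely the two points you flag as delicate, since the justification of the index-shifting transform near the boundary (where $a>\sigma$ enters) and the true-martingale property of the exponential local martingale come packaged with known theorems on squared Bessel processes, whereas your formulation must supply a verification and uniqueness argument for a parabolic problem with singular potential, plus a justification that the positive root $c$ is the one selected by the boundary behaviour at $x=0$. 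Two small corrections to your bookkeeping: the conjugation by $x^c$ also leaves a residual constant potential $-bc$, whose factor $e^{-bc\tau}$ must be carried into the prefactors; and sharpness of \eqref{conditionmu}--\eqref{conditionLambda} need not be established, since the theorem claims only sufficiency together with finiteness up to $t^\star$ under \eqref{conditionLambdaExplosion}.
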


\section{Figures and Tables}

\begin{figure}[h]
\centering
  \subfloat{\label{10yprocess}\includegraphics[scale=0.35]{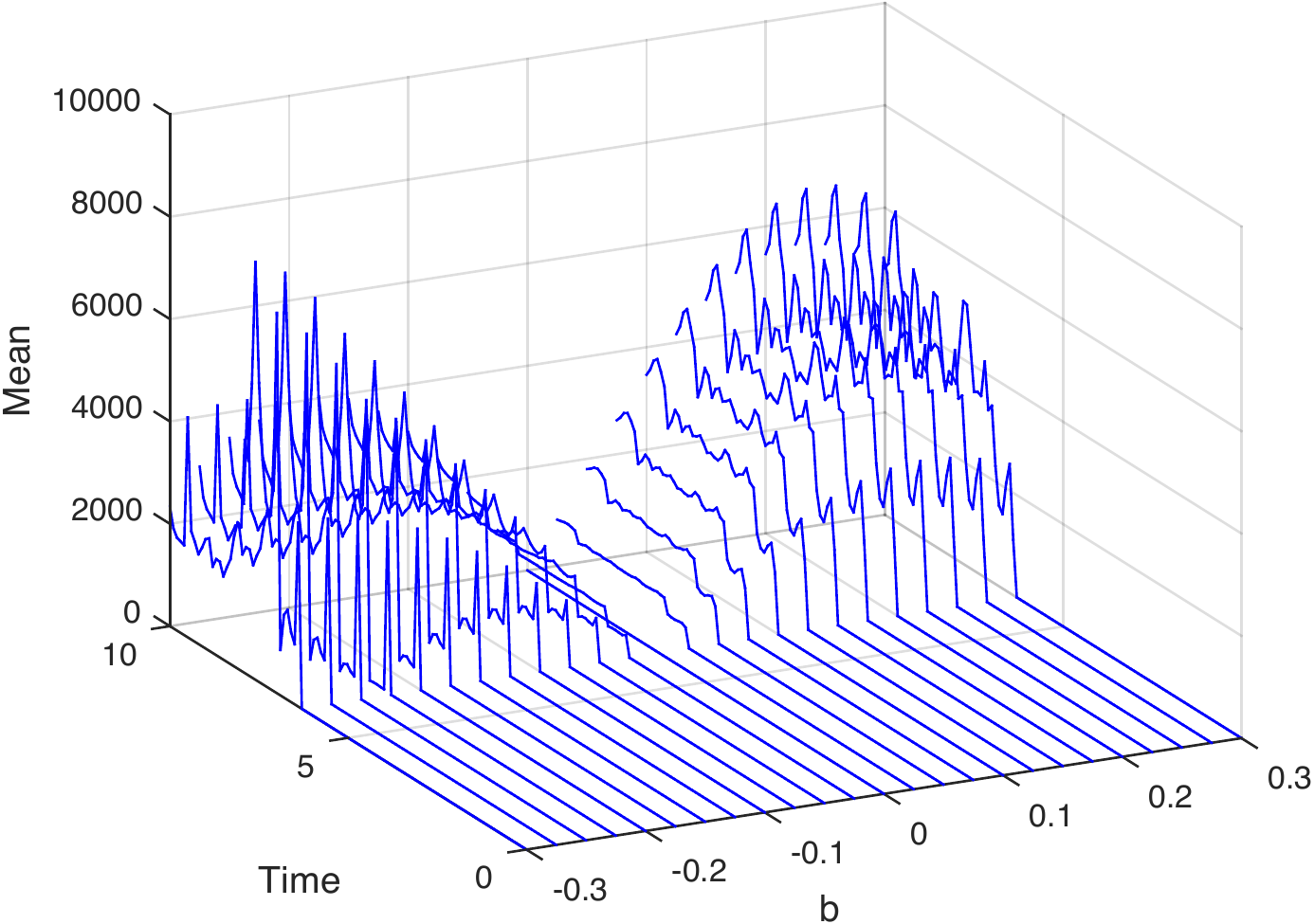}} 
  \ \subfloat{\label{10yQV}\includegraphics[scale=0.35]{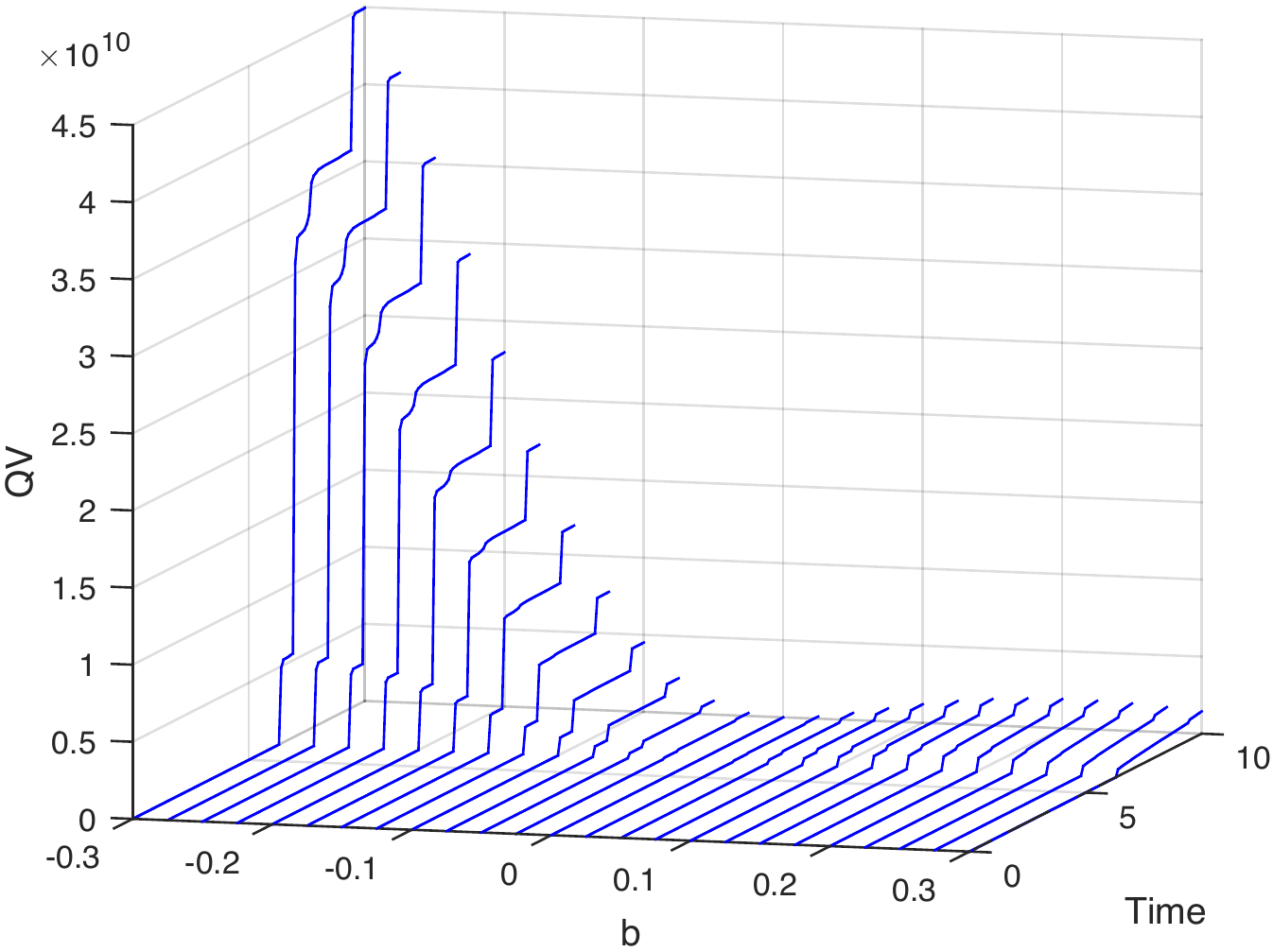}}
\caption{Simulation of the Radon-Nikodym derivative for the putative risk neutral measure of the
$i$-th currency denomination $\hat{B}^i(t)=\frac{B^i(t)}{D^i(t)}$ given by the SDE \eqref{putativeRadonNikodym}, together with the relative quadratic variation process. The time horizon is $t=10$ years. We consider a one factor specification of the model (i.e. $d=1$) and we fix the parameters as $\kappa = 0.49523; \theta = 0.53561;\sigma = 0.67128; V(0) = 1.4338;\rho = -0.89728; a = 0.047360.$ Parameter $b$ ranges between $[-0.4, 0.4]$. For $b$ positive the process is a true martingale, while for $b$ negative the process is a strict local martingale.\label{strictlocalmartingale10}}
\end{figure}

\begin{figure}[h]
\centering
 \subfloat{\label{eurusd}\includegraphics[scale=0.35]{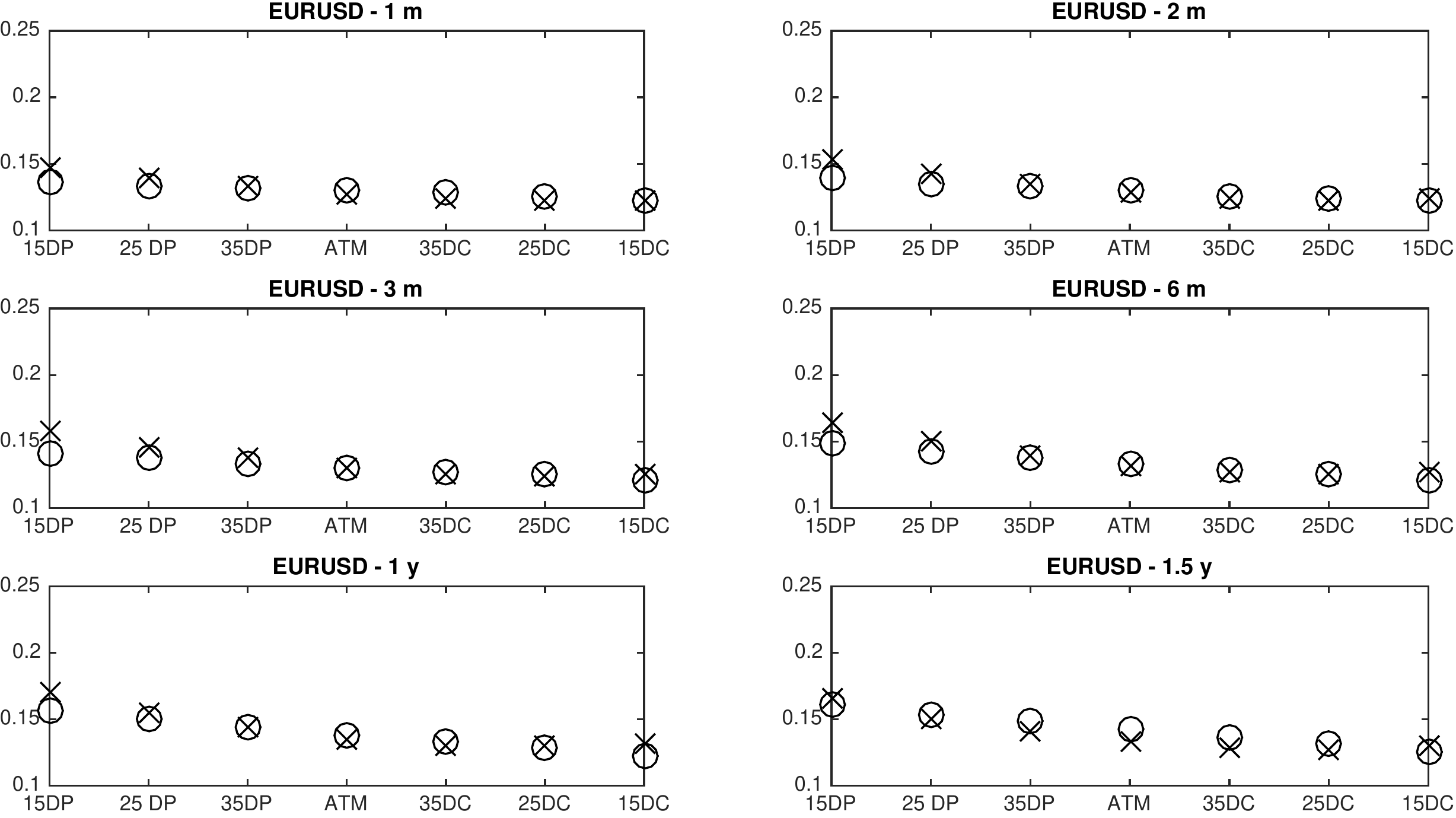}} 
  \   \subfloat{\label{eurjpy}\includegraphics[scale=0.35]{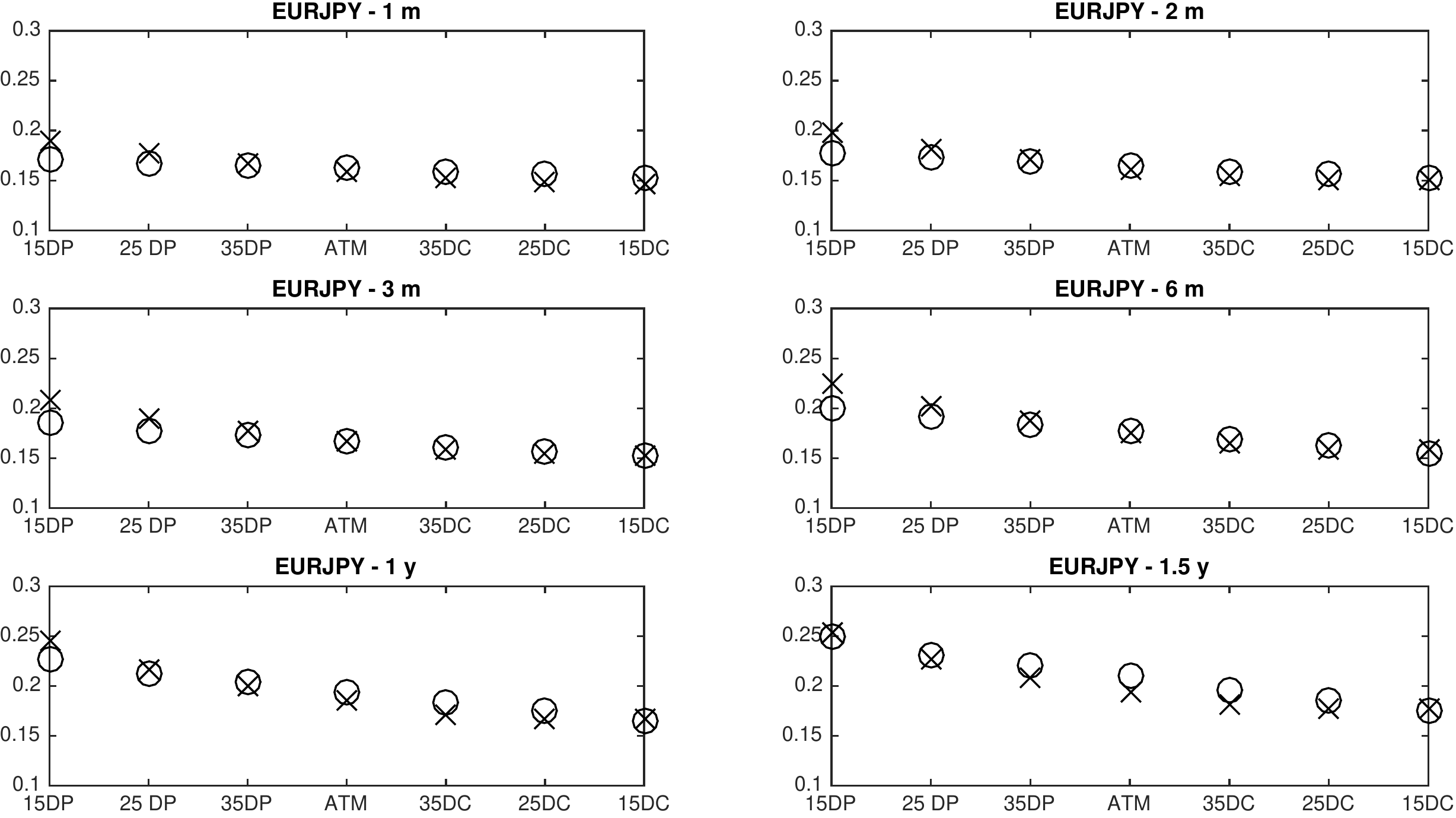}}
  \ \subfloat{\label{usdjpy}\includegraphics[scale=0.35]{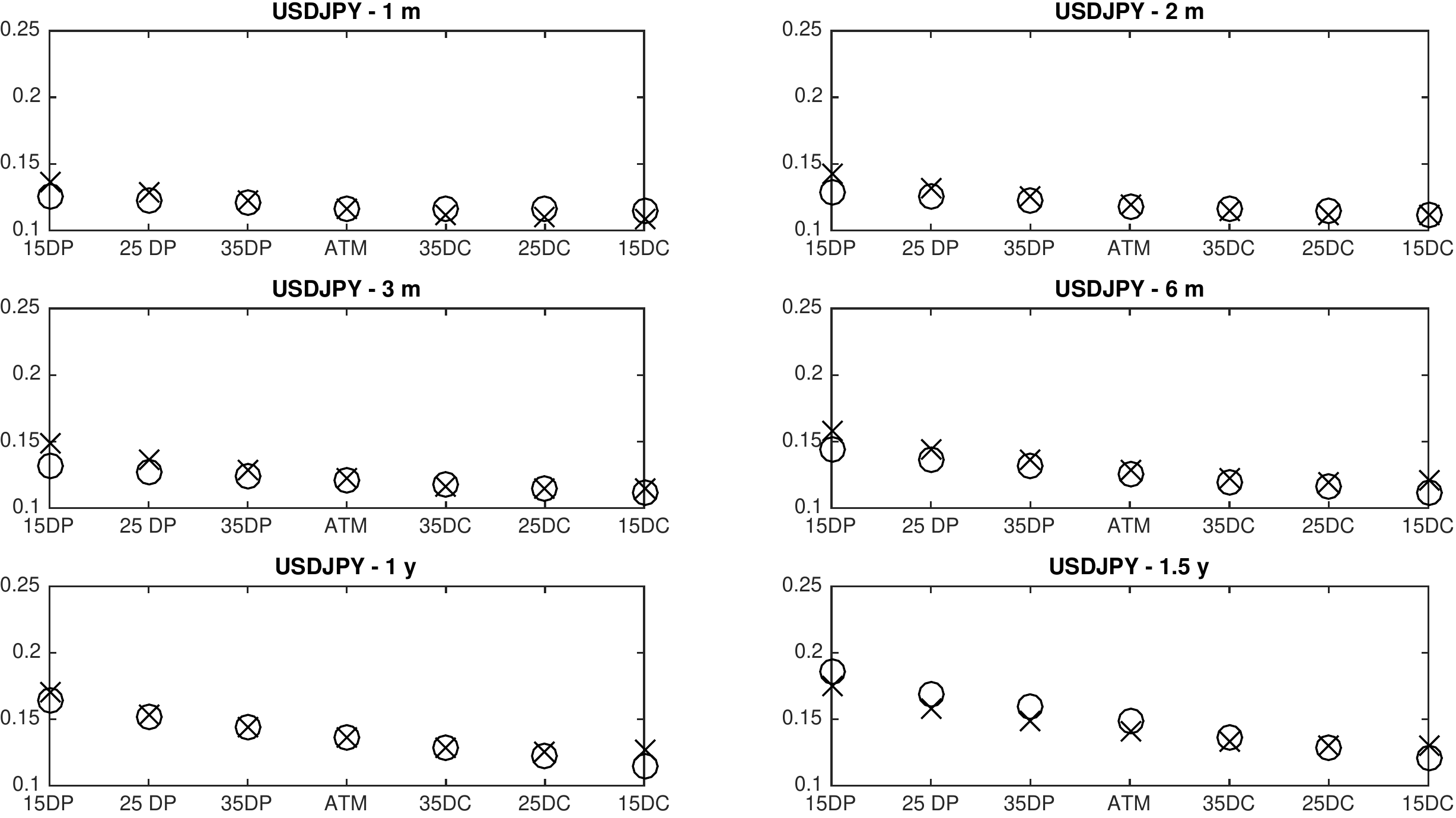}}
\caption{Simultaneous FX calibration. Market data as of July 22${}^{nd}$, 2010. Market volatilities are denoted by crosses, model volatilities are denoted by circles. Moneyness levels follow the standard Delta quoting convention in the FX option market. DC and DP stand for "delta call" and "delta put" respectively.\label{calibration2010}}
\end{figure}

\begin{table}[h]
\centering
\begin{tabular}{cccc}
Parameter & Value & Parameter & Value \\ 
\hline 
$V_1(0)$ & 0.8992 & $V_2(0)$ & 1.3011 \\ 
$\kappa_1$ & 1.1705 & $\kappa_2$ & 0.5110 \\ 
$\theta_1$ & 0.6853 & $\theta_2$ & 0.5206 \\ 
$\sigma_1$ & 0.3980 & $\sigma_2$ & 0.6786 \\ 
$\rho_1$ &  -0.8637 & $\rho_2$ &  -0.8925 \\ 
$a^{usd}_1$ & 0.0729 & $a^{usd}_2$ & 0.0621 \\ 
$a^{eur}_1$ & 0.0218 & $a^{eur}_2$ & 0.0414 \\ 
$a^{jpy}_1$ & 0.1497 & $a^{jpy}_2$ & 0.0687 \\  
$b^{usd}_1$ & 0.0192 & $b^{usd}_2$ & 0.0678 \\ 
$b^{eur}_1$ & 0.1805 & $b^{eur}_2$ & 0.0578 \\ 
$b^{jpy}_1$ & -0.0601 & $b^{jpy}_2$ & -0.0712 \\  
\hline 
\end{tabular} 
\vspace{0.3cm}
\caption{Parameter values resulting from the calibration procedure of a two factor specification with deterministic interest rate. Market data as of July 22${}^{nd}$, 2010. Each column corresponds to a volatility factor.}\label{parameters}
\end{table}
\begin{table}[h]
\centering
\begin{tabular}{cccc}
Measure & Test & Value & Risk Neutral Pricing Possible \\ 
\hline
\multirow{2}{*}{$\mathbb{P}$} & \multirow{2}{*}{$2\kappa_k\theta_k-\sigma^2_k$} &1.4459 &\\
&&0.0715&\\ 
\multirow{2}{*}{$\mathbb{Q}^{usd}$} &  \multirow{2}{*}{$2\kappa_k\theta_k-\left(\sigma^2_k+2\rho_k\sigma_k b^{usd}_k\right)$}&1.4591 &\multirow{2}{*}{YES}\\
&&0.1536&\\ 
\multirow{2}{*}{$\mathbb{Q}^{eur}$} &  \multirow{2}{*}{$2\kappa_k\theta_k-\left(\sigma^2_k+2\rho_k\sigma_k b^{eur}_k\right)$}&1.5700 &\multirow{2}{*}{YES}\\
&&0.1415&\\ 
\multirow{2}{*}{$\mathbb{Q}^{jpy}$} &  \multirow{2}{*}{$2\kappa_k\theta_k-\left(\sigma^2_k+2\rho_k\sigma_k b^{jpy}_k\right)$} & 1.4046&\multirow{2}{*}{NO}\\
&&-0.0147&\\ 
\hline 
\end{tabular} 
\vspace{0.3cm}
\caption{This table reports the Feller test under different (putative) measures as introduced in Section~\ref{subsecstrictlm}. Market data as of July 22${}^{nd}$, 2010. The values of the test for $V_1$ (upper value) and $V_2$ (lower value) are obtained from the calibrated parameters reported in Table~\ref{parameters}.}\label{fellertest}
\end{table}

%
\begin{table}[h]
\centering
\begin{tabular}{lccl}
$T$	& Risk Neutral Approach & Benchmark approach & Difference\\
\hline
$1$ week & $0.99990$ & $0.99990$ & $1.3922\times 10^{-13}$ \\
$2$ weeks & $0.99981$ & $0.99981$ & $6.6835\times 10^{-14}$ \\
$3$ weeks & $0.99971$ & $0.99971$ & $3.0198\times 10^{-14}$ \\
$1$ month& $0.99959$ & $0.99959$ & $4.6740\times 10^{-14}$ \\
$2$ months& $0.99917$ & $0.99917$ & $7.3275\times 10^{-15}$ \\
$3$ months& $0.99879$ & $0.99879$ & $2.3093\times 10^{-14}$ \\
$6$ months& $0.99740$ & $0.99740$ & $4.6629\times 10^{-15}$ \\
$1$ year& $0.99377$ & $0.99377$ & $1.5543\times 10^{-15}$ \\
$1.5$ years & $0.98849$ & $0.98849$ & $4.9960\times 10^{-15}$ \\
$2$ years& $0.97776$ & $0.97776$ & $6.6613\times 10^{-16}$ \\
$3$ years& $0.95556$ & $0.95556$ & $1.6653\times 10^{-15}$ \\
$5$ years& $0.91031$ & $0.91031$ & $5.4401\times 10^{-15}$ \\
$7$ years& $0.84064$ & $0.84064$ & $1.4211\times 10^{-14}$ \\
$10$ years& $0.73800$ & $0.73800$ & $1.4100\times 10^{-14}$ \\
\hline
\end{tabular}
\vspace{0.3cm}
\caption{Computation of Zero coupon bond prices for the USD market July 22${}^{nd}$, 2010, where risk neutral valuation is potentially allowed.}\label{ZCBUSD}
\end{table}
%

\begin{table}[h]
\centering
\begin{tabular}{lccl}
$T$	& Risk Neutral Approach & Benchmark approach & Difference\\
\hline
$1$ week & $0.99993$ & $0.99993$ & $1.6187\times 10^{-13}$ \\
$2$ weeks& $0.99987$ & $0.99987$ & $3.2196\times 10^{-14}$ \\
$3$ weeks& $0.99980$ & $0.99980$ & $9.7811\times 10^{-14}$ \\
$1$ month& $0.99971$ & $0.99971$ & $8.9928\times 10^{-15}$ \\
$2$ months& $0.99943$ & $0.99943$ & $1.5543\times 10^{-15}$ \\
$3$ months& $0.99913$ & $0.99913$ & $8.9928\times 10^{-13}$ \\
$6$ months& $0.99827$ & $0.99827$ & $1.4083\times 10^{-7}$ \\
$1$ year& $0.99679$ & $0.99671$ & $7.6955\times 10^{-5}$ \\
$1.5$ years & $0.99556$ & $0.99482$ & $0.00073726$ \\
$2$ years& $0.99475$ & $0.99229$ & $0.0024617$ \\
$3$ years& $0.99229$ & $0.98319$ & $0.0090951$ \\
$5$ years& $0.98594$ & $0.95580$ & $0.030140$ \\
$7$ years& $0.96885$ & $0.91430$ & $0.054544$ \\
$10$ years& $0.92073$ & $0.83195$ & $0.088777$ \\
$15$ years& $0.83391$ & $0.69949$ & $0.13442$ \\
$20$ years& $0.74894$ & $0.58296$ & $0.16598$ \\  
\hline
\end{tabular}
\vspace{0.3cm}
\caption{Computation of zero coupon bond prices for the JPY market as of July 22${}^{nd}$, 2010, where risk neutral valuation is not possible. The associated price difference increases as the maturity becomes larger.}\label{ZCBJPY}
\end{table}

\begin{table}[h]
\centering
\begin{tabular}{crrrrr}
Parameter & 02.23.2015&03.23.2015&04.22.2015&05.22.2015&06.22.2015\\
\hline
$V_1(0)$ &$0.73431$ & $1.1740$ & $1.1719$ & $1.1736$ & $1.1801$ \\
$V_2(0)$ &$1.1943$ & $1.4352$ & $1.4338$ & $1.4356$ & $1.4345$ \\
$\kappa_1$&$0.46330$ & $0.48651$ & $0.48812$ & $0.48968$ & $0.48950$ \\
$\kappa_2$& $0.52193$ & $0.49237$ & $0.49523$ & $0.49191$ & $0.49282$ \\
$\theta_1$ &$0.35253$ & $0.29821$ & $0.29999$ & $0.30107$ & $0.30082$ \\
$\theta_2$ &$0.75203$ & $0.53498$ & $0.53561$ & $0.53464$ & $0.53455$ \\
$\sigma_1$&$0.49603$ & $0.49110$ & $0.49443$ & $0.49604$ & $0.49561$ \\
$\sigma_2$& $0.66899$ & $0.66938$ & $0.67128$ & $0.67135$ & $0.67102$ \\
$\rho_1$ & $-0.99327$ & $-0.86543$ & $-0.85665$ & $-0.86540$ & $-0.86434$ \\
$\rho_2$ & $-0.94111$ & $-0.89690$ & $-0.89728$ & $-0.89706$ & $-0.89558$ \\
$a^{usd}_1$& $0.12445$ & $0.13047$ & $0.12885$ & $0.12399$ & $0.12616$ \\
$a^{usd}_2$& $0.062812$ & $0.036533$ & $0.047360$ & $0.058480$ & $0.058355$ \\
$a^{eur}_1$ &$0.045806$ & $0.038959$ & $0.049467$ & $0.053229$ & $0.058580$ \\
$a^{eur}_2$ &$0.044107$ & $0.054640$ & $0.051325$ & $0.045735$ & $0.048780$ \\
$a^{jpy}_1$ &$0.16458$ & $0.15589$ & $0.14943$ & $0.14803$ & $0.14984$ \\
$a^{jpy}_2$ &$0.057533$ & $0.080895$ & $0.073395$ & $0.067905$ & $0.073300$ \\
$b^{usd}_1$&$-0.10105$ & $-0.041716$ & $-0.057950$ & $-0.059954$ & $-0.055886$ \\
$b^{usd}_2$&$-0.11178$ & $-0.094706$ & $-0.12360$ & $-0.12433$ & $-0.12753$ \\
$b^{eur}_1$ & $-0.065631$ & $-0.041879$ & $-0.063733$ & $-0.072158$ & $-0.074323$ \\
$b^{eur}_2$ & $0.0076657$ & $-0.037459$ & $-0.039780$ & $-0.044664$ & $-0.041933$ \\
$b^{jpy}_1$ &$-0.059432$ & $-0.047197$ & $-0.053469$ & $-0.054806$ & $-0.055076$ \\
$b^{jpy}_2$ &$-0.065707$ & $-0.051282$ & $-0.058217$ & $-0.062116$ & $-0.056492$ \\  
\hline
Res. Norm.&$0.0131$&$0.0121$&$0.0106$&$0.0269$&$ 0.0141$\\ 
\hline
\end{tabular} 
\vspace{0.3cm}
\caption{Parameter values resulting from the repeated calibration procedure of a two factor specification with constant interest rates.}\label{parameters2015}
\end{table}

\begin{table}[h]
\centering
\begin{tabular}{cc|rrrrr}
Measure & Test &
02.23.2015&03.23.2015&04.22.2015&05.22.2015&06.22.2015\\
\hline
\multirow{2}{*}{$\mathbb{P}$} & \multirow{2}{*}{$2\kappa_k\theta_k-\sigma^2_k$}&$0.0806$&$0.0490$&$0.0484$&$0.0488$ &$0.0489$ \\
&&$0.3375$&$0.0787$&$0.0799$&$0.0753$&$0.0766$\\ 
\multirow{2}{*}{$\mathbb{Q}^{usd}$} &  \multirow{2}{*}{$2\kappa_k\theta_k-\left(\sigma^2_k+2\rho_k\sigma_k b^{usd}_k\right)$}&$-0.0189$&$0.0135$&$-0.0007$&$-0.0027$&$0.0010$ \\
&&$0.1967$&$-0.0350$&$-0.0690$&$-0.0745$&$-0.0767$\\ 
\multirow{2}{*}{$\mathbb{Q}^{eur}$} &  \multirow{2}{*}{$2\kappa_k\theta_k-\left(\sigma^2_k+2\rho_k\sigma_k b^{eur}_k\right)$}&$0.0159$&$0.0134$&$-0.0056$&$-0.0132$&$-0.0148$ \\
&&$0.3471$&$0.0338$&$ 0.0320$&$ 0.0215$&$0.0262$\\ 
\multirow{2}{*}{$\mathbb{Q}^{jpy}$} &  \multirow{2}{*}{$2\kappa_k\theta_k-\left(\sigma^2_k+2\rho_k\sigma_k b^{jpy}_k\right)$}&$0.0221$&$0.0089$&$0.0031$&$0.0017$ &$0.0017$\\
&&$0.2547$&$0.0172$&$0.0097$&$0.0005$&$0.0087$\\ 
\hline 
\hline
$\mathbb{P}$ 		  &  & &&&&\\
$\mathbb{Q}^{usd}$& \multirow{3}{*}{Risk neutral pricing?} &NO&NO&NO &NO &NO\\
$\mathbb{Q}^{eur}$&																			 &YES&YES&NO &NO &NO\\
$\mathbb{Q}^{jpy}$&																			 &YES&YES&YES&YES&YES\\
\hline
\end{tabular} 
\vspace{0.3cm}
\caption{This table reports the Feller test under different (putative) measures as introduced in Section~\ref{subsecstrictlm}. The values of the test are obtained from the calibrated parameters reported in Table~\ref{parameters2015}.}\label{fellertest2015}
\end{table}

\begin{table}[htp]
\centering
\begin{scriptsize}
\rotatebox{90}{
\begin{tabular}{rrrr|rrrr|rrrr}
\multicolumn{4}{c}{USD}&\multicolumn{4}{c}{EUR}&\multicolumn{4}{c}{JPY}\\
$T$	& RNA & BA & Difference&$T$	& RNA & BA & Difference&$T$	& RNA & BA & Difference\\
\hline
$0.25008$ & $0.99935$ & $0.99935$ & $6.0851\times 10^{-13}$ & $0.50267$ & $0.99940$ & $0.99940$ & $1.1990\times 10^{-13}$ & $0.50214$ & $0.99931$ & $0.99932$ & $1.6497\times 10^{-7}$ \\
$0.31088$ & $0.99915$ & $0.99915$ & $7.8564\times 10^{-11}$ & $0.58291$ & $0.99930$ & $0.99931$ & $1.7908\times 10^{-13}$ & $0.58058$ & $0.99919$ & $0.99919$ & $1.0711\times 10^{-7}$ \\
$0.38847$ & $0.99889$ & $0.99889$ & $1.6583\times 10^{-9}$ & $0.66872$ & $0.99920$ & $0.99920$ & $2.4092\times 10^{-13}$ & $0.66567$ & $0.99907$ & $0.99907$ & $2.6713\times 10^{-8}$ \\
$0.48311$ & $0.99851$ & $0.99851$ & $1.0495\times 10^{-7}$ & $0.74971$ & $0.99910$ & $0.99910$ & $1.2557\times 10^{-13}$ & $0.74738$ & $0.99895$ & $0.99895$ & $6.0677\times 10^{-9}$ \\
$0.55772$ & $0.99814$ & $0.99814$ & $6.5352\times 10^{-7}$ & $0.84145$ & $0.99899$ & $0.99899$ & $1.3323\times 10^{-15}$ & $0.84052$ & $0.99882$ & $0.99882$ & $4.4317\times 10^{-11}$ \\
$0.65500$ & $0.99766$ & $0.99765$ & $3.6805\times 10^{-6}$ & $0.91573$ & $0.99890$ & $0.99890$ & $1.5266\times 10^{-13}$ & $0.91445$ & $0.99871$ & $0.99871$ & $1.7248\times 10^{-11}$ \\
$0.72984$ & $0.99720$ & $0.99719$ & $1.0367\times 10^{-5}$ & $0.99918$ & $0.99881$ & $0.99881$ & $1.7464\times 10^{-13}$ & $0.99979$ & $0.99859$ & $0.99859$ & $1.8563\times 10^{-13}$ \\
$0.80747$ & $0.99669$ & $0.99666$ & $2.4915\times 10^{-5}$ & $1.0940$ & $0.99870$ & $0.99870$ & $1.8863\times 10^{-13}$ & $1.4984$ & $0.99789$ & $0.99788$ & $4.0501\times 10^{-12}$ \\
$1.0571$ & $0.99467$ & $0.99449$ & $0.00018107$ & $1.1660$ & $0.99862$ & $0.99862$ & $4.8850\times 10^{-15}$ & $2.0078$ & $0.99714$ & $0.99714$ & $6.4399\times 10^{-11}$ \\
$1.3036$ & $0.99210$ & $0.99146$ & $0.00063221$ & $1.2490$ & $0.99852$ & $0.99852$ & $7.3275\times 10^{-15}$ & $3.0043$ & $0.99531$ & $0.99531$ & $1.7337\times 10^{-11}$ \\
$1.5694$ & $0.98872$ & $0.98711$ & $0.0016107$ & $1.3381$ & $0.99842$ & $0.99842$ & $2.6534\times 10^{-14}$ & $4.0007$ & $0.99248$ & $0.99249$ & $5.6222\times 10^{-13}$ \\
$1.8185$ & $0.98507$ & $0.98200$ & $0.0030724$ & $1.4160$ & $0.99833$ & $0.99833$ & $4.1855\times 10^{-14}$ & $5.0000$ & $0.98843$ & $0.98844$ & $2.2760\times 10^{-14}$ \\
$2.0508$ & $0.98128$ & $0.97635$ & $0.0049275$ & $1.4995$ & $0.99823$ & $0.99823$ & $9.4147\times 10^{-14}$ & $6.0010$ & $0.98272$ & $0.98272$ & $4.7500\times 10^{-12}$ \\
$2.3162$ & $0.97646$ & $0.96886$ & $0.0076017$ & $2.0046$ & $0.99756$ & $0.99756$ & $7.1276\times 10^{-14}$ & $6.9994$ & $0.97557$ & $0.97557$ & $3.0261\times 10^{-12}$ \\
$2.5623$ & $0.97165$ & $0.96109$ & $0.010564$ & $3.0006$ & $0.99524$ & $0.99524$ & $1.3289\times 10^{-13}$ & $8.0024$ & $0.96682$ & $0.96682$ & $5.0325\times 10^{-11}$ \\
$2.8109$ & $0.96659$ & $0.95261$ & $0.013975$ & $3.9955$ & $0.99131$ & $0.99131$ & $6.1114\times 10^{-12}$ & $8.9971$ & $0.95673$ & $0.95673$ & $9.7525\times 10^{-11}$ \\
$3.0623$ & $0.96131$ & $0.94352$ & $0.017795$ & $4.9924$ & $0.98537$ & $0.98537$ & $6.8156\times 10^{-11}$ & $9.9940$ & $0.94504$ & $0.94504$ & $5.5121\times 10^{-10}$ \\
$3.9848$ & $0.94113$ & $0.90701$ & $0.034118$ & $5.9886$ & $0.97758$ & $0.97758$ & $1.7274\times 10^{-10}$ & $11.987$ & $0.91834$ & $0.91835$ & $4.0101\times 10^{-9}$ \\
$4.9787$ & $0.91801$ & $0.86411$ & $0.053900$ & $6.9837$ & $0.96781$ & $0.96781$ & $3.5236\times 10^{-10}$ & $14.976$ & $0.86795$ & $0.86795$ & $2.3196\times 10^{-8}$ \\
$5.9721$ & $0.89434$ & $0.82012$ & $0.074226$ & $7.9833$ & $0.95645$ & $0.95645$ & $6.0173\times 10^{-10}$ & $19.954$ & $0.77948$ & $0.77948$ & $1.2711\times 10^{-7}$ \\
$6.9654$ & $0.87063$ & $0.77653$ & $0.094102$ & $8.9739$ & $0.94389$ & $0.94389$ & $8.7084\times 10^{-10}$ & $24.933$ & $0.70159$ & $0.70159$ & $3.1501\times 10^{-7}$ \\
$7.9641$ & $0.84702$ & $0.73393$ & $0.11309$ & $9.9644$ & $0.93107$ & $0.93107$ & $1.1074\times 10^{-9}$ & $29.915$ & $0.63701$ & $0.63701$ & $5.2597\times 10^{-7}$ \\
$8.9545$ & $0.82395$ & $0.69328$ & $0.13067$ & $10.957$ & $0.91722$ & $0.91722$ & $1.3277\times 10^{-9}$ & $34.887$ & $0.57631$ & $0.57631$ & $9.1548\times 10^{-7}$ \\
$9.9446$ & $0.80111$ & $0.65428$ & $0.14683$ & $11.949$ & $0.90298$ & $0.90298$ & $1.5028\times 10^{-9}$ & $39.863$ & $0.51867$ & $0.51867$ & $1.3125\times 10^{-6}$ \\
$10.937$ & $0.77855$ & $0.61696$ & $0.16158$ & $14.925$ & $0.86068$ & $0.86068$ & $2.9209\times 10^{-8}$ & -- & -- & -- & -- \\
$11.930$ & $0.75639$ & $0.58149$ & $0.17490$ & $19.887$ & $0.79260$ & $0.79260$ & $4.8524\times 10^{-7}$ & -- & -- & -- & -- \\
$14.909$ & $0.69362$ & $0.48653$ & $0.20709$ & $24.850$ & $0.73266$ & $0.73266$ & $1.3366\times 10^{-6}$ & -- & -- & -- & -- \\
$19.876$ & $0.60153$ & $0.36187$ & $0.23966$ & $29.812$ & $0.67847$ & $0.67847$ & $1.7194\times 10^{-6}$ & -- & -- & -- & -- \\
$24.844$ & $0.52388$ & $0.27024$ & $0.25364$ & $34.769$ & $0.62967$ & $0.62967$ & $2.1766\times 10^{-6}$ & -- & -- & -- & -- \\
$29.810$ & $0.45699$ & $0.20214$ & $0.25484$ & $39.735$ & $0.58814$ & $0.58815$ & $2.6240\times 10^{-6}$ & -- & -- & -- & -- \\
$39.737$ & $0.34968$ & $0.11375$ & $0.23593$ & $44.705$ & $0.55337$ & $0.55337$ & $1.8836\times 10^{-7}$ & -- & -- & -- & -- \\
$49.677$ & $0.27553$ & $0.065891$ & $0.20964$ & $49.680$ & $0.52631$ & $0.52631$ & $4.6528\times 10^{-12}$ & -- & -- & -- & --   
\\  
\hline
\end{tabular}
}
\end{scriptsize}
\caption{Computation of zero coupon bond prices for the USD, EUR and JPY markets as of February 23${}^{rd}$, 2015. $T$ refers to the maturity of the bonds. RNA and BA stand for risk neutral and benchmark approach respectively.}\label{ZCBUSD201502}
\end{table}
\newpage
\begin{table}[htp]
\centering
\begin{scriptsize}
\rotatebox{90}{
\begin{tabular}{rrrr|rrrr|rrrr}
\multicolumn{4}{c}{USD}&\multicolumn{4}{c}{EUR}&\multicolumn{4}{c}{JPY}\\
$T$	& RNA & BA & Difference&$T$	& RNA & BA & Difference&$T$	& RNA & BA & Difference\\
\hline
$0.25533$ & $0.99932$ & $0.99932$ & $9.5861\times 10^{-11}$ & $0.51135$ & $0.99955$ & $0.99955$ & $9.0594\times 10^{-14}$ & $0.51103$ & $0.99928$ & $0.99928$ & $1.8545\times 10^{-7}$ \\
$0.30542$ & $0.99915$ & $0.99915$ & $8.5724\times 10^{-11}$ & $0.58555$ & $0.99949$ & $0.99949$ & $1.5099\times 10^{-14}$ & $0.58844$ & $0.99915$ & $0.99915$ & $1.1805\times 10^{-7}$ \\
$0.39958$ & $0.99880$ & $0.99880$ & $7.7737\times 10^{-9}$ & $0.66625$ & $0.99942$ & $0.99942$ & $1.8086\times 10^{-13}$ & $0.67126$ & $0.99903$ & $0.99903$ & $3.0184\times 10^{-8}$ \\
$0.47454$ & $0.99848$ & $0.99848$ & $4.1679\times 10^{-8}$ & $0.75840$ & $0.99934$ & $0.99934$ & $1.3567\times 10^{-13}$ & $0.76384$ & $0.99889$ & $0.99889$ & $1.0159\times 10^{-8}$ \\
$0.57174$ & $0.99805$ & $0.99805$ & $5.2282\times 10^{-7}$ & $0.83249$ & $0.99927$ & $0.99927$ & $3.6859\times 10^{-14}$ & $0.83801$ & $0.99877$ & $0.99878$ & $1.9202\times 10^{-11}$ \\
$0.64658$ & $0.99761$ & $0.99761$ & $2.1247\times 10^{-6}$ & $0.91610$ & $0.99920$ & $0.99920$ & $1.5021\times 10^{-13}$ & $0.92271$ & $0.99865$ & $0.99865$ & $1.9128\times 10^{-11}$ \\
$0.72439$ & $0.99714$ & $0.99713$ & $6.8585\times 10^{-6}$ & $1.0107$ & $0.99912$ & $0.99912$ & $3.2863\times 10^{-14}$ & $1.0133$ & $0.99850$ & $0.99851$ & $1.8328\times 10^{-10}$ \\
$0.97381$ & $0.99533$ & $0.99524$ & $8.5524\times 10^{-5}$ & $1.0833$ & $0.99906$ & $0.99906$ & $4.4187\times 10^{-14}$ & $1.5090$ & $0.99774$ & $0.99774$ & $8.3891\times 10^{-11}$ \\
$1.2204$ & $0.99304$ & $0.99265$ & $0.00039463$ & $1.1657$ & $0.99899$ & $0.99899$ & $2.2204\times 10^{-16}$ & $2.0076$ & $0.99688$ & $0.99688$ & $7.0679\times 10^{-11}$ \\
$1.4862$ & $0.99008$ & $0.98888$ & $0.0011957$ & $1.2551$ & $0.99890$ & $0.99890$ & $1.9984\times 10^{-15}$ & $3.0039$ & $0.99483$ & $0.99483$ & $2.0172\times 10^{-11}$ \\
$1.7353$ & $0.98687$ & $0.98434$ & $0.0025306$ & $1.3330$ & $0.99883$ & $0.99883$ & $1.8985\times 10^{-14}$ & $4.0007$ & $0.99188$ & $0.99188$ & $1.7073\times 10^{-12}$ \\
$1.9680$ & $0.98352$ & $0.97917$ & $0.0043494$ & $1.4165$ & $0.99875$ & $0.99875$ & $1.8796\times 10^{-13}$ & $5.0023$ & $0.98762$ & $0.98762$ & $1.5022\times 10^{-11}$ \\
$2.2335$ & $0.97928$ & $0.97218$ & $0.0071035$ & $1.5022$ & $0.99866$ & $0.99866$ & $4.3410\times 10^{-14}$ & $6.0010$ & $0.98222$ & $0.98222$ & $4.6875\times 10^{-12}$ \\
$2.4797$ & $0.97506$ & $0.96479$ & $0.010271$ & $2.0040$ & $0.99804$ & $0.99804$ & $4.8406\times 10^{-14}$ & $6.9994$ & $0.97557$ & $0.97557$ & $2.1447\times 10^{-12}$ \\
$2.7283$ & $0.97060$ & $0.95658$ & $0.014016$ & $3.0007$ & $0.99579$ & $0.99579$ & $2.5235\times 10^{-13}$ & $8.0028$ & $0.96771$ & $0.96771$ & $6.6605\times 10^{-11}$ \\
$2.9798$ & $0.96593$ & $0.94763$ & $0.018296$ & $3.9959$ & $0.99172$ & $0.99172$ & $5.4496\times 10^{-12}$ & $8.9976$ & $0.95872$ & $0.95872$ & $6.0607\times 10^{-11}$ \\
$3.2258$ & $0.96110$ & $0.93820$ & $0.022900$ & $4.9930$ & $0.98605$ & $0.98605$ & $6.3637\times 10^{-11}$ & $9.9952$ & $0.94905$ & $0.94905$ & $3.0073\times 10^{-10}$ \\
$3.9863$ & $0.94654$ & $0.90743$ & $0.039111$ & $5.9895$ & $0.97950$ & $0.97950$ & $1.1563\times 10^{-10}$ & $11.989$ & $0.92458$ & $0.92458$ & $2.5580\times 10^{-9}$ \\
$4.9806$ & $0.92532$ & $0.86252$ & $0.062801$ & $6.9858$ & $0.97205$ & $0.97205$ & $1.9374\times 10^{-10}$ & $14.979$ & $0.87882$ & $0.87882$ & $1.4759\times 10^{-8}$ \\
$5.9746$ & $0.90349$ & $0.81622$ & $0.087274$ & $7.9870$ & $0.96366$ & $0.96366$ & $2.8133\times 10^{-10}$ & $19.962$ & $0.79462$ & $0.79462$ & $7.7188\times 10^{-8}$ \\
$6.9686$ & $0.88158$ & $0.77036$ & $0.11122$ & $8.9767$ & $0.95431$ & $0.95431$ & $3.6996\times 10^{-10}$ & $24.940$ & $0.72118$ & $0.72118$ & $2.2061\times 10^{-7}$ \\
$7.9677$ & $0.85950$ & $0.72547$ & $0.13404$ & $9.9715$ & $0.94464$ & $0.94464$ & $4.4520\times 10^{-10}$ & $29.923$ & $0.65910$ & $0.65910$ & $3.7710\times 10^{-7}$ \\
$8.9558$ & $0.83790$ & $0.68288$ & $0.15502$ & $10.966$ & $0.93435$ & $0.93435$ & $5.1396\times 10^{-10}$ & $34.897$ & $0.60089$ & $0.60089$ & $6.7433\times 10^{-7}$ \\
$9.9494$ & $0.81655$ & $0.64214$ & $0.17441$ & $11.960$ & $0.92407$ & $0.92407$ & $5.5639\times 10^{-10}$ & $39.876$ & $0.54584$ & $0.54584$ & $9.6345\times 10^{-7}$ \\
$10.942$ & $0.79239$ & $0.60114$ & $0.19125$ & $14.944$ & $0.89313$ & $0.89313$ & $1.0177\times 10^{-8}$ & -- & -- & -- & -- \\
$11.936$ & $0.77470$ & $0.56682$ & $0.20788$ & $19.921$ & $0.84496$ & $0.84496$ & $1.6587\times 10^{-7}$ & -- & -- & -- & -- \\
$14.916$ & $0.71570$ & $0.46956$ & $0.24614$ & $24.892$ & $0.80163$ & $0.80164$ & $4.3668\times 10^{-7}$ & -- & -- & -- & -- \\
$19.889$ & $0.62827$ & $0.34345$ & $0.28482$ & $29.869$ & $0.76029$ & $0.76029$ & $5.1555\times 10^{-7}$ & -- & -- & -- & -- \\
$24.855$ & $0.55373$ & $0.25226$ & $0.30148$ & $34.838$ & $0.72270$ & $0.72270$ & $6.0885\times 10^{-7}$ & -- & -- & -- & -- \\
$29.827$ & $0.48890$ & $0.18556$ & $0.30334$ & $39.813$ & $0.68800$ & $0.68800$ & $7.0764\times 10^{-7}$ & -- & -- & -- & -- \\
$39.760$ & $0.38294$ & $0.10093$ & $0.28201$ & $44.794$ & $0.66212$ & $0.66212$ & $5.0278\times 10^{-8}$ & -- & -- & -- & -- \\
$49.704$ & $0.30658$ & $0.056088$ & $0.25049$ & $49.778$ & $0.64076$ & $0.64076$ & $2.5646\times 10^{-13}$ & -- & -- & -- & --   
\\  
\hline
\end{tabular}
}
\end{scriptsize}
\caption{Computation of Zero coupon bond prices for the USD EUR and JPY markets as of March 23${}^{rd}$, 2015. $T$ refers to the maturity of the bonds. RNA and BA stand for risk neutral and benchmark approach respectively.}\label{ZCBUSD201503}
\end{table}

\begin{table}[htp]
\centering
\begin{scriptsize}
\rotatebox{90}{
\begin{tabular}{rrrr|rrrr|rrrr}
\multicolumn{4}{c}{USD}&\multicolumn{4}{c}{EUR}&\multicolumn{4}{c}{JPY}\\
$T$	& RNA & BA & Difference&$T$	& RNA & BA & Difference&$T$	& RNA & BA & Difference\\
\hline
$0.25270$ & $0.99930$ & $0.99930$ & $2.9560\times 10^{-11}$ & $0.51332$ & $0.99965$ & $0.99965$ & $9.3880\times 10^{-11}$ & $0.51337$ & $0.99928$ & $0.99929$ & $1.8200\times 10^{-7}$ \\
$0.31924$ & $0.99907$ & $0.99907$ & $1.9363\times 10^{-10}$ & $0.58321$ & $0.99959$ & $0.99959$ & $9.9334\times 10^{-10}$ & $0.58574$ & $0.99918$ & $0.99918$ & $1.0922\times 10^{-7}$ \\
$0.39414$ & $0.99882$ & $0.99882$ & $2.1562\times 10^{-9}$ & $0.66615$ & $0.99954$ & $0.99954$ & $8.7244\times 10^{-9}$ & $0.66845$ & $0.99907$ & $0.99907$ & $2.7860\times 10^{-8}$ \\
$0.49117$ & $0.99843$ & $0.99843$ & $1.8228\times 10^{-7}$ & $0.75313$ & $0.99948$ & $0.99948$ & $5.1639\times 10^{-8}$ & $0.75611$ & $0.99896$ & $0.99896$ & $7.4406\times 10^{-9}$ \\
$0.56612$ & $0.99807$ & $0.99807$ & $1.1914\times 10^{-6}$ & $0.83296$ & $0.99943$ & $0.99943$ & $1.9228\times 10^{-7}$ & $0.83837$ & $0.99885$ & $0.99885$ & $4.0214\times 10^{-11}$ \\
$0.64376$ & $0.99770$ & $0.99769$ & $5.1237\times 10^{-6}$ & $0.91576$ & $0.99937$ & $0.99937$ & $5.9486\times 10^{-7}$ & $0.91738$ & $0.99875$ & $0.99875$ & $2.0599\times 10^{-12}$ \\
$0.73806$ & $0.99715$ & $0.99713$ & $2.0281\times 10^{-5}$ & $1.0029$ & $0.99932$ & $0.99932$ & $1.6072\times 10^{-6}$ & $1.0051$ & $0.99863$ & $0.99863$ & $1.7454\times 10^{-12}$ \\
$0.89345$ & $0.99613$ & $0.99603$ & $0.00010597$ & $1.0826$ & $0.99927$ & $0.99927$ & $3.4880\times 10^{-6}$ & $1.5034$ & $0.99794$ & $0.99794$ & $1.5695\times 10^{-11}$ \\
$1.1401$ & $0.99412$ & $0.99352$ & $0.00060114$ & $1.1658$ & $0.99921$ & $0.99920$ & $7.0272\times 10^{-6}$ & $2.0021$ & $0.99717$ & $0.99717$ & $6.4672\times 10^{-12}$ \\
$1.4061$ & $0.99147$ & $0.98944$ & $0.0020368$ & $1.2524$ & $0.99915$ & $0.99913$ & $1.3257\times 10^{-5}$ & $3.0016$ & $0.99536$ & $0.99536$ & $3.3628\times 10^{-12}$ \\
$1.6554$ & $0.98855$ & $0.98399$ & $0.0045639$ & $1.3322$ & $0.99909$ & $0.99907$ & $2.2195\times 10^{-5}$ & $4.0007$ & $0.99262$ & $0.99262$ & $8.3755\times 10^{-13}$ \\
$1.8880$ & $0.98546$ & $0.97736$ & $0.0080942$ & $1.4222$ & $0.99902$ & $0.99898$ & $3.7160\times 10^{-5}$ & $5.0026$ & $0.98855$ & $0.98855$ & $1.9938\times 10^{-11}$ \\
$2.1536$ & $0.98148$ & $0.96796$ & $0.013519$ & $1.4991$ & $0.99896$ & $0.99890$ & $5.5059\times 10^{-5}$ & $6.0068$ & $0.98315$ & $0.98315$ & $1.8158\times 10^{-10}$ \\
$2.3998$ & $0.97743$ & $0.95764$ & $0.019794$ & $1.9994$ & $0.99848$ & $0.99813$ & $0.00035315$ & $7.0025$ & $0.97659$ & $0.97659$ & $3.5968\times 10^{-11}$ \\
$2.6486$ & $0.97311$ & $0.94589$ & $0.027224$ & $2.9979$ & $0.99656$ & $0.99409$ & $0.0024724$ & $7.9977$ & $0.96880$ & $0.96880$ & $3.9513\times 10^{-11}$ \\
$2.9001$ & $0.96855$ & $0.93285$ & $0.035694$ & $3.9964$ & $0.99321$ & $0.98626$ & $0.0069478$ & $8.9978$ & $0.95970$ & $0.95970$ & $4.8851\times 10^{-11}$ \\
$3.1462$ & $0.96380$ & $0.91903$ & $0.044765$ & $4.9942$ & $0.98839$ & $0.97506$ & $0.013324$ & $9.9955$ & $0.95025$ & $0.95025$ & $2.4617\times 10^{-10}$ \\
$3.9863$ & $0.94677$ & $0.86705$ & $0.079724$ & $5.9969$ & $0.98263$ & $0.96163$ & $0.021001$ & $11.995$ & $0.92638$ & $0.92638$ & $5.5962\times 10^{-10}$ \\
$4.9805$ & $0.92505$ & $0.80044$ & $0.12461$ & $6.9907$ & $0.97564$ & $0.94631$ & $0.029328$ & $14.979$ & $0.88141$ & $0.88141$ & $1.3357\times 10^{-8}$ \\
$5.9799$ & $0.90237$ & $0.73310$ & $0.16927$ & $7.9833$ & $0.96708$ & $0.92912$ & $0.037961$ & $19.958$ & $0.80015$ & $0.80015$ & $8.8003\times 10^{-8}$ \\
$6.9707$ & $0.87959$ & $0.66886$ & $0.21073$ & $8.9786$ & $0.95803$ & $0.91132$ & $0.046714$ & $24.940$ & $0.72892$ & $0.72892$ & $2.0908\times 10^{-7}$ \\
$7.9615$ & $0.85695$ & $0.60858$ & $0.24836$ & $9.9744$ & $0.95001$ & $0.89452$ & $0.055496$ & $29.922$ & $0.66975$ & $0.66975$ & $3.6738\times 10^{-7}$ \\
$8.9547$ & $0.83428$ & $0.55248$ & $0.28180$ & $10.969$ & $0.93920$ & $0.87524$ & $0.063966$ & $34.905$ & $0.61216$ & $0.61216$ & $5.4720\times 10^{-7}$ \\
$9.9480$ & $0.81206$ & $0.50103$ & $0.31104$ & $11.969$ & $0.92950$ & $0.85714$ & $0.072361$ & $39.887$ & $0.55796$ & $0.55796$ & $7.5849\times 10^{-7}$ \\
$10.941$ & $0.79023$ & $0.45400$ & $0.33623$ & $14.948$ & $0.90062$ & $0.80468$ & $0.095936$ & -- & -- & -- & -- \\
$11.940$ & $0.76854$ & $0.41086$ & $0.35769$ & $19.924$ & $0.85925$ & $0.72804$ & $0.13121$ & -- & -- & -- & -- \\
$14.914$ & $0.70797$ & $0.30511$ & $0.40286$ & $24.900$ & $0.81887$ & $0.65793$ & $0.16094$ & -- & -- & -- & -- \\
$19.880$ & $0.61796$ & $0.18565$ & $0.43231$ & $29.878$ & $0.78306$ & $0.59659$ & $0.18647$ & -- & -- & -- & -- \\
$24.847$ & $0.54144$ & $0.11336$ & $0.42808$ & $34.858$ & $0.74921$ & $0.54124$ & $0.20797$ & -- & -- & -- & -- \\
$29.814$ & $0.47533$ & $0.069355$ & $0.40598$ & $39.841$ & $0.71994$ & $0.49315$ & $0.22679$ & -- & -- & -- & -- \\
$39.756$ & $0.36929$ & $0.026153$ & $0.34314$ & $44.826$ & $0.69814$ & $0.45343$ & $0.24471$ & -- & -- & -- & -- \\
$49.693$ & $0.29391$ & $0.010107$ & $0.28380$ & $49.807$ & $0.67931$ & $0.41835$ & $0.26096$ & -- & -- & -- & --   
\\  
\hline
\end{tabular}
}
\end{scriptsize}
\caption{Computation of zero coupon bond prices for the USD, EUR and JPY markets as of April 22${}^{nd}$, 2015. $T$ refers to the maturity of the bonds. RNA and BA stand for risk neutral and benchmark approach respectively.}\label{ZCBUSD201504}
\end{table}

\begin{table}[htp]
\centering
\begin{scriptsize}
\rotatebox{90}{
\begin{tabular}{rrrr|rrrr|rrrr}
\multicolumn{4}{c}{USD}&\multicolumn{4}{c}{EUR}&\multicolumn{4}{c}{JPY}\\
$T$	& RNA & BA & Difference&$T$	& RNA & BA & Difference&$T$	& RNA & BA & Difference\\
\hline
$0.25563$ & $0.99927$ & $0.99927$ & $1.2972\times 10^{-10}$ & $0.51118$ & $0.99972$ & $0.99972$ & $2.4469\times 10^{-10}$ & $0.51072$ & $0.99930$ & $0.99930$ & $1.7574\times 10^{-7}$ \\
$0.30270$ & $0.99909$ & $0.99909$ & $2.2813\times 10^{-10}$ & $0.58823$ & $0.99967$ & $0.99967$ & $3.2072\times 10^{-9}$ & $0.59164$ & $0.99917$ & $0.99917$ & $1.1215\times 10^{-7}$ \\
$0.39981$ & $0.99873$ & $0.99873$ & $3.3414\times 10^{-9}$ & $0.66611$ & $0.99961$ & $0.99961$ & $2.3930\times 10^{-8}$ & $0.67107$ & $0.99906$ & $0.99906$ & $2.8733\times 10^{-8}$ \\
$0.47462$ & $0.99839$ & $0.99839$ & $1.2277\times 10^{-7}$ & $0.74977$ & $0.99954$ & $0.99954$ & $1.3172\times 10^{-7}$ & $0.76137$ & $0.99894$ & $0.99894$ & $8.5841\times 10^{-9}$ \\
$0.55216$ & $0.99803$ & $0.99803$ & $9.3915\times 10^{-7}$ & $0.84147$ & $0.99946$ & $0.99946$ & $5.8497\times 10^{-7}$ & $0.84044$ & $0.99883$ & $0.99883$ & $6.2951\times 10^{-11}$ \\
$0.64661$ & $0.99752$ & $0.99752$ & $5.7385\times 10^{-6}$ & $0.91575$ & $0.99939$ & $0.99939$ & $1.5835\times 10^{-6}$ & $0.91989$ & $0.99871$ & $0.99871$ & $2.0474\times 10^{-11}$ \\
$0.72154$ & $0.99704$ & $0.99702$ & $1.7472\times 10^{-5}$ & $0.99921$ & $0.99930$ & $0.99930$ & $4.0873\times 10^{-6}$ & $1.0021$ & $0.99860$ & $0.99860$ & $7.9620\times 10^{-12}$ \\
$0.80203$ & $0.99653$ & $0.99648$ & $4.6130\times 10^{-5}$ & $1.0858$ & $0.99922$ & $0.99921$ & $9.4268\times 10^{-6}$ & $1.5090$ & $0.99785$ & $0.99785$ & $4.5244\times 10^{-11}$ \\
$1.0486$ & $0.99455$ & $0.99418$ & $0.00036928$ & $1.1661$ & $0.99914$ & $0.99912$ & $1.8388\times 10^{-5}$ & $2.0104$ & $0.99701$ & $0.99701$ & $9.6194\times 10^{-11}$ \\
$1.3146$ & $0.99187$ & $0.99036$ & $0.0015073$ & $1.2496$ & $0.99904$ & $0.99901$ & $3.3775\times 10^{-5}$ & $3.0069$ & $0.99483$ & $0.99483$ & $6.4644\times 10^{-11}$ \\
$1.5638$ & $0.98885$ & $0.98512$ & $0.0037304$ & $1.3324$ & $0.99894$ & $0.99888$ & $5.7406\times 10^{-5}$ & $4.0032$ & $0.99149$ & $0.99149$ & $2.1211\times 10^{-11}$ \\
$1.7964$ & $0.98558$ & $0.97855$ & $0.0070238$ & $1.4163$ & $0.99884$ & $0.99875$ & $9.2592\times 10^{-5}$ & $5.0020$ & $0.98657$ & $0.98657$ & $1.2929\times 10^{-11}$ \\
$2.0620$ & $0.98132$ & $0.96903$ & $0.012290$ & $1.5054$ & $0.99871$ & $0.99857$ & $0.00014544$ & $6.0004$ & $0.98003$ & $0.98003$ & $8.1002\times 10^{-13}$ \\
$2.3080$ & $0.97698$ & $0.95844$ & $0.018546$ & $1.9989$ & $0.99782$ & $0.99694$ & $0.00088618$ & $6.9983$ & $0.97171$ & $0.97171$ & $2.4473\times 10^{-11}$ \\
$2.5569$ & $0.97232$ & $0.94622$ & $0.026100$ & $3.0030$ & $0.99467$ & $0.98858$ & $0.0060964$ & $8.0040$ & $0.96233$ & $0.96233$ & $1.7680\times 10^{-10}$ \\
$2.8083$ & $0.96736$ & $0.93253$ & $0.034826$ & $3.9973$ & $0.98886$ & $0.97216$ & $0.016701$ & $8.9986$ & $0.95198$ & $0.95198$ & $2.6855\times 10^{-11}$ \\
$3.0543$ & $0.96217$ & $0.91792$ & $0.044254$ & $4.9901$ & $0.98026$ & $0.94887$ & $0.031388$ & $9.9929$ & $0.94066$ & $0.94066$ & $8.7440\times 10^{-10}$ \\
$3.9879$ & $0.94210$ & $0.85656$ & $0.085549$ & $5.9843$ & $0.96919$ & $0.92068$ & $0.048509$ & $11.986$ & $0.91316$ & $0.91316$ & $5.4188\times 10^{-9}$ \\
$4.9787$ & $0.91811$ & $0.78486$ & $0.13325$ & $6.9776$ & $0.95599$ & $0.88931$ & $0.066678$ & $14.977$ & $0.86306$ & $0.86306$ & $2.2262\times 10^{-8}$ \\
$5.9718$ & $0.89314$ & $0.71296$ & $0.18019$ & $7.9698$ & $0.94126$ & $0.85627$ & $0.084983$ & $19.955$ & $0.77348$ & $0.77348$ & $1.2978\times 10^{-7}$ \\
$6.9647$ & $0.86791$ & $0.64427$ & $0.22365$ & $8.9641$ & $0.92550$ & $0.82257$ & $0.10293$ & $24.934$ & $0.69601$ & $0.69601$ & $3.1878\times 10^{-7}$ \\
$7.9656$ & $0.84271$ & $0.57986$ & $0.26285$ & $9.9525$ & $0.90943$ & $0.78936$ & $0.12007$ & $29.915$ & $0.62979$ & $0.62979$ & $5.4640\times 10^{-7}$ \\
$8.9526$ & $0.81802$ & $0.52146$ & $0.29655$ & $10.943$ & $0.89252$ & $0.75623$ & $0.13629$ & $34.883$ & $0.56855$ & $0.56855$ & $1.0099\times 10^{-6}$ \\
$9.9422$ & $0.79335$ & $0.46797$ & $0.32538$ & $11.934$ & $0.87604$ & $0.72442$ & $0.15162$ & $39.860$ & $0.51170$ & $0.51170$ & $1.4134\times 10^{-6}$ \\
$10.935$ & $0.76994$ & $0.41988$ & $0.35006$ & $14.909$ & $0.82897$ & $0.63676$ & $0.19221$ & -- & -- & -- & -- \\
$11.927$ & $0.74631$ & $0.37613$ & $0.37018$ & $19.869$ & $0.76070$ & $0.51638$ & $0.24432$ & -- & -- & -- & -- \\
$14.907$ & $0.68031$ & $0.27023$ & $0.41008$ & $24.831$ & $0.70451$ & $0.42255$ & $0.28196$ & -- & -- & -- & -- \\
$19.871$ & $0.58288$ & $0.15553$ & $0.42735$ & $29.788$ & $0.65375$ & $0.34649$ & $0.30726$ & -- & -- & -- & -- \\
$24.834$ & $0.50311$ & $0.090162$ & $0.41295$ & $34.749$ & $0.60524$ & $0.28342$ & $0.32181$ & -- & -- & -- & -- \\
$29.800$ & $0.43411$ & $0.052238$ & $0.38188$ & $39.716$ & $0.56625$ & $0.23426$ & $0.33199$ & -- & -- & -- & -- \\
$39.721$ & $0.32779$ & $0.017799$ & $0.30999$ & $44.688$ & $0.53578$ & $0.19579$ & $0.33999$ & -- & -- & -- & -- \\
$49.657$ & $0.25451$ & $0.0062285$ & $0.24828$ & $49.662$ & $0.50802$ & $0.16398$ & $0.34404$ & -- & -- & -- & --   
\\  
\hline
\end{tabular}
}
\end{scriptsize}
\caption{Computation of zero coupon bond prices for the USD, EUR and JPY markets as of May 22${}^{nd}$, 2015. $T$ refers to the maturity of the bonds. RNA and BA stand for risk neutral and benchmark approach respectively.}\label{ZCBUSD201505}
\end{table}

\begin{table}[htp]
\centering
\begin{scriptsize}
\rotatebox{90}{
\begin{tabular}{rrrr|rrrr|rrrr}
\multicolumn{4}{c}{USD}&\multicolumn{4}{c}{EUR}&\multicolumn{4}{c}{JPY}\\
$T$	& RNA & BA & Difference&$T$	& RNA & BA & Difference&$T$	& RNA & BA & Difference\\
\hline
$0.25554$ & $0.99928$ & $0.99928$ & $1.0829\times 10^{-10}$ & $0.50840$ & $0.99976$ & $0.99976$ & $2.2031\times 10^{-10}$ & $0.50820$ & $0.99931$ & $0.99931$ & $1.6587\times 10^{-7}$ \\
$0.32460$ & $0.99903$ & $0.99903$ & $6.2311\times 10^{-10}$ & $0.58660$ & $0.99968$ & $0.99968$ & $3.1197\times 10^{-9}$ & $0.58865$ & $0.99920$ & $0.99920$ & $1.0811\times 10^{-7}$ \\
$0.39953$ & $0.99872$ & $0.99872$ & $3.4141\times 10^{-9}$ & $0.66657$ & $0.99961$ & $0.99961$ & $2.4997\times 10^{-8}$ & $0.67119$ & $0.99908$ & $0.99908$ & $2.6756\times 10^{-8}$ \\
$0.47725$ & $0.99838$ & $0.99838$ & $1.3855\times 10^{-7}$ & $0.74954$ & $0.99953$ & $0.99953$ & $1.3705\times 10^{-7}$ & $0.75075$ & $0.99897$ & $0.99897$ & $6.3637\times 10^{-9}$ \\
$0.57165$ & $0.99791$ & $0.99791$ & $1.4954\times 10^{-6}$ & $0.83597$ & $0.99945$ & $0.99945$ & $5.6901\times 10^{-7}$ & $0.83837$ & $0.99885$ & $0.99885$ & $1.4196\times 10^{-11}$ \\
$0.64663$ & $0.99747$ & $0.99747$ & $5.9978\times 10^{-6}$ & $0.91688$ & $0.99937$ & $0.99937$ & $1.7041\times 10^{-6}$ & $0.91767$ & $0.99874$ & $0.99874$ & $1.5913\times 10^{-12}$ \\
$0.72707$ & $0.99698$ & $0.99696$ & $1.9628\times 10^{-5}$ & $0.99976$ & $0.99931$ & $0.99931$ & $4.3878\times 10^{-6}$ & $1.0022$ & $0.99862$ & $0.99862$ & $7.8938\times 10^{-12}$ \\
$0.97378$ & $0.99511$ & $0.99489$ & $0.00022707$ & $1.0851$ & $0.99922$ & $0.99921$ & $1.0046\times 10^{-5}$ & $1.5147$ & $0.99786$ & $0.99786$ & $2.7632\times 10^{-10}$ \\
$1.2397$ & $0.99255$ & $0.99143$ & $0.0011145$ & $1.1662$ & $0.99911$ & $0.99909$ & $1.9817\times 10^{-5}$ & $2.0077$ & $0.99702$ & $0.99702$ & $4.8335\times 10^{-11}$ \\
$1.4890$ & $0.98962$ & $0.98659$ & $0.0030258$ & $1.2545$ & $0.99898$ & $0.99894$ & $3.7722\times 10^{-5}$ & $3.0041$ & $0.99476$ & $0.99476$ & $2.4972\times 10^{-11}$ \\
$1.7213$ & $0.98641$ & $0.98041$ & $0.0060010$ & $1.3331$ & $0.99887$ & $0.99880$ & $6.2478\times 10^{-5}$ & $4.0003$ & $0.99123$ & $0.99123$ & $6.3061\times 10^{-14}$ \\
$1.9870$ & $0.98220$ & $0.97128$ & $0.010919$ & $1.4162$ & $0.99872$ & $0.99862$ & $0.00010050$ & $5.0021$ & $0.98608$ & $0.98608$ & $1.9075\times 10^{-11}$ \\
$2.2331$ & $0.97788$ & $0.96100$ & $0.016881$ & $1.5073$ & $0.99856$ & $0.99840$ & $0.00015985$ & $6.0002$ & $0.97940$ & $0.97940$ & $4.3776\times 10^{-13}$ \\
$2.4819$ & $0.97320$ & $0.94904$ & $0.024159$ & $2.0046$ & $0.99737$ & $0.99638$ & $0.00098958$ & $6.9981$ & $0.97137$ & $0.97137$ & $2.8516\times 10^{-11}$ \\
$2.7332$ & $0.96819$ & $0.93556$ & $0.032629$ & $2.9990$ & $0.99278$ & $0.98607$ & $0.0067067$ & $8.0013$ & $0.96216$ & $0.96216$ & $1.9938\times 10^{-11}$ \\
$2.9790$ & $0.96292$ & $0.92110$ & $0.041819$ & $3.9921$ & $0.98453$ & $0.96612$ & $0.018405$ & $8.9960$ & $0.95177$ & $0.95177$ & $2.2599\times 10^{-10}$ \\
$3.2249$ & $0.95739$ & $0.90564$ & $0.051755$ & $4.9861$ & $0.97255$ & $0.93798$ & $0.034571$ & $9.9931$ & $0.94151$ & $0.94151$ & $8.1510\times 10^{-10}$ \\
$3.9844$ & $0.93952$ & $0.85396$ & $0.085560$ & $5.9785$ & $0.95766$ & $0.90445$ & $0.053210$ & $11.986$ & $0.91443$ & $0.91443$ & $5.0668\times 10^{-9}$ \\
$4.9774$ & $0.91356$ & $0.78121$ & $0.13235$ & $6.9693$ & $0.94047$ & $0.86769$ & $0.072777$ & $14.975$ & $0.86546$ & $0.86546$ & $2.5412\times 10^{-8}$ \\
$5.9699$ & $0.88650$ & $0.70875$ & $0.17775$ & $7.9648$ & $0.92159$ & $0.82921$ & $0.092371$ & $19.954$ & $0.77896$ & $0.77896$ & $1.2903\times 10^{-7}$ \\
$6.9621$ & $0.85913$ & $0.63974$ & $0.21939$ & $8.9487$ & $0.90216$ & $0.79108$ & $0.11109$ & $24.934$ & $0.70246$ & $0.70246$ & $3.0852\times 10^{-7}$ \\
$7.9596$ & $0.83194$ & $0.57542$ & $0.25652$ & $9.9375$ & $0.88237$ & $0.75344$ & $0.12893$ & $29.915$ & $0.63652$ & $0.63652$ & $5.2877\times 10^{-7}$ \\
$8.9460$ & $0.80541$ & $0.51711$ & $0.28831$ & $10.926$ & $0.86257$ & $0.71694$ & $0.14562$ & $34.887$ & $0.57692$ & $0.57692$ & $9.1025\times 10^{-7}$ \\
$9.9378$ & $0.77937$ & $0.46388$ & $0.31549$ & $11.915$ & $0.84301$ & $0.68188$ & $0.16113$ & $39.864$ & $0.52026$ & $0.52026$ & $1.2892\times 10^{-6}$ \\
$10.929$ & $0.75401$ & $0.41581$ & $0.33820$ & $14.881$ & $0.78758$ & $0.58665$ & $0.20093$ & -- & -- & -- & -- \\
$11.921$ & $0.72944$ & $0.37257$ & $0.35687$ & $19.832$ & $0.70937$ & $0.46013$ & $0.24924$ & -- & -- & -- & -- \\
$14.896$ & $0.66049$ & $0.26774$ & $0.39275$ & $24.786$ & $0.64764$ & $0.36572$ & $0.28192$ & -- & -- & -- & -- \\
$19.858$ & $0.56111$ & $0.15453$ & $0.40657$ & $29.744$ & $0.59288$ & $0.29143$ & $0.30145$ & -- & -- & -- & -- \\
$24.819$ & $0.47879$ & $0.089580$ & $0.38921$ & $34.694$ & $0.54186$ & $0.23191$ & $0.30994$ & -- & -- & -- & -- \\
$29.782$ & $0.40948$ & $0.052033$ & $0.35745$ & $39.652$ & $0.49796$ & $0.18552$ & $0.31244$ & -- & -- & -- & -- \\
$39.701$ & $0.30345$ & $0.017796$ & $0.28565$ & $44.617$ & $0.46386$ & $0.15040$ & $0.31346$ & -- & -- & -- & -- \\
$49.632$ & $0.23032$ & $0.0062287$ & $0.22409$ & $49.582$ & $0.43275$ & $0.12212$ & $0.31064$ & -- & -- & -- & --   
\\  
\hline
\end{tabular}
}
\end{scriptsize}
\caption{Computation of zero coupon bond prices for the USD, EUR and JPY markets as of June 22${}^{nd}$, 2015. $T$ refers to the maturity of the bonds. RNA and BA stand for risk neutral and benchmark approach respectively.}\label{ZCBUSD201506}
\end{table}

\begin{table}[h]
\centering
\begin{tabular}{lccl}
$T$	& Risk Neutral Approach & Benchmark approach & Difference\\
\hline
$1$ week  & $0.99995$ & $0.99995$ & $5.2625\times 10^{-14}$ \\
$2$ weeks  & $0.99989$ & $0.99989$ & $7.1054\times 10^{-14}$ \\
$3$ weeks & $0.99984$ & $0.99984$ & $3.4195\times 10^{-14}$ \\
$1$ month & $0.99977$ & $0.99977$ & $3.6859\times 10^{-14}$ \\
$2$ months & $0.99953$ & $0.99953$ & $4.8850\times 10^{-14}$ \\
$3$ months & $0.99930$ & $0.99930$ & $4.5419\times 10^{-13}$ \\
$6$ months & $0.99832$ & $0.99832$ & $2.7218\times 10^{-7}$ \\
$1$ year & $0.99502$ & $0.99474$ & $0.00027569$ \\
$1.5$ years & $0.98966$ & $0.98652$ & $0.0031411$ \\
$2$ years& $0.98224$ & $0.97103$ & $0.011205$ \\
$3$ years& $0.96295$ & $0.92028$ & $0.042670$ \\
$5$ years& $0.91380$ & $0.78027$ & $0.13353$ \\
$7$ years& $0.85910$ & $0.63795$ & $0.22115$ \\
$10$ years& $0.77870$ & $0.46128$ & $0.31742$ \\  
\hline
\end{tabular}
\vspace{0.3cm}
\caption{Computation of zero coupon bond prices for the USD market as of June 22${}^{nd}$, 2015, where risk neutral valuation is not possible. The associated price difference increases as the maturity becomes larger.}\label{ZCBUSD2015}
\end{table}
%

\clearpage
\bibliography{biblio}
\bibliographystyle{apa}

\end{document}